\documentclass[12pt,draftcls,onecolumn]{IEEEtran}
%%%%%%%%%%%%%%%%%%%%%%%%%%%%%%%%%%%%%%%%%%%5%%
%\usepackage{empheq,amssymb}
%\usepackage{braket}
%\usepackage{dsfont,mathrsfs}
%\usepackage{enumerate}
%%\usepackage{bm}
%\usepackage{graphicx}
%\usepackage{amsmath}
%%\usepackage{float}
%\usepackage{stfloats}
%%\usepackage{amsthm}
%\newtheorem{problem}    {Problem}
%\newtheorem{definition} {Definition}

%\newtheorem{theorem}    {Theorem}
%\newtheorem{proposition} {Proposition}
%\newtheorem{lemma}      {Lemma}
%\newtheorem{corollary} {Corollary}
%\newtheorem{remark}    {Remark}
%%\newcommand{\argmin}{\underset{x}{\operatorname{argmin}}
%% Notation
%\usepackage{color}

%%%%%%%%%%%%%%%%%%%%%%%%%%%%%%%%%%%%
\usepackage{graphicx}

\usepackage[draft]{prelim2e}

\usepackage[sort,compress]{cite}
%
%
% Math
\usepackage[cmex10]{mathtools}
\usepackage{dsfont,amssymb}
%
% Theorems
\usepackage[standard,amsmath,thmmarks] {ntheorem}

\newtheorem{assumption}     {Assumption}
%\usepackage{enumitem}

% Macros
\let\ALPHABET \mathcal

\let\VEC      \mathbf

\newcommand\IND{\mathds{1}}

\newcommand\DEFINED{\coloneqq}

% Fonts

\usepackage[T1]{fontenc}

\usepackage{xspace}
%\newcommand\Name{partial information sharing\xspace}
%\newcommand\NameLong{partially shared history information structure with finite local memory\xspace}
%\newcommand\NAME{PSH\xspace}

%%%%%%%%%%%%%%%%%%%%%%%%%%%%%%%%%%%Abhishek's Comands%%%%%%%%%%%%%%%%%%%%%%%%%%%%%%%%%%%%%%%%%%%%%%%%%%%%%%%%%%
%-------------------------------------------------------------------------------------------------------------
\newcommand{\ind}[1]{\IND_{\{#1\}}}

\newcommand{\beqq}[1]{\begin{eqnarray*} #1 \end{eqnarray*}}

\newcommand{\pr}[1]{\mathds{P}\left\{#1\right\}}

\newcommand{\payoff}[2]{\renewcommand{\arraystretch}{1}\begin{array}{c|c|c|}
\multicolumn{1}{r}{}
 &  \multicolumn{1}{c}{\text{0}}
 & \multicolumn{1}{c}{\text{1}} \\
\cline{2-3}
\text{0} & #1\\
\cline{2-3}
\text{1} & #2\\
\cline{2-3}
\end{array}}
%%%%%%%%%%%%%%%%%%%%%%%%%%%%%%%%%%%%%%%%%%%%%%%%%%%%%%%%%%%%%%%%%%%%%%%%%%%%%%%%%%%%%%%%%%%%%%%%%%%%%%%%%%%%%%%%%%

\begin{document}

\title {Nash Equilibria for Stochastic Games with Asymmetric Information-Part 1: Finite Games}

\author {Ashutosh Nayyar, Abhishek Gupta, C\'edric Langbort and Tamer Ba\c{s}ar
\thanks{The authors are with Coordinated Science Laboratory at the University of Illinois at Urbana-Champaign. Email:
        {\tt\small \{anayyar,gupta54,langbort,basar1\}@illinois.edu}}%
}

\maketitle
\begin{abstract}
A model of  stochastic games where multiple controllers jointly control the evolution of the state of a dynamic system but have access to different information about the state and action processes is considered. The asymmetry of information among the controllers makes it difficult to compute or characterize Nash equilibria. Using common information among the controllers, the game with asymmetric information is shown to be equivalent to another game with symmetric information. Further, under certain conditions, a Markov state is identified for the equivalent symmetric information game and its Markov perfect equilibria are characterized. This characterization provides a backward induction algorithm to find Nash equilibria of the original game with asymmetric information in pure or behavioral strategies. Each step of this algorithm involves finding Bayesian Nash equilibria of a one-stage Bayesian game. The class of Nash equilibria of the original game that can be characterized in this backward manner are named \emph{common information based Markov perfect equilibria}.
\end{abstract}
\begin{IEEEkeywords}
 Stochastic Games, Nash equilibrium, Markov Perfect Equilibrium, Backward Induction 
\end{IEEEkeywords}
\section{Introduction}
\label{sec:introduction}
 Stochastic games model situations where multiple players jointly control the evolution of the state of a stochastic dynamic system with each player trying to minimize its own costs. Stochastic games where all players have perfect state observation are well-studied \cite{Shapley:1953,Sobel1971, Tirole,Basarbook,Filarbook}. In such games, the symmetry of information among players implies that they  all share the same uncertainty about the future states and future payoffs. However, a number of games arising in communication systems, queuing systems, economics, and in models of adversarial interactions in  control and communication systems involve players with \emph{different}  information about the state and action processes. Due to the asymmetry of information, the players have different beliefs about the current state and different uncertainties about future states and payoffs. As a result, the analytical tools for finding Nash equilibria for  stochastic games with perfect state observation cannot be directly employed for games with asymmetric information.
\par
In the absence of a general framework for  stochastic games with asymmetric information, several special models have been studied in the literature. In particular, zero-sum differential games with linear dynamics and quadratic payoffs where the two players  have different observation processes were studied in \cite{Behn68}, \cite{Rhodes69}, \cite{Willman69}. A zero sum differential game where one player's observation at any time includes the other player's observation was considered in \cite{Hojota}. A zero-sum differential game where one player has a noisy observation of the state while the other controller has no observation of the state was considered in  \cite{MintzBasar}. Discrete-time non-zero sum LQG games with one step delayed sharing of observations were studied in \cite{Basaronestep}, \cite{Basarmulti}. A one-step delay observation and action sharing game was considered in \cite{Altman:2009}. A two-player finite game in which the players do not obtain each other's observations and  control actions
was considered in \cite{hespanha2001} and  a necessary and sufficient condition for Nash equilibrium in terms of two coupled dynamic programs was presented.
\par
 Obtaining equilibrium solutions for  stochastic  games  when players make independent noisy observations of the state and do not  share all of their information (or even when they have access to the same noisy observation as in \cite{Bjota}) has remained a challenging problem for general classes of games. Identifying classes of game structures which would lead to tractable solutions or feasible solution methods is therefore an important goal in that area. In this paper, we identify one such class of nonzero-sum    stochastic games, and obtain characterization of a class of Nash equilibrium strategies. 
%\par
%For finite state dynamic games with asymmetric information, a new notion of equilibrium called ``Applied Markov Equilibrium'' was defined in \cite{Fershtman}. This equilibrium concept is based on the concept of self-confirming equilibrium  which, in general, differs from Nash equilibrium \cite{FudenbergLevine}.
\par 
In  stochastic games with perfect state observation, a subclass of Nash equilibria - namely the Markov perfect equilibria- can be obtained by backward induction. The advantage of this technique is that instead of searching for equilibrium in the (large) space of strategies, we only need to find Nash equilibrium in a succession of static games of complete information. 
\par
Can a backward inductive decomposition be extended to games of asymmetric information? The general answer to this question is negative. However, we show that there is a class of asymmetric information games that are amenable to such a decomposition. The basic conceptual observation underlying our results is the following: the essential impediment to applying backward induction in asymmetric information games is the fact that a player's posterior beliefs about the system state and about other players' information may depend on the strategies used by the players in the past. If the nature of system dynamics and the information structure of the game ensures that the players' posterior beliefs are strategy independent, then a backward induction argument is feasible. We formalize this conceptual argument in this paper. 
\par
 We first use the common information among the controllers to show that the game with asymmetric information is equivalent to another game with symmetric information. Further, under the assumption of strategy independence of posterior beliefs, we identify a Markov state for the equivalent symmetric information game and characterize its Markov perfect equilibria using backward induction arguments. This characterization provides a backward induction algorithm to find Nash equilibria of the original game with asymmetric information. Each step of this algorithm involves finding Bayesian Nash equilibria of a one-stage Bayesian game. The class of Nash equilibria of the original game that can be characterized in this backward manner are named \emph{common information based Markov perfect equilibria}. For notational convenience, we consider games with only two controllers. Our results extend to games with $n>2$ controllers in a straightforward manner.
 \par
 Our work is conceptually similar to the work in \cite{Cole2001}. The authors in \cite{Cole2001} considered a model of  finite stochastic  game with discounted infinite-horizon cost function  where each player has a privately
observed state. Under the assumption that player $i$'s belief about other players'
states depends only the current state of player $i$ and does not depend
on player $i$'s strategy, \cite{Cole2001} presented a recursive algorithm to compute  Nash
equilibrium. Both our model and our main assumptions differ from those in \cite{Cole2001}.

%the standard technique for finding equilibrium is a backwards inductive approach that ``decomposes'' the multi-stage game into a several one stage games using backward inductive arguments. At each stage, the payoffs/costs of the future are accounted for by means of  value functions which describe the expected future costs as a function of the next state. The equilibrium strategies found in this manner are Markov strategies (that is, they only use current state as argument of the control law) and form a sub-game perfect equilibrium (that is, they are an equilibrium of each sub-game of the original game).  Such decompositions, in general, cannot be obtained if the controllers have asymmetric information. We want to present conditions under which an asymmetric information game admits a suitable backward inductive decomposition and consider examples (like one step delay LQG, CDC 2012 paper) where explicit solutions can be found. 

\subsection{Notation}
Random variables are denoted by upper case letters; their realizations
by the corresponding lower case letters. Random vectors are denoted by upper case bold letters and their 
realizations by lower case bold letters. Unless otherwise stated, the state, action and observations are assumed to be vector valued. Subscripts are used as time index. $\VEC X_{a:b}$ is a short hand for
the vector $(\VEC X_a, \VEC X_{a+1}, \dots, \VEC X_b)$, if $a>b$, then $\VEC X_{a:b}$ is empty.   $\mathds{P}(\cdot)$ is the
probability of an event, $\mathds{E}(\cdot)$ is the expectation of a random
variable. For a collection of functions $\boldsymbol{g}$, 
$\mathds{P}^{\boldsymbol{g}}(\cdot)$ and $\mathds{E}^{\boldsymbol{g}}(\cdot)$ 
denote that the probability/expectation depends on the choice
of functions in $\boldsymbol{g}$. Similarly, for a probability distribution $\pi$,  $\mathds{E}^{\pi}(\cdot)$ 
denotes that the expectation is with respect to the distribution $\pi$. The notation $\mathds{1}_{\{a=b\}}$ denotes $1$ if the equality in the subscript is true and $0$ otherwise. For a finite set $\mathcal{A}$,  $\Delta(\mathcal{A})$ is the set of all probability mass functions over $\mathcal{A}$. 
For two random variables (or random vectors) $X$ and $Y$, $\mathds{P}(X=x|Y)$ denotes the conditional
probability of the event $\{X=x\}$ given $Y$. This is a random variable whose realization depends on the realization of $Y$.

When dealing with collections of random variables, we will at times treat the collection as a random vector of appropriate dimension. At other times, it will be convenient to think of different collections of random variables as sets on which one can define the usual set operations. For example consider random vectors $\VEC A = (A_1,A_2,A_3)$ and $\tilde{\VEC A} = (A_1,A_2)$. Then, treating $\VEC A$ and $\tilde{\VEC A}$ as sets would allow us to write $\VEC A \setminus \tilde{\VEC A} = \{A_3\}$.
\subsection{Organization}
The rest of this paper is organized as follows. We present our model of a  stochastic game with asymmetric information in Section~\ref{sec:model}. We present several special cases of our model in Section~\ref{sec:examples}. We prove our main results in Section~\ref{sec:virtual}. We extend our arguments to consider behavioral strategies in Section~\ref{sec:behave}. We examine the importance of our assumptions in Section~\ref{sec:comments}. Finally, we conclude in Section~\ref{sec:conclusion}.
%\section{Problem Formulation} \label{sec:PF}

% For the ease of exposition, we describe the model in discrete time and assume
% that the system runs for a finite horizon and all variables are finite. We refer
% to this model as the \emph{basic model}. We show
% how to extend the model to infinite horizon in Section~\ref{sec:infinite}.  With
% appropriate technical assumptions, the model and the results also apply to
% continuous time systems and continuous valued random variables.

\section{The Basic Game \textbf{G1}}
\label{sec:model}
\subsection{The Primitive Random Variables and the Dynamic System} We consider a collection of finitely-valued, mutually independent random vectors $(\VEC X_1, \VEC W^0_1,$ $\VEC W^0_2,\ldots, \VEC W^0_{T-1},$ $\VEC W^1_1, \VEC W^1_2,\ldots, \VEC W^1_T,$ $\VEC W^2_1, \VEC W^2_2,\ldots, \VEC W^2_T)$ with known  probability mass functions. These random variables are called the primitive random variables. 

%(Assume all random variables are finitely valued for now.)
%\subsection{The Dynamic System} 
We consider a discrete-time dynamic system with $2$ controllers. For any time $t$, $t=1,2,\ldots,T$, $\VEC X_t \in \ALPHABET X_t$ denotes the state
of the system at time $t$, $\VEC U^i_t \in \ALPHABET U^i_t$ denotes the control action
of controller~$i$, $i=1,2$ at time $t$. The state of the system evolves according to
\begin{equation}
  \label{eq:state}
  \VEC X_{t+1} = f_t(\VEC X_t, \VEC U^1_t, \VEC U^2_t, \VEC W^0_t).
\end{equation}
There are two observation processes: $\VEC Y^1_t \in \ALPHABET Y^1_t, \VEC Y^2_t \in \ALPHABET Y^2_t$, where 
    \begin{equation}  \label{eq:observation}
      \VEC Y^i_t = h^i_t(\VEC X_t, \VEC W^i_t), \mbox{~~}i=1,2.
    \end{equation}
    
\subsection{The Data Available to Controllers}  At any time  $t$, the vector $\VEC I^i_t$ denotes the  total data available to controller $i$ at time $t$. The vector $\VEC I^i_t$ is a subset of the collection of \emph{potential observables} of the system at time $t$, that is, $\VEC I^i_t \subset \{\VEC Y^1_{1:t}, \VEC Y^2_{1:t}, \VEC U^1_{1:t-1}, \VEC U^2_{1:t-1}\}$. We  divide the total data into two components: \emph{private information $\VEC P^i_t$ and common information $\VEC C_t$.} Thus, $\VEC I^i_t = (\VEC P^i_t, \VEC C_t)$. As their names suggest, the common information is available to both controllers whereas private information is available only to one controller. Clearly, this separation of information into private and common part can always be  done. In some cases, common or private information may even be empty.  For example, if $\VEC I^1_t = \VEC I^2_t = \{\VEC Y^1_{1:t}, \VEC Y^2_{1:t}, \VEC U^1_{1:t-1}, \VEC U^2_{1:t-1}\}$, that is if both controllers have access to all observations and actions, then $\VEC C_t = \VEC I^1_t= \VEC I^2_t$ and $\VEC P^1_t=\VEC P^2_t= \emptyset$. On the other hand, if $\VEC I^i_t=\VEC Y^i_{1:t}$, for $i=1,2$, then $\VEC C_t = \emptyset$ and $\VEC P^i_t = \VEC I^i_t$. Games where are all information is common to both controllers are referred to as symmetric information games. 

 We denote the set of possible realizations of $\VEC P^i_t$ as $\mathcal{P}^i_t$ and the set of possible realizations of $\VEC C_t$ as $\mathcal{C}_t$. Controller~$i$ chooses action $\VEC U^i_t$ as a function of the total data $(\VEC P^i_t, \VEC C_t)$ available to it.  Specifically, for each controller~$i$,
\begin{equation}
  \label{eq:control}
  \VEC U^i_t = g^i_t(\VEC P^i_t, \VEC C_t),
\end{equation}
where $g^i_t$, referred to as the control law at time $t$, can be any function of private and common information.
The collection $\VEC g^i = (g^i_1, \dots, g^i_T)$ is called the
\emph{control strategy} of controller~$i$ and the pair of control strategies for the two controllers $(\VEC g^1, \VEC g^2)$ is called a \emph{strategy profile}. 
For a given strategy profile, the overall cost of controller $i$ is given as 
\begin{equation}\label{eq:costofi}
J^i(\VEC g^1,  \VEC g^2) := \mathds{E}\Big[\sum_{t=1}^T c^i(\VEC X_t,\VEC U^1_t, \VEC U^2_t)\Big],
\end{equation}
where the expectation on the right hand side of \eqref{eq:costofi} is with respect to the probability measure  on the state and action processes induced by the choice of strategies $\VEC g^1, \VEC g^2$ on the left hand side of \eqref{eq:costofi}.
A strategy profile $(\VEC g^1,\VEC g^2)$ is called a Nash equilibrium if no controller can lower its total expected cost by unilaterally changing its strategy, that is,
\begin{equation}\label{eq:nashcond1} 
J^1(\VEC g^1,  \VEC g^2) \leq  J^1( \mathbf{\tilde{g}^1}, \VEC g^2), \quad \text{and}\quad J^2(\VEC g^1,  \VEC g^2) \leq  J^2(\VEC g^1, \mathbf{\tilde{g}^2}),
\end{equation}
%\begin{equation}\label{eq:nashcond1.1} 
%J^2(\VEC g^1,  \VEC g^2) \leq  J^2(\VEC g^1, \mathbf{\tilde{g}^2}),
%\end{equation}
for all strategies $\mathbf{\tilde{g}^1}, \mathbf{\tilde{g}^2}$. We refer to the above game as game \textbf{G1}.
\begin{Remark}
The system dynamics and the observation model (that is, the functions $f_t, h^1_t,h^2_t$ in \eqref{eq:state} and \eqref{eq:observation}), the statistics of the primitive random variables, the information structure of the game and the cost functions are assumed to be common knowledge among the controllers.
\end{Remark}
\subsection{Evolution of Common and Private Information}\label{sec:assumption}

%We make the following assumptions on the information structure of the game:
\begin{assumption}\label{assm:infoevolution}
We assume that the common and private information evolve over time as follows:
\begin{enumerate}
\item The common information $\VEC C_t$ is increasing with time, that is, $\VEC C_t \subset \VEC C_{t+1}$ for all $t$. 
Let $\VEC Z_{t+1} = \VEC C_{t+1}\setminus \VEC C_t$ be the increment in common information from time $t$ to $t+1$. Thus, $\VEC C_{t+1} = \{\VEC C_t, \VEC Z_{t+1}\}$. Further, 
\begin{equation}
\VEC Z_{t+1} = \zeta_{t+1}(\VEC P^1_t, \VEC P^2_t, \VEC U^1_t, \VEC U^2_t, \VEC Y^1_{t+1}, \VEC Y^2_{t+1}),\label{eq:commoninfo}
\end{equation}
where $\zeta_{t+1}$ is a fixed transformation.

\item The private information evolves according to the equation
\begin{equation}
\VEC P^i_{t+1} = \xi^i_{t+1}(\VEC P^i_t, \VEC U^i_t,  \VEC Y^i_{t+1}) \label{eq:privateinfo}
\end{equation}
where $\xi^i_{t+1}, i=1,2,$ are fixed transformations.
%We denote the set of possible realizations of $\VEC P^i_t$ as $\mathcal{P}^i_t$. %Later in the paper, we will present several instances where our assumptions on the evolution of common and private information are true.
\end{enumerate}
\end{assumption}

Equation \eqref{eq:commoninfo} states that the increment in common information is a function of the ``new'' variables generated between $t$ and $t+1$, that is, the actions taken at $t$ and the observations made at $t+1$, and the ``old'' variables that were part of private information at time $t$. Equation \eqref{eq:privateinfo} implies that the evolution of private information at the two controllers is influenced by different observations and actions.

\subsection{Common Information Based Conditional Beliefs}
A key concept in our analysis is the belief about the state and the private informations conditioned on the common information of both controllers. %Since this belief is  based on the common information, it is common knowledge among the controllers.
 Formally,
at any time $t$, given the control laws from time $1$ to $t-1$, we define the common information based conditional belief as follows:
\begin{equation}
\Pi_t(\VEC x_t, \VEC p^1_t, \VEC p^2_t) := \mathds{P}^{g^1_{1:t-1},g^2_{1:t-1}}(\VEC X_t =\VEC x_t, \VEC P^1_t = \VEC p^1_t, \VEC P^2_t = \VEC p^2_t|\VEC C_t) ~~~~ \text{for all~} \VEC x_t, \VEC p^1_t, \VEC p^2_t, \label{eq:beliefeq1}
\end{equation}
where we use the superscript $g^1_{1:t-1},g^2_{1:t-1}$ in the RHS of \eqref{eq:beliefeq1} to emphasize that the conditional belief depends on the past control laws. Note that $\Pi_t(\cdot,\cdot,\cdot)$ is a $|\mathcal{X}_t \times \mathcal{P}^1_t \times \mathcal{P}^2_t|$-dimensional random vector whose realization depends on the realization of $\VEC C_t$. A realization of $\Pi_t$ is denoted by $\pi_t$.

Given control laws $g^1_t,g^2_t$, we define the following partial functions:
\[\Gamma^1_t = g^1_t(\cdot, \VEC C_t) \mbox{~~~} \Gamma^2_t = g^2_t(\cdot, \VEC C_t) \]
These partial functions are functions from the private information of a controller to its control action. These are \emph{random functions} whose realizations depend on the realization of the random vector $\VEC C_t$.
The following lemma describes the evolution of the common information based conditional belief using these partial functions.

%The following lemma describes the evolution of the beliefs $Pi_t$.
\begin{lemma}\label{lemma:evolution}
Consider any choice of control laws $g^1_{1:t},g^2_{1:t}$. Let $\pi_t$ be the realization of the common information based conditional belief at time $t$, let $\VEC c_t$ be the realization of the common information at time $t$, let $\gamma^i_t = g^i_t(\cdot,\VEC c_t)$, $i=1,2$, be the corresponding realizations of the partial functions at time $t$, and $\VEC z_{t+1}$ be the realization of the increment in common information (see  Assumption \ref{assm:infoevolution}). Then, the realization of the conditional belief at time $t+1$ is given as
\begin{equation}
\pi_{t+1} = F_t(\pi_t,\gamma^1_t,\gamma^2_t,\VEC z_{t+1}), \label{eq:evolution}
\end{equation}
where $F_t$ is a fixed transformation that does not depend on the control strategies.
\end{lemma}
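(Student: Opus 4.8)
The plan is to compute $\pi_{t+1}$ by a standard Bayesian update, showing at each stage that the only random objects entering the formula are $\pi_t$, $\gamma^1_t$, $\gamma^2_t$, and $\VEC z_{t+1}$, together with fixed primitives ($f_t$, $h^1_{t+1}$, $h^2_{t+1}$, $\xi^i_{t+1}$, $\zeta_{t+1}$, and the known pmf's of $\VEC W^0_t, \VEC W^1_{t+1}, \VEC W^2_{t+1}$). First I would write $\pi_{t+1}(\VEC x_{t+1}, \VEC p^1_{t+1}, \VEC p^2_{t+1}) = \PR^{g^1_{1:t}, g^2_{1:t}}(\VEC X_{t+1} = \VEC x_{t+1}, \VEC P^1_{t+1} = \VEC p^1_{t+1}, \VEC P^2_{t+1} = \VEC p^2_{t+1} \mid \VEC C_{t+1})$ and, using $\VEC C_{t+1} = (\VEC C_t, \VEC Z_{t+1})$ with $\VEC C_t = \VEC c_t$, $\VEC Z_{t+1} = \VEC z_{t+1}$, rewrite it via Bayes' rule as a ratio whose numerator is
\[
\sum \PR^{g^1_{1:t}, g^2_{1:t}}\bigl(\VEC X_{t+1} = \VEC x_{t+1}, \VEC P^{1:2}_{t+1} = \VEC p^{1:2}_{t+1}, \VEC X_t = \VEC x_t, \VEC P^{1:2}_t = \VEC p^{1:2}_t, \VEC Z_{t+1} = \VEC z_{t+1} \mid \VEC c_t\bigr),
\]
the sum being over $\VEC x_t, \VEC p^1_t, \VEC p^2_t$, and whose denominator is the same expression with the constraint on $(\VEC X_{t+1}, \VEC P^{1:2}_{t+1})$ dropped.

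Next I would expand each summand as a product of conditional factors in the natural temporal order: (i) $\PR(\VEC X_t = \VEC x_t, \VEC P^1_t = \VEC p^1_t, \VEC P^2_t = \VEC p^2_t \mid \VEC c_t) = \pi_t(\VEC x_t, \VEC p^1_t, \VEC p^2_t)$, which is exactly the time-$t$ belief; (ii) the actions $\VEC U^i_t = g^i_t(\VEC p^i_t, \VEC c_t) = \gamma^i_t(\VEC p^i_t)$ are deterministic given this conditioning, contributing indicator factors $\ind{\VEC u^i_t = \gamma^i_t(\VEC p^i_t)}$; (iii) given $(\VEC x_t, \VEC u^1_t, \VEC u^2_t)$, the next state satisfies $\VEC X_{t+1} = f_t(\VEC x_t, \VEC u^1_t, \VEC u^2_t, \VEC W^0_t)$ with $\VEC W^0_t$ independent of everything conditioned on so far, so summing over $\VEC w^0_t$ against its known pmf gives the state transition kernel; (iv) similarly $\VEC Y^i_{t+1} = h^i_{t+1}(\VEC X_{t+1}, \VEC W^i_{t+1})$ with $\VEC W^i_{t+1}$ independent, giving observation kernels; (v) finally $\VEC P^i_{t+1} = \xi^i_{t+1}(\VEC p^i_t, \VEC u^i_t, \VEC Y^i_{t+1})$ and $\VEC Z_{t+1} = \zeta_{t+1}(\VEC p^1_t, \VEC p^2_t, \VEC u^1_t, \VEC u^2_t, \VEC Y^1_{t+1}, \VEC Y^2_{t+1})$ contribute further indicator factors tying the new private information and the common-information increment to the already-summed variables. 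The key structural point, which is precisely what Assumption~\ref{assm:infoevolution} buys us, is that $\VEC Z_{t+1}$ and $\VEC P^i_{t+1}$ are deterministic functions of quantities that have all been introduced by step (iv); nothing new and strategy-dependent sneaks in. Assembling the product, every factor is either $\pi_t$, an indicator built from $\gamma^1_t, \gamma^2_t$ and the fixed maps $\xi^i_{t+1}, \zeta_{t+1}$, or a fixed kernel built from $f_t, h^i_{t+1}$ and the known noise pmf's; after summing over $\VEC x_t, \VEC p^1_t, \VEC p^2_t, \VEC w^0_t, \VEC w^1_{t+1}, \VEC w^2_{t+1}$ and dividing by the denominator, we obtain $\pi_{t+1}$ as a fixed function $F_t(\pi_t, \gamma^1_t, \gamma^2_t, \VEC z_{t+1})$, which is the claim. (On the measure-zero event where the denominator vanishes, i.e.\ $\VEC z_{t+1}$ is inconsistent with $(\pi_t,\gamma^1_t,\gamma^2_t)$, define $F_t$ arbitrarily, e.g.\ as the uniform pmf.)

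The main obstacle I anticipate is purely bookkeeping: correctly justifying the conditional-independence factorization at each step — in particular arguing that $\VEC W^0_t$ is independent of $(\VEC X_t, \VEC P^1_t, \VEC P^2_t, \VEC C_t)$ and that each $\VEC W^i_{t+1}$ is independent of everything generated up to and including $\VEC X_{t+1}$ — which follows from the mutual independence of the primitive random variables and the fact that all of $\VEC X_t, \VEC P^i_t, \VEC C_t, \VEC U^i_t$ are functions of primitives indexed by times $\le t$ (state noise) or $\le t$ (observation noise), disjoint from the indices $0$-at-$t$ and $i$-at-$t{+}1$ being integrated out. One must also be careful that the conditioning event $\{\VEC C_t = \VEC c_t\}$ has positive probability, or equivalently restrict attention to realizations in the support of $\Pi_t$; this is implicit in speaking of "the realization $\pi_t$." Everything else is a routine, if notation-heavy, marginalization. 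I would present the computation compactly, perhaps defining the fixed kernels $Q_t^X(\VEC x_{t+1}\mid \VEC x_t,\VEC u^1_t,\VEC u^2_t) \DEFINED \sum_{\VEC w^0_t} \PR(\VEC W^0_t = \VEC w^0_t)\ind{\VEC x_{t+1} = f_t(\VEC x_t,\VEC u^1_t,\VEC u^2_t,\VEC w^0_t)}$ and analogous $Q^{Y^i}_{t+1}$, and then writing $F_t$ as the resulting normalized sum.
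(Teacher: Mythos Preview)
Your proposal is correct and follows essentially the same approach as the paper: write $\pi_{t+1}$ via Bayes' rule as a ratio of joint probabilities conditioned on $\VEC c_t$, expand the joint using the model structure into a product of $\pi_t$, indicator factors built from $\gamma^1_t,\gamma^2_t,\xi^i_{t+1},\zeta_{t+1}$, and fixed transition/observation kernels, and observe that no strategy-dependent term survives. The paper's version is slightly more compact (it sums over $\VEC y^i_{t+1},\VEC u^i_t$ and uses the kernels $\PR(\VEC x_{t+1}\mid\VEC x_t,\VEC u^1_t,\VEC u^2_t)$ and $\PR(\VEC y^1_{t+1},\VEC y^2_{t+1}\mid\VEC x_{t+1})$ directly rather than unpacking the noise variables), and it omits the measure-zero and conditional-independence justifications that you spell out, but the argument is the same.
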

\begin{proof} See Appendix \ref{sec:update_func}.
\end{proof}

Lemma \ref{lemma:evolution} states that the evolution of the conditional belief $\Pi_t$ is governed by the partial functions of control laws at time $t$. This lemma relies on Assumption \ref{assm:infoevolution}  made earlier about the evolution of common and private information. We now introduce the following critical assumption that eliminates the dependence of $\Pi_t$ on the control laws.

%\subsection{Strategy Independence of Beliefs}
%So far we have not imposed any assumptions on the system dynamics  or specified the exact nature of common and private information. 

\begin{assumption}[Strategy Independence of Beliefs] \label{assm:separation}
 Consider any time $t$, any choice of control laws $g^1_{1:t-1},g^2_{1:t-1}$, and any realization of common information $\VEC c_t$ that has a non-zero probability  under $g^1_{1:t-1},g^2_{1:t-1}$. Consider any other choice of control laws $\tilde{g}^1_{1:t-1},\tilde{g}^2_{1:t-1}$ which also gives a non-zero probability  to $\VEC c_t$. Then, we assume that
 \[\mathds{P}^{g^1_{1:t-1},g^2_{1:t-1}}(\VEC X_t =\VEC x_t, \VEC P^1_t = \VEC p^1_t, \VEC P^2_t = \VEC p^2_t|\VEC c_t) = \mathds{P}^{\tilde{g}^1_{1:t-1},\tilde{g}^2_{1:t-1}}(\VEC X_t =\VEC x_t, \VEC P^1_t = \VEC p^1_t, \VEC P^2_t = \VEC p^2_t|\VEC c_t),\]
 for all $\VEC x_t, \VEC p^1_t, \VEC p^2_t$.
 \par
 Equivalently,  the evolution of the common information based conditional belief described in Lemma \ref{lemma:evolution} depends only on the increment in common information, that is,  \eqref{eq:evolution} can be written as
\begin{equation}
\pi_{t+1} = F_t(\pi_t,\VEC z_{t+1}), \label{eq:evolution1}
\end{equation}
where $F_t$ is a fixed transformation that does not depend on the control strategies. 
  \end{assumption}
\par
\begin{Remark}
 Assumption \ref{assm:separation} is somewhat related to the notion of one-way separation in stochastic control, that is, the estimation (of the state in standard stochastic control and of the state and private information in Assumption \ref{assm:separation}) is independent of the control strategy.
\end{Remark}
%\par
%We will describe in Section \ref{sec:examples} instances of game \textbf{G1} where the nature of the dynamic system and the private and common information implies that Assumptions \ref{assm:infoevolution} and \ref{assm:separation} hold. 

%In the next section, we will analyze game \textbf{G1} under the assumption A1. In Sections 3 ..., we will consider specific models where Assumption 1 holds. We will use the analysis of the next section to find Nash equilibria for these specific models.
%%%%%%%%%%%%%%%%%%%%%%%%%%%%%%%%%%%%%%%%%%%%%%%%%%%%%%%%%%%%%%%%%%%%%%%%%%%%%%%%%%%%%%%%%%%%%%%%%%%%%%%%%%%%%%%%%%%
\section{Games satisfying Assumptions \ref{assm:infoevolution} and \ref{assm:separation}} \label{sec:examples}
Before proceeding with further analysis, we first describe some instances of \textbf{G1} where the nature of the dynamic system and the private and common information implies that Assumptions \ref{assm:infoevolution} and \ref{assm:separation} hold.

\subsection{One-Step Delayed Information Sharing Pattern}\label{sec:onestepmodel}
Consider the instance of \textbf{G1} where the common information at any time $t$ is given as $\VEC C_t = \{\VEC Y^1_{1:t-1}, \VEC Y^2_{1:t-1},\VEC U^1_{1:t-1},\VEC U^2_{1:t-1}\}$ and the private information is given as $\VEC P^i_t = \VEC Y^i_t$. Thus, $\VEC Z_{t+1} := \VEC C_{t+1} \setminus \VEC C_t = \{\VEC Y^1_t, \VEC Y^2_t, \VEC U^1_t, \VEC U^2_t\}$. This information structure can be interpreted as the case where all observations and actions are shared among controllers with one step delay. %It is straightforward to verify that this structure of common and private information satisfies Assumption~\ref{assm:infoevolution}. The following lemma shows that the common information based belief satisfies Assumption~\ref{assm:separation} in this case.
\begin{lemma}\label{lemma:onestep}
The game with one-step delayed sharing information pattern described above satisfies Assumptions~\ref{assm:infoevolution} and \ref{assm:separation}.
%In the one-step delayed sharing information pattern described above, given a realization of the common information $\VEC c_t$, the common information based belief $\pi_t$ does not depend on the choice of control laws. In particular, the common information based belief satisfies an update equation $\pi_{t+1} = F_t(\pi_t,\VEC z_{t+1})$ where $F_t$ does not depend on the control laws.
\end{lemma}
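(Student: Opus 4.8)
The plan is to verify Assumptions~\ref{assm:infoevolution} and~\ref{assm:separation} in turn for this information structure. Assumption~\ref{assm:infoevolution} is immediate from the definitions. Since $\VEC C_t = \{\VEC Y^1_{1:t-1}, \VEC Y^2_{1:t-1},\VEC U^1_{1:t-1},\VEC U^2_{1:t-1}\}$ we have $\VEC C_t \subset \VEC C_{t+1}$, and the increment is $\VEC Z_{t+1} = \{\VEC Y^1_t, \VEC Y^2_t, \VEC U^1_t, \VEC U^2_t\}$. Because $\VEC P^i_t = \VEC Y^i_t$, this increment equals $(\VEC P^1_t, \VEC P^2_t, \VEC U^1_t, \VEC U^2_t)$, so \eqref{eq:commoninfo} holds with $\zeta_{t+1}$ the map that returns its first four arguments (it does not even use the time-$(t+1)$ observations). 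Likewise $\VEC P^i_{t+1} = \VEC Y^i_{t+1}$, so \eqref{eq:privateinfo} holds with $\xi^i_{t+1}$ the projection onto its last argument.

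For Assumption~\ref{assm:separation}, since Assumption~\ref{assm:infoevolution} has just been established, Lemma~\ref{lemma:evolution} applies and yields an update of the form $\pi_{t+1} = F_t(\pi_t,\gamma^1_t,\gamma^2_t,\VEC z_{t+1})$; it remains to show that here $F_t$ does not actually depend on the partial functions $\gamma^1_t,\gamma^2_t$, i.e. that \eqref{eq:evolution1} holds. The route I would take is to re-derive the one-step update directly, in the spirit of Appendix~\ref{sec:update_func}. Writing $\VEC C_{t+1} = \{\VEC C_t, \VEC Y^1_t, \VEC Y^2_t, \VEC U^1_t, \VEC U^2_t\}$, I would compute $\mathds{P}(\VEC X_{t+1} = \VEC x_{t+1}, \VEC Y^1_{t+1} = \VEC y^1_{t+1}, \VEC Y^2_{t+1} = \VEC y^2_{t+1}|\VEC C_{t+1})$ in three steps: (i) because $\VEC P^i_t = \VEC Y^i_t$, the belief $\pi_t$ is a distribution on $(\VEC X_t, \VEC Y^1_t, \VEC Y^2_t)$, so conditioning it on the observed values $\VEC y^1_t, \VEC y^2_t$ of the private-information coordinates gives $\mathds{P}(\VEC X_t = \VEC x_t|\VEC C_{t+1})$, and further conditioning on $\VEC U^i_t = \VEC u^i_t$ adds nothing, since for a realization $\VEC c_{t+1}$ of positive probability one necessarily has $\VEC u^i_t = g^i_t(\VEC y^i_t,\VEC c_t)$, a deterministic function of quantities already in the conditioning; (ii) propagate through $\VEC X_{t+1} = f_t(\VEC X_t, \VEC u^1_t, \VEC u^2_t, \VEC W^0_t)$, where the actions are read off from $\VEC z_{t+1}$ and $\VEC W^0_t$ is independent of $(\VEC X_t,\VEC C_{t+1})$ by mutual independence of the primitive variables; (iii) append the new observations via $\VEC Y^i_{t+1} = h^i_{t+1}(\VEC X_{t+1}, \VEC W^i_{t+1})$, using that $\VEC W^1_{t+1}, \VEC W^2_{t+1}$ are independent of each other and of $(\VEC X_{t+1},\VEC C_{t+1})$. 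Each operation is a fixed map of $\pi_t$ and $\VEC z_{t+1}$ only, which gives $\pi_{t+1} = F_t(\pi_t,\VEC z_{t+1})$; unrolling from the base case $t=1$ (where $\VEC C_1 = \emptyset$ and $\Pi_1$ is the known joint law of $(\VEC X_1,\VEC Y^1_1,\VEC Y^2_1)$) shows $\Pi_t$ depends on the strategies only through the realized common information, which is exactly Assumption~\ref{assm:separation}.

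The manipulations in (i)--(iii) are routine applications of Bayes' rule and of the independence of the primitive noises, and I would not belabor them. The one point I would flag as the crux is step (i): observing the time-$t$ actions, on top of the time-$t$ observations, conveys no additional information about $\VEC X_t$ here. This is precisely what distinguishes the one-step delayed sharing pattern from a general instance of \textbf{G1}. In general the increment $\VEC z_{t+1}$ may contain the actions $\VEC U^i_t = \gamma^i_t(\VEC P^i_t)$ but not the private informations $\VEC P^i_t$, so the Bayesian update must invert the partial functions $\gamma^i_t$ to extract what the observed actions say about $\VEC P^i_t$, and strategy dependence enters through this inversion. Under one-step delay, however, $\VEC P^i_t = \VEC Y^i_t$ itself becomes common at the next stage, so $\VEC z_{t+1}$ reveals $(\VEC P^1_t,\VEC P^2_t)$ and $(\VEC U^1_t,\VEC U^2_t)$ outright and no inversion is ever needed — which is the mechanism that makes $F_t$ strategy-independent.
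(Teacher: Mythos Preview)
Your proposal is correct and follows essentially the same route as the paper: both verify Assumption~\ref{assm:infoevolution} directly from the definitions, and for Assumption~\ref{assm:separation} both compute $\pi_{t+1}$ by (a) conditioning $\pi_t$ on the now-revealed $(\VEC y^1_t,\VEC y^2_t)$ to obtain $\mathds{P}(\VEC X_t=\VEC x_t\mid \VEC c_{t+1})=\pi_t(\VEC x_t,\VEC y^1_t,\VEC y^2_t)/\sum_{\VEC x'_t}\pi_t(\VEC x'_t,\VEC y^1_t,\VEC y^2_t)$, noting that $\VEC u^i_t$ is a deterministic function of $(\VEC y^i_t,\VEC c_t)$ and hence redundant in the conditioning, then (b) propagating through the state transition using the observed actions and (c) appending the time-$(t{+}1)$ observation kernels. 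Your closing remark that the one-step delay is special precisely because $\VEC z_{t+1}$ reveals $(\VEC P^1_t,\VEC P^2_t)$ outright---so no inversion of $\gamma^i_t$ is ever needed---is not spelled out in the paper's proof but captures exactly the mechanism at work.
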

\begin{proof} See Appendix \ref{sec:onestep}\footnote{Appendices F-J are included in the Supplementary Material section at the end of the paper.}. 
\end{proof}
%Because of Lemma \ref{lemma:onestep}, we can use Theorem \ref{thm:backward_ind} to find a common information based Markov perfect equilibrium for the game with one step delayed information sharing pattern. %We will consider a linear quadratic Gaussian game with this information structure in Section~\ref{sec:lqgonestep}.
\par
A special case of the above information structure is the situation where the state $\VEC X_t = (X^1_t,X^2_t)$ and controller $i$'s observation $Y^i_t = X^i_t$. A game with this information structure was considered in \cite{Altman:2009}. 
It is interesting to note that Assumption~\ref{assm:separation} is not true if information is shared with delays larger than one time step \cite{NMT:2011}. %Our analysis, therefore, does not extend to  delays larger than one time step. 

\subsection{Information Sharing with One-Directional-One-Step Delay}
Similar to the one-step delay case, we consider the situation where all observations of controller 1 are available to controller 2 with no delay while the observations of controller 2 are available to controller 1 with one-step delay. All past control actions are available to both controllers. That is, in this case, $\VEC C_t = \{\VEC Y^1_{1:t}, \VEC Y^2_{1:t-1},\VEC U^1_{1:t-1},\VEC U^2_{1:t-1}\}$, $\VEC Z_{t+1} = \{\VEC Y^1_{t+1},\VEC Y^2_t,\VEC U^1_t,\VEC U^2_t\}$, controller 1 has no private information  and the private information of controller 2  is  $\VEC P^2_t = \VEC Y^2_t$. %Assumption~\ref{assm:infoevolution} is clearly satisfied. The following lemma establishes the validity of Assumption~\ref{assm:separation} as well.
\begin{lemma}\label{lemma:asymmetric_delay}
The game with one-directional-one-step delayed sharing information pattern described above satisfies Assumptions~\ref{assm:infoevolution} and \ref{assm:separation}.
%In the one directional one-step delayed sharing information pattern described above, given a realization of the common information $\VEC c_t$, the common information based belief $\pi_t$ does not depend on the choice of control laws. In particular, the common information based belief satisfies an update equation $\pi_{t+1} = F_t(\pi_t,\VEC z_{t+1})$ where $F_t$ does not depend on the control laws.
\end{lemma}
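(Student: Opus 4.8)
\section*{Proof proposal for Lemma~\ref{lemma:asymmetric_delay}}

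The plan is to verify Assumption~\ref{assm:infoevolution} by direct bookkeeping and Assumption~\ref{assm:separation} by a trajectory-factorization argument of the type underlying Lemma~\ref{lemma:onestep}. For the first part: since $\VEC C_t = \{\VEC Y^1_{1:t}, \VEC Y^2_{1:t-1}, \VEC U^1_{1:t-1}, \VEC U^2_{1:t-1}\}$, we have $\VEC C_t \subset \VEC C_{t+1}$ with increment $\VEC Z_{t+1} = \{\VEC Y^1_{t+1}, \VEC Y^2_t, \VEC U^1_t, \VEC U^2_t\}$. Using $\VEC P^1_t = \emptyset$ and $\VEC P^2_t = \VEC Y^2_t$, the transformation $\zeta_{t+1}(\VEC p^1_t, \VEC p^2_t, \VEC u^1_t, \VEC u^2_t, \VEC y^1_{t+1}, \VEC y^2_{t+1}) := \{\VEC y^1_{t+1}, \VEC p^2_t, \VEC u^1_t, \VEC u^2_t\}$ realizes \eqref{eq:commoninfo} (it simply discards $\VEC p^1_t$ and $\VEC y^2_{t+1}$), while $\VEC P^1_{t+1} = \emptyset$ is a constant map and $\VEC P^2_{t+1} = \VEC Y^2_{t+1} = \xi^2_{t+1}(\VEC p^2_t, \VEC u^2_t, \VEC y^2_{t+1}) := \VEC y^2_{t+1}$ realizes \eqref{eq:privateinfo}. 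This settles Assumption~\ref{assm:infoevolution}.

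For Assumption~\ref{assm:separation}, the key step is to show that the belief $\Pi_t(\VEC x_t, \emptyset, \VEC y^2_t) = \mathds{P}^{g^1_{1:t-1},g^2_{1:t-1}}(\VEC X_t = \VEC x_t, \VEC Y^2_t = \VEC y^2_t \mid \VEC C_t = \VEC c_t)$ does not depend on the past control laws, for every $\VEC c_t$ of positive probability. I would expand the joint law of the trajectory $(\VEC X_{1:t}, \VEC Y^1_{1:t}, \VEC Y^2_{1:t}, \VEC U^1_{1:t-1}, \VEC U^2_{1:t-1})$, using mutual independence of the primitive variables, as a product of \emph{primitive} factors --- the law of $\VEC X_1$, the observation kernels induced by $h^1_s, h^2_s$, and the transition kernels induced by $f_s$ --- times the \emph{control} factors $\prod_{s=1}^{t-1} \ind{\VEC u^1_s = g^1_s(\VEC y^1_{1:s}, \VEC y^2_{1:s-1}, \VEC u^1_{1:s-1}, \VEC u^2_{1:s-1})}\, \ind{\VEC u^2_s = g^2_s(\VEC y^1_{1:s}, \VEC y^2_{1:s}, \VEC u^1_{1:s-1}, \VEC u^2_{1:s-1})}$, where I have used that $\VEC I^1_s = \VEC C_s$ and $\VEC I^2_s = (\VEC C_s, \VEC Y^2_s)$ in this information structure.

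The crucial observation --- and the place where the one-directional-one-step-delay structure is used --- is that for every $s \le t-1$ all arguments of $g^1_s$ and $g^2_s$ appearing above are components of $\VEC c_t$ (notably $\VEC Y^2_{1:s} \subset \VEC Y^2_{1:t-1} \subset \VEC C_t$ when $s \le t-1$). Hence in the Bayes ratio $\mathds{P}^{g^1_{1:t-1},g^2_{1:t-1}}(\VEC X_t = \VEC x_t, \VEC Y^2_t = \VEC y^2_t, \VEC C_t = \VEC c_t)/\mathds{P}^{g^1_{1:t-1},g^2_{1:t-1}}(\VEC C_t = \VEC c_t)$, obtained by summing the joint over $\VEC x_{1:t-1}$ in the numerator and over $(\VEC x_{1:t}, \VEC y^2_t)$ in the denominator, the control factors are constant over the summation indices, factor out of both sums, and --- since $\VEC c_t$ has positive probability they equal $1$ --- disappear from the ratio. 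What survives is a ratio of sums of products of primitive factors only, manifestly strategy independent; this is precisely the statement of Assumption~\ref{assm:separation}, and combined with Lemma~\ref{lemma:evolution} (applicable since Assumption~\ref{assm:infoevolution} has just been verified) it gives the reduced recursion \eqref{eq:evolution1}. I expect the only real difficulty to be organizational: writing the product-form factorization of the trajectory distribution cleanly and checking the index ranges so that every control-law argument for $s \le t-1$ is genuinely contained in $\VEC c_t$ --- exactly the containment that fails once the sharing delay exceeds one step, consistent with the remark following Lemma~\ref{lemma:onestep}.
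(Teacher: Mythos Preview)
Your proposal is correct, but the route differs from the paper's. The paper verifies Assumption~\ref{assm:separation} \emph{recursively}: it writes $\pi_{t+1}(\VEC x_{t+1},\VEC y^2_{t+1})$ via Bayes' rule conditioned on $\VEC c_{t+1} = (\VEC c_t,\VEC z_{t+1})$, expands the numerator and denominator using the dynamics and observation kernels, and shows that every strategy-dependent term reduces to a function of $\pi_t$ and the increment $\VEC z_{t+1} = (\VEC y^1_{t+1},\VEC y^2_t,\VEC u^1_t,\VEC u^2_t)$, thereby exhibiting the update map $F_t$ of \eqref{eq:evolution1} explicitly. You instead prove the \emph{first} formulation of Assumption~\ref{assm:separation} directly, via a one-shot trajectory factorization: because for every $s\le t-1$ all arguments of $g^1_s$ and $g^2_s$ lie in $\VEC c_t$, the control indicators are constant over the summation variables in both numerator and denominator of the Bayes ratio and cancel, leaving a strategy-free expression; you then appeal to Lemma~\ref{lemma:evolution} for the recursion. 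Your argument is more conceptual and makes transparent exactly which containment (namely $\VEC Y^2_{1:t-1}\subset \VEC C_t$) drives the result and why it would fail for longer delays; the paper's argument is more constructive in that it hands you the explicit $F_t$. Both are complete.
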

\begin{proof} See Appendix \ref{sec:asymmetric_delay}. 
\end{proof}
%\subsection{An Information Structure with one-way sharing}
%next, we consider the case where controller 1 observes the state $\VEC X_t$ perfectly, whereas controller 2 has noisy observations of the state. We assume that the observations of controller 2 are available to controller 1 but not vice versa. As before, all control actions are available to both controllers. Thus, $\VEC C_t = \{\VEC Y^2_{1:t}, \VEC U^1_{1:t-1},\VEC U^2_{1:t-1}\}$, $\VEC P^1_t = \VEC X_t$ and controller 2 has no private information. (Note that the private information of controller 1 is just the current state and not the entire history of states.)
%%%%%%%%%%%%%%%%%%%%%%%%%%%%%%%%%%%%%%%%%%%%%\input new_example
\subsection{State Controlled by One Controller with Asymmetric Delay Sharing}
\emph{Case A:} Consider the special case of \textbf{G1} where the state dynamics are controlled only by controller $1$, that is,
\[ \VEC X_{t+1} = f_t(\VEC X_t, \VEC U^1_t, \VEC W^0_t).\]
Assume that the information structure is given as:
\[\VEC C_t = \{\VEC Y^1_{1:t},\VEC Y^2_{1:t-d}, \VEC U^1_{1:t-1}\}, \qquad \VEC P^1_t = \emptyset, \qquad \VEC P^2_t = \VEC Y^2_{t-d+1:t}.\]
That is, controller $1$'s observations  are available to controller $2$ instantly while controller $2$'s observations are available to controller $1$ with a delay of $d \geq 1$ time steps. 

\emph{Case B:} Similar to the above case, consider the situation where the state dynamics are still controlled only by controller $1$ but the information structure is:
\[\VEC C_t = \{\VEC Y^1_{1:t-1},\VEC Y^2_{1:t-d}, \VEC U^1_{1:t-1}\}, \qquad \VEC P^1_t = \VEC Y^1_t, \qquad \VEC P^2_t = \VEC Y^2_{t-d+1:t}.\]
%Assumption~\ref{assm:infoevolution} is clearly satisfied in both the cases above. The following lemma establishes the validity of Assumption~\ref{assm:separation} as well.
\begin{lemma}\label{lemma:new_example}
The games described in Cases A and B satisfy Assumptions~\ref{assm:infoevolution} and \ref{assm:separation}.
%In the games described in Cases A and B above, given a realization of the common information $\VEC c_t$, the common information based belief $\pi_t$ does not depend on the choice of control laws. In particular, the common information based belief satisfies an update equation $\pi_{t+1} = F_t(\pi_t,\VEC z_{t+1})$ where $F_t$ does not depend on the control laws.
\end{lemma}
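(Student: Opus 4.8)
The plan is to verify Assumptions~\ref{assm:infoevolution} and~\ref{assm:separation} directly for each of the two information structures. Because in both cases the state is driven only by controller~$1$ and all of controller~$1$'s past actions enter the common information, the two verifications run in parallel; I will describe the argument for Case~B and indicate the (simpler) changes for Case~A, in which controller~$1$ has no private information.

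Checking Assumption~\ref{assm:infoevolution} is pure index bookkeeping. In Case~B, $\VEC C_{t+1}=\{\VEC Y^1_{1:t},\VEC Y^2_{1:t+1-d},\VEC U^1_{1:t}\}\supseteq\VEC C_t$, so the common information increases; the increment is $\VEC Z_{t+1}=\{\VEC Y^1_t,\VEC Y^2_{t+1-d},\VEC U^1_t\}$, where $\VEC Y^1_t=\VEC P^1_t$, $\VEC Y^2_{t+1-d}$ is the oldest entry of $\VEC P^2_t=\VEC Y^2_{t-d+1:t}$ (absent when $t+1-d\le0$), and $\VEC U^1_t$ is an action at time~$t$; hence $\VEC Z_{t+1}$ has the form~\eqref{eq:commoninfo}. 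For the private information, $\VEC P^1_{t+1}=\VEC Y^1_{t+1}$ trivially has the form~\eqref{eq:privateinfo}, while $\VEC P^2_{t+1}=\VEC Y^2_{t-d+2:t+1}$ is $\VEC P^2_t$ with its oldest entry dropped and $\VEC Y^2_{t+1}$ appended, again of the form~\eqref{eq:privateinfo}. For Case~A the only differences are that $\VEC C_t$ contains $\VEC Y^1_{1:t}$ rather than $\VEC Y^1_{1:t-1}$, that $\VEC P^1_t=\emptyset$, and that $\VEC Z_{t+1}=\{\VEC Y^1_{t+1},\VEC Y^2_{t+1-d},\VEC U^1_t\}$, with $\VEC Y^1_{t+1}$ now the genuinely new observation permitted in~\eqref{eq:commoninfo}.

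For Assumption~\ref{assm:separation} I would use two facts. (a)~Since the transition~\eqref{eq:state} involves only $\VEC U^1_t$ and the observation maps~\eqref{eq:observation} involve only the state and the primitive noises, mutual independence of the primitive variables lets us factorize $\PR^{\VEC g^1,\VEC g^2}(\VEC X_{1:t}=\VEC x_{1:t},\VEC Y^1_{1:t}=\VEC y^1_{1:t},\VEC Y^2_{1:t}=\VEC y^2_{1:t},\VEC U^1_{1:t-1}=\VEC u^1_{1:t-1})$ as the product of fixed kernels
\[
\PR(\VEC x_1)\prod_{s=1}^{t}\PR(\VEC y^1_s|\VEC x_s)\prod_{s=1}^{t}\PR(\VEC y^2_s|\VEC x_s)\prod_{s=1}^{t-1}\PR(\VEC x_{s+1}|\VEC x_s,\VEC u^1_s)
\]
times a single consistency indicator $\prod_{s=1}^{t-1}\IND\{\VEC u^1_s=g^1_s(\VEC p^1_s,\VEC c_s)\}$; note $\VEC g^2$ appears nowhere, since $\VEC U^2$ enters none of $f_t$, $h^i_t$, or the data $(\VEC P^1_s,\VEC C_s)$ on which $\VEC U^1_s$ depends. (b)~For every $s\le t-1$ that data $(\VEC P^1_s,\VEC C_s)$ is contained in $\VEC C_t$; the index-nesting check uses $s-d\le t-1-d<t-d$ (and in Case~A reduces to $\VEC C_s\subseteq\VEC C_t$ with $\VEC P^1_s=\emptyset$). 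Granting (a) and (b): the joint probability $\PR^{\VEC g^1,\VEC g^2}(\VEC X_t=\VEC x_t,\VEC P^1_t=\VEC p^1_t,\VEC P^2_t=\VEC p^2_t,\VEC C_t=\VEC c_t)$ is obtained from the displayed product by fixing $(\VEC y^1_{1:t},\VEC y^2_{1:t},\VEC u^1_{1:t-1})$ to the values recorded in $(\VEC c_t,\VEC p^1_t,\VEC p^2_t)$ and summing over $\VEC x_{1:t-1}$, and the consistency indicator --- being, by~(b), a function of $\VEC c_t$ and $\VEC g^1$ alone --- is constant in $\VEC x_{1:t-1}$ and pulls out of that sum. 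Dividing by $\PR^{\VEC g^1,\VEC g^2}(\VEC C_t=\VEC c_t)$, the indicator cancels on every realization $\VEC c_t$ of positive probability, so the conditional belief $\Pi_t$ depends on neither $\VEC g^1$ nor $\VEC g^2$, which is Assumption~\ref{assm:separation}. (Equivalently, in the language of Lemma~\ref{lemma:evolution}: since $\VEC U^1_t$ --- and, in Case~B, also $\VEC Y^1_t$ --- belongs to $\VEC Z_{t+1}$, the update~\eqref{eq:evolution} already sees whatever $\gamma^1_t$ contributes and collapses to~\eqref{eq:evolution1}.)

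I expect the only real work to be the index-nesting of fact~(b): one must check, component by component and separately for Cases~A and~B, that for every $s\le t-1$ each piece of $(\VEC P^1_s,\VEC C_s)$ already lies in a realization of $\VEC C_t$, taking care of the boundary cases where a non-positive time index empties an observation block. Everything else is routine verification.
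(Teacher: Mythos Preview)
Your argument is correct and takes a genuinely different route from the paper's proof. The paper verifies Assumption~\ref{assm:separation} by decomposing the belief and recognising the state posterior as the information state of an auxiliary \emph{centralized} stochastic control problem with augmented state $\tilde{\VEC X}_t=\VEC X_{t-d:t}$ and observation $\tilde{\VEC Y}_t=(\VEC Y^1_t,\VEC Y^2_{t-d})$ (respectively $(\VEC Y^1_{t-1},\VEC Y^2_{t-d})$ in Case~B), then appeals to the classical strategy-independence of that object. You instead argue directly from the factorisation of the joint law: since $\VEC U^2$ is absent from $f_t$, $h^i_t$, and all information sets, $\VEC g^2$ never appears; and since $(\VEC P^1_s,\VEC C_s)\subset\VEC C_t$ for every $s\le t-1$, the $\VEC g^1$-consistency indicator is a function of the conditioning variable $\VEC c_t$ alone and cancels in the Bayes ratio. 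Your approach is more elementary and self-contained, and it makes explicit the structural reason the assumption holds (controller~1's full past information is already in the common record). The paper's reduction, on the other hand, immediately delivers the recursive update $\pi_{t+1}=F_t(\pi_t,\VEC z_{t+1})$ as a by-product of the known filtering recursion, whereas in your write-up that form is left to the parenthetical remark via Lemma~\ref{lemma:evolution}; if you keep your approach, it would be worth making that last step explicit rather than heuristic.
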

\begin{proof} See Appendix \ref{sec:new_example}. 
\end{proof}

%%%%%%%%%%%%%%%%%%%%%%%%%%%%%%%%%%%%%%%%%%%%%%%%%%%%%%%%%%%%%%%%%
\subsection{An Information Structure with Global and Local States}
\emph{Noiseless Observations:} We now consider the information structure described in \cite{NayyarBasarcdc}. In this example, the state $\VEC X_{t}$ has three components: a \emph{global state} $X^0_t$ and a \emph{local state} $X^i_t$ for each controller. The state evolution is given by the following equation:
\begin{equation}\label{eq:dynamics1}
\VEC X_{t+1} = f_t(X^0_t, \VEC U^1_t, \VEC U^2_t, \VEC W^0_t)
\end{equation}
Note that the dynamics depend on the current global state $X^0_t$ but not on the current local states.
Each controller has access to the global state process $X^0_{1:t}$ and its current local state $X^i_{t}$. In addition, each controller knows the past actions of all controllers.  Thus, the common and private information in this case are: 
\[ \VEC C_t = \{X^0_{1:t}, \VEC U^1_{1:t-1}, \VEC U^2_{1:t-1}\}, ~~~~~~ \VEC P^i_t =\{X^i_t\}\]
It is straightforward to verify that Assumption \ref{assm:infoevolution} holds for this case.

For a realization $\{x^0_{1:t}, \VEC u^1_{1:t-1}, \VEC u^2_{1:t-1}\}$ of the common information, the common information based belief in this case is given as
\begin{align} &\pi_t(x^0,x^1,x^2) = \mathds{P}^{g^1_{1:t-1},g^2_{1:t-1}}(X^0_t=x^0,x^1_t=x^1,X^2_t=x^2|x^0_{1:t}, \VEC u^1_{1:t-1}, \VEC u^2_{1:t-1}) \notag \\
&= \mathds{1}_{\{x^0=x^0_t\}}\mathds{P}(X^1_t=x^1,X^2_t=x^2|x^0_{t:t-1},\VEC u^1_{t-1},\VEC u^2_{t-1})
\end{align}
It is easy to verify that the above belief depends only on the statistics of $\VEC W^0_{t-1}$ and is therefore independent of control laws. Thus, Assumption~\ref{assm:separation} also holds for this case.
\par
\emph{Noisy Observations:} We can also consider a modification of the above scenario where both controllers have a common, noisy observation $Y^0_t = h_t(X^0_t,W^1_t)$ of the global state. That is,
\[ \VEC C_t = \{Y^0_{1:t}, \VEC U^1_{1:t-1}, \VEC U^2_{1:t-1}\}, \qquad \VEC P^i_t =\{X^i_t\}, \qquad \VEC Z_{t+1} = \{Y^0_{t+1}, \VEC U^1_t, \VEC U^2_t\}.\]
%For a realization $\{y^0_{1:t}, \VEC u^1_{1:t-1}, \VEC u^2_{1:t-1}\}$ of the common information, the common information based belief in this case is given as
%\begin{align} &\pi_t(x^0,x^1,x^2) = \mathds{P}^{g^1_{1:t-1},g^2_{1:t-1}}(X^0_t=x^0,x^1_t=x^1,X^2_t=x^2|y^0_{1:t}, \VEC u^1_{1:t-1}, \VEC u^2_{1:t-1})
%\end{align}
%The following lemma establishes the validity of Assumption~\ref{assm:separation}.
\begin{lemma}\label{lemma:noisy_global}
The game with the information pattern described above satisfies Assumptions~\ref{assm:infoevolution} and \ref{assm:separation}.
\end{lemma}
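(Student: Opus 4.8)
The plan is to verify the two assumptions separately, with the bulk of the work going into Assumption \ref{assm:separation}. First I would check Assumption \ref{assm:infoevolution}: the common information $\VEC C_t = \{Y^0_{1:t}, \VEC U^1_{1:t-1}, \VEC U^2_{1:t-1}\}$ is clearly increasing, and its increment $\VEC Z_{t+1} = \{Y^0_{t+1}, \VEC U^1_t, \VEC U^2_t\}$ is a function of the actions taken at $t$ and the new observation $Y^0_{t+1} = h_{t+1}(X^0_{t+1}, W^1_{t+1})$; since $\VEC U^i_t$ is a function of $(\VEC P^i_t, \VEC C_t)$ this fits the form of \eqref{eq:commoninfo}. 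The private information $\VEC P^i_t = \{X^i_t\}$ evolves as $X^i_{t+1} = $ (the $i$th component of $f_t(X^0_t, \VEC U^1_t, \VEC U^2_t, \VEC W^0_t)$), which depends on the global state and both actions, not just controller $i$'s data; so strictly speaking I would need to either note that $\VEC P^i_{t+1}$ is still a function of quantities available (with $X^0_t$ part of $\VEC C_t$, actions generated at $t$, and noise) — i.e.\ reinterpret \eqref{eq:privateinfo} with the understanding that the dynamics here make $\VEC P^i_{t+1}$ a function of common-information-plus-noise rather than of $\VEC P^i_t$ alone. This is the one place where I expect a subtlety: the lemma as stated claims Assumption \ref{assm:infoevolution} holds, so I would check carefully whether $f_t$ depending on $\VEC U^2_t$ still lets us write $\VEC P^1_{t+1}$ in the required functional form, possibly by observing that the relevant conditioning is what matters rather than the literal form of \eqref{eq:privateinfo}.

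For Assumption \ref{assm:separation}, the key step is to compute the common-information-based belief $\Pi_{t}$ explicitly and show it is strategy-independent. I would argue by induction on $t$. Write $\pi_t(x^0,x^1,x^2) = \mathds{P}^{g^1_{1:t-1},g^2_{1:t-1}}(X^0_t=x^0, X^1_t=x^1, X^2_t=x^2 \mid Y^0_{1:t}, \VEC U^1_{1:t-1}, \VEC U^2_{1:t-1})$. The crucial structural fact is that, given $X^0_t$ and the past actions $\VEC U^1_{1:t-1}, \VEC U^2_{1:t-1}$, the next state $\VEC X_{t+1} = f_t(X^0_t, \VEC U^1_t, \VEC U^2_t, \VEC W^0_t)$ is determined, and $\VEC U^i_t$ is itself a deterministic function $g^i_t(X^i_t, \VEC C_t)$ of the private state and common information. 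The new observation $Y^0_{t+1} = h_{t+1}(X^0_{t+1}, W^1_{t+1})$ depends only on the new global state and fresh noise. So the update of $\pi$ from $t$ to $t+1$ is a standard nonlinear-filtering recursion: predict using $f_t$ (with the actions inserted via the control laws) and correct using the likelihood of $Y^0_{t+1}$. The point to extract is that in this prediction-correction step, the only way the control laws $g^i_t$ enter is through the \emph{values} $\VEC U^i_t$, which are part of the realized increment $\VEC Z_{t+1}$ and hence are conditioned upon — so conditionally on $\VEC Z_{t+1}$ the update does not depend on $g^i_t$. That is exactly the form \eqref{eq:evolution1}, and this is essentially an application of Lemma \ref{lemma:evolution} combined with the observation that the $\gamma^i_t$ dependence collapses once one conditions on the action realizations contained in $\VEC z_{t+1}$.

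Concretely, I would decompose $\pi_{t+1}$ via Bayes' rule: the numerator is a sum over $(\tilde x^0, \tilde x^1, \tilde x^2)$ of $\pi_t(\tilde x^0,\tilde x^1,\tilde x^2)\,\mathds{1}_{\{\VEC u^i_t = \gamma^i_t(\tilde x^i)\}}\,\mathds{P}(f_t(\tilde x^0, \VEC u^1_t, \VEC u^2_t, \VEC W^0_t) = (x^0,x^1,x^2))\,\mathds{P}(h_{t+1}(x^0, W^1_{t+1}) = y^0_{t+1})$, normalized by its sum over $(x^0,x^1,x^2)$. The indicator $\mathds{1}_{\{\VEC u^i_t = \gamma^i_t(\tilde x^i)\}}$ is the only appearance of the partial functions; but since the realized $\VEC u^i_t$ is part of $\VEC z_{t+1}$, and since in this information structure $X^i_t$ (the argument of $\gamma^i_t$) ranges over a set on which we are already summing with weight $\pi_t$, I would show that summing out this indicator against $\pi_t$ yields a quantity that, after normalization, coincides regardless of which $g^i_t$ produced the observed $\VEC u^i_t$. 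The cleanest route is probably to invoke Lemma \ref{lemma:evolution} directly to get $\pi_{t+1} = F_t(\pi_t, \gamma^1_t, \gamma^2_t, \VEC z_{t+1})$ and then argue that in this particular structure $F_t$ factors through $(\pi_t, \VEC z_{t+1})$ only — the $\gamma$'s being redundant given that $\VEC u^i_t \in \VEC z_{t+1}$ and the local states $X^i_t$ do not influence the dynamics $f_t$ except through being mapped to actions that are already observed. The main obstacle, as noted, is handling the interplay between the two controllers' actions in $f_t$ carefully enough to confirm both assumptions literally hold in the form stated; the filtering recursion itself is routine.
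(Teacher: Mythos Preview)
Your worry about Assumption~\ref{assm:infoevolution} is misplaced. In this information structure the local state $X^i_t$ \emph{is} controller~$i$'s observation (together with the common $Y^0_t$), so $\VEC Y^i_{t+1}$ already contains $X^i_{t+1}$. The private-information update $\VEC P^i_{t+1}=X^i_{t+1}$ therefore fits the form of \eqref{eq:privateinfo} trivially: $\xi^i_{t+1}$ simply extracts the $X^i_{t+1}$ component of $\VEC Y^i_{t+1}$. There is no need to reinterpret \eqref{eq:privateinfo} or to worry about $f_t$ depending on both actions.

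For Assumption~\ref{assm:separation} you have identified the right structural fact --- that the dynamics $f_t$ depend on $X^0_t$ but not on $X^1_t,X^2_t$ --- but your route through the indicators $\mathds{1}_{\{\VEC u^i_t=\gamma^i_t(\tilde x^i)\}}$ does not finish the job. After you factor the transition kernel out of the sum over $(\tilde x^1,\tilde x^2)$, the remaining $(\tilde x^1,\tilde x^2)$-sum is
\[
A(\tilde x^0)\;=\;\sum_{\tilde x^1,\tilde x^2}\pi_t(\tilde x^0,\tilde x^1,\tilde x^2)\,\mathds{1}_{\{\gamma^1_t(\tilde x^1)=\VEC u^1_t\}}\mathds{1}_{\{\gamma^2_t(\tilde x^2)=\VEC u^2_t\}},
\]
which still depends on $\gamma^1_t,\gamma^2_t$. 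It sits \emph{inside} the sum over $\tilde x^0$ in both numerator and denominator, so it does not cancel by mere normalization; you would need $A(\tilde x^0)$ proportional to the $X^0_t$-marginal of $\pi_t$, i.e.\ that conditioning additionally on $(\VEC u^1_t,\VEC u^2_t)$ leaves the posterior of $X^0_t$ unchanged. Your plan asserts the cancellation but does not supply this step.

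The paper's proof bypasses the indicator detour entirely. It conditions from the outset on the realized actions $(\VEC u^1_t,\VEC u^2_t)$ (which are part of $\VEC c_{t+1}$), writes $X_{t+1}$ as a function of $(X^0_t,\VEC u^1_t,\VEC u^2_t,\VEC W^0_t)$, and sums only over the global state $X^0_t=x'$ with weight given by its marginal under $\pi_t$. The partial functions $\gamma^i_t$ never enter the computation, and the resulting update is manifestly of the form $\pi_{t+1}=F_t(\pi_t,y^0_{t+1},\VEC u^1_t,\VEC u^2_t)=F_t(\pi_t,\VEC z_{t+1})$. This is the direct way to exploit the structural fact you noted; invoking Lemma~\ref{lemma:evolution} and then trying to argue the $\gamma$'s drop out is a detour that leaves exactly the hard step unproven.
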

\begin{proof} See Appendix \ref{sec:noisy_global}. 
\end{proof}
% \notag \\
%&= \sum_{x' \in \mathcal{X}_{t-1}}\mathds{P}(X^1_t=x^1,X^2_t=x^2|X^0_t=x^0,X^0_{t-1}=x'\VEC u^1_{t-1},\VEC u^2_{1:t-1})\mathds{P}^{g^1_{1:t-1},g^2_{1:t-1}}(X^0_t = x^0, X^0_{t-1} =x'|y^0_{1:t}, \VEC u^1_{1:t-1}, \VEC u^2_{1:t-1})\label{eq:july19.1}
%\end{align}
%The first term in the right hand side of \eqref{eq:july19.1} depends only on the statistics of $\VEC W^0_{t-1}$ and the following lemma shows that the second term is control strategy independent.
%The following lemma establishes the validity of Assumption~\ref{assm:separation} as well.
%\begin{lemma}
%In the  information pattern described above, given a realization of the common information $\VEC c_t$, the common information based belief $\pi_t$ does not depend on the choice of control laws. In particular, the common information based belief satisfies an update equation $\pi_{t+1} = F_t(\pi_t,\VEC z_{t+1})$ where $F_t$ does not depend on the control laws.
%\end{lemma}
%

\subsection{Uncontrolled State Process} \label{sec:uncontrolled}
Consider a state process whose evolution does not depend on the control actions, that is, the system state evolves as
\begin{equation}
\VEC X_{t+1} = f_t(\VEC X_t, \VEC W^0_t)
\end{equation}
Further, the common and private information evolve as follows:
\begin{enumerate}
\item  $\VEC C_{t+1} = \{\VEC C_t, \VEC Z_{t+1}\}$ and 
\begin{equation}
\VEC Z_{t+1} = \zeta_{t+1}(\VEC P^1_t, \VEC P^2_t, \VEC Y^1_{t+1}, \VEC Y^2_{t+1}),\label{eq:uncontrolled_1}
\end{equation}
where $\zeta_{t+1}$ is a fixed transformation.

\item The private information evolves according to the equation
\begin{equation}
\VEC P^i_{t+1} = \xi^i_{t+1}(\VEC P^i_t,   \VEC Y^i_{t+1}) \label{eq:uncontrolled_2}
\end{equation}
where $\xi^i_{t+1},i=1,2,$ are fixed transformations.
\end{enumerate}
Note that while control actions do not affect the state evolution, they still affect the costs.
%Following the arguments in the proof of Lemma~\ref{lemma:evolution}, we can prove the following lemma which ensures that Assumption~\ref{assm:separation} is true for this case.
\begin{lemma}\label{lemma:uncontrolled}
The game \textbf{G1} with an uncontrolled state process described above satisfies Assumptions~\ref{assm:infoevolution} and \ref{assm:separation}.
\end{lemma}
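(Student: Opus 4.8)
The plan is to verify the two assumptions in turn. Assumption~\ref{assm:infoevolution} is essentially immediate from the hypotheses, and Assumption~\ref{assm:separation} follows from a short induction showing that, when the state is uncontrolled and \eqref{eq:uncontrolled_1}--\eqref{eq:uncontrolled_2} hold, the \emph{entire} information process is a strategy-independent function of the primitive random variables.

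For Assumption~\ref{assm:infoevolution}, I would simply observe that the posited evolution makes $\VEC C_{t+1} = \{\VEC C_t,\VEC Z_{t+1}\}$ increasing with $\VEC Z_{t+1}$ a fixed transformation of $(\VEC P^1_t,\VEC P^2_t,\VEC Y^1_{t+1},\VEC Y^2_{t+1})$ --- a special case of \eqref{eq:commoninfo} in which $\zeta_{t+1}$ does not depend on $(\VEC U^1_t,\VEC U^2_t)$ --- and $\VEC P^i_{t+1}$ a fixed transformation of $(\VEC P^i_t,\VEC Y^i_{t+1})$, a special case of \eqref{eq:privateinfo}. Hence Assumption~\ref{assm:infoevolution} holds by construction, and Lemma~\ref{lemma:evolution} applies.

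For Assumption~\ref{assm:separation}, the key claim I would establish by induction on $t$ is that there exist fixed maps, not depending on the control strategies, expressing $\VEC X_t$, $\VEC Y^1_t$, $\VEC Y^2_t$, $\VEC P^1_t$, $\VEC P^2_t$, and $\VEC C_t$ as deterministic functions of the primitives $(\VEC X_1,\VEC W^0_{1:t-1},\VEC W^1_{1:t},\VEC W^2_{1:t})$. The base case uses that at $t=1$ there are no past actions, so $\VEC I^i_1 \subset \{\VEC Y^1_1,\VEC Y^2_1\}$ and hence $\VEC C_1,\VEC P^i_1$ are functions of $\VEC Y^i_1 = h^i_1(\VEC X_1,\VEC W^i_1)$. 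For the inductive step I would use $\VEC X_{t+1} = f_t(\VEC X_t,\VEC W^0_t)$ (no action dependence), then $\VEC Y^i_{t+1} = h^i_{t+1}(\VEC X_{t+1},\VEC W^i_{t+1})$, then \eqref{eq:uncontrolled_2} for $\VEC P^i_{t+1}$, and finally \eqref{eq:uncontrolled_1} together with $\VEC C_{t+1} = \{\VEC C_t,\VEC Z_{t+1}\}$ --- each step composing strategy-independent maps with quantities that, by the inductive hypothesis, are strategy-independent functions of the primitives. In particular, actions never enter the information process.

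Given the claim, the conclusion is immediate: the joint distribution of $(\VEC X_t,\VEC P^1_t,\VEC P^2_t,\VEC C_t)$ is the image of the fixed law of the primitives under a fixed map, so it does not depend on $(g^1_{1:t-1},g^2_{1:t-1})$; hence the set of realizations $\VEC c_t$ of positive probability is the same under any strategy profile, and for each such $\VEC c_t$ the conditional distribution $\mathds{P}(\VEC X_t=\VEC x_t,\VEC P^1_t=\VEC p^1_t,\VEC P^2_t=\VEC p^2_t\mid \VEC c_t)$ is strategy-independent --- which is precisely Assumption~\ref{assm:separation} (equivalently, the update map $F_t$ of Lemma~\ref{lemma:evolution} does not depend on $\gamma^1_t,\gamma^2_t$ here). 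I do not expect a real obstacle; the only point needing a little care is the base case, namely confirming that $\VEC C_1$ and $\VEC P^i_1$ carry no control actions (true since none precede time $1$), which is what lets the induction start and keeps actions out of the recursion. A route through Lemma~\ref{lemma:evolution} --- showing directly that $F_t$ is independent of $(\gamma^1_t,\gamma^2_t)$ --- is also possible, but the ``function of the primitives'' argument is cleaner and avoids unpacking the construction of $F_t$.
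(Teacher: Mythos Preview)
Your proposal is correct, but it takes a genuinely different route from the paper's proof. The paper proceeds exactly along the ``route through Lemma~\ref{lemma:evolution}'' that you mention and set aside: it writes out the joint conditional distribution $\mathds{P}(\VEC x_t,\VEC p^1_t,\VEC p^2_t,\VEC x_{t+1},\VEC p^1_{t+1},\VEC p^2_{t+1},\VEC z_{t+1}\mid \VEC c_t)$ explicitly, observes that in the uncontrolled case no $\gamma^i_t$ or action terms appear in the resulting sum, and then obtains $\pi_{t+1}$ as a function of $(\pi_t,\VEC z_{t+1})$ alone via Bayes' rule --- essentially repeating the computation of Appendix~\ref{sec:update_func} with the control-dependent factors removed. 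Your argument is more structural: you show by induction that the entire tuple $(\VEC X_t,\VEC Y^1_t,\VEC Y^2_t,\VEC P^1_t,\VEC P^2_t,\VEC C_t)$ is a fixed, strategy-free function of the primitives, so the joint law of $(\VEC X_t,\VEC P^1_t,\VEC P^2_t,\VEC C_t)$ --- and hence any conditional derived from it --- cannot depend on the strategy profile. This buys you a cleaner, more conceptual proof that avoids the Bayes-rule bookkeeping and makes transparent \emph{why} the assumption holds (actions never enter the information recursion). The paper's computation, on the other hand, has the virtue of directly exhibiting the update map $F_t$ in the form required by \eqref{eq:evolution1}, which is what the downstream arguments actually use.
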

\begin{proof}
See Appendix \ref{sec:uncontrolled_app}.
\end{proof}
As an example of this case, consider the information structure where the two controllers share their observations about an uncontrolled state process with a delay of $d \geq 1$ time steps. In this case, the common information is $\VEC C_t = \{\VEC Y^1_{1:t-d},\VEC Y^2_{1:t-d}\}$ and the private information is $\VEC P^i_t = \VEC Y^i_{t-d+1:t}$. % Because Assumptions~\ref{assm:infoevolution} and \ref{assm:separation} are true for this example, we can use Theorem \ref{thm:backward_ind} to find a common information based Markov perfect equilibrium of this game.
%\subsection{A Learning Game with Asymmetric Information}
%Consider the case where t

\subsection{Symmetric Information Game}
Consider the case when all observations and actions are available to both controllers, that is, $\VEC I^1_t = \VEC I^2_t = \VEC C_t = \{\VEC Y^1_{1:t}, \VEC Y^2_{1:t}, \VEC U^1_{1:t-1},\VEC U^2_{1:t-1}\}$ and there is no private information. %As mentioned earlier, the prescriptions in this case are simply control actions (instead of being functions from private information to control actions).
 The common information based belief in this case is $\pi_{t}(\VEC x_{t}) = \mathds{P}^{g^1_{1:t-1},g^2_{1:t-1}}(\VEC X_{t} =\VEC x_{t}|\VEC y^1_{1:t},\VEC y^2_{1:t},\VEC u^1_{1:t-1},\VEC u^2_{1:t-1})$.\\ $\pi_t$ is the same as the information state in centralized stochastic control, which is known to be control strategy independent and which satisfies an update equation of the form required in Assumption~\ref{assm:separation} \cite{KumarVaraiya}. A related case with perfect state observations is the situation where $\VEC I^1_t = \VEC I^2_t = \VEC X_{1:t}$. %Thus, the result of Theorem \ref{thm:backward_ind} holds for this game as well. However, since there is no private information, the one stage Bayesian games in each step of the backward inductive process used in Theorem \ref{thm:backward_ind} now reduce to one-stage games of complete information.
 \subsection{Symmetrically Observed Controlled State and Asymmetrically Observed Uncontrolled State}
A combination of the previous two scenarios is the situation when the state $\VEC X_t$ consists of two independent components: a controlled component $X^a_t$   and an uncontrolled component $X^b_t$. Both components are observed through noisy channels. The observations about the controlled state as well as the past actions are common to both controllers whereas the information about the uncontrolled state satisfies the model of Section \ref{sec:uncontrolled}. The common information based conditional belief can then be factored into two independent components each of which satisfies an update equation of the form required by Assumption~\ref{assm:separation}. 
%%%%%%%%%%%%%%%%%%%%%%%%%%%%%%%%%%%%%%%%%%%%%%%%%%%%%%%%%%%%%%%%%%%%%%%%%%%%%%%%%%%%%%%%%%%%%%%%%%%%%%%%%%%%%%%%%%
\section{Main Results}\label{sec:virtual}
%Game \textbf{G1} is a game of \emph{asymmetric information} since at any time $t$ the data available to the two controller is different. In order to characterize equilibrium of this game, we want to formulate a game with \emph{symmetric information} whose equilibrium can be used to find an equilibrium of game \textbf{G1}. We will do this by replacing each controller with a virtual player and ensuring that both virtual players have the same information.
%\par
Our goal in this section is to show  that under  Assumptions \ref{assm:infoevolution} and \ref{assm:separation}, a class of equilibria of the game \textbf{G1}   can be characterized in a backward inductive manner that resembles the backward inductive characterization of Markov perfect equilibria of {symmetric information games with perfect state observation}. However, in order to do so, we have to view our asymmetric information game as a symmetric information game by introducing ``virtual players'' that make decisions based on the common information. This section describes this change of perspective and how it can be used to characterize a class of Nash equilibria.
\par
We reconsider the model of game \textbf{G1}. We assume that controller $i$ is replaced by a virtual player $i$ (VP $i$).  The system operates as follows: At time $t$, the data available to each virtual player is the common information $\VEC C_t$. The virtual player $i$ selects a \emph{function} $\Gamma^i_t$ from $\mathcal{P}^i_t$ to $\mathcal{U}^i_t$ according to a decision rule $\chi^i_t$,
\[ \Gamma^i_t = \chi^i_t(\VEC C_t) \]
Note that under a given decision rule $\chi^i_t$, $\Gamma^i_t$ is a random function since $\VEC C_t$ is a  random vector.
 We will use $\gamma^i_t$ to denote a realization of $\Gamma^i_t$. We will refer to $\Gamma^i_t$ as the \emph{prescription} selected by virtual player $i$ at time $t$.  Once the virtual player has chosen $\Gamma^i_t$, a control action $\VEC U^i_t = \Gamma^i_t(\VEC P^i_t)$ is applied to the system.
$\VEC \chi^i := (\chi^i_1,\chi^i_2,\ldots, \chi^i_T)$ is called the strategy of the virtual player $i$.  The total cost of the virtual player $i$ is given as
\begin{equation}
  \label{eq:newcost1}
  \mathcal{J}^i( \VEC \chi^1, \VEC \chi^2 ) \DEFINED \mathds{E}\Big[ 
  \sum_{t=1}^T c^i(\VEC X_t, \VEC U^1_t, \VEC U^2_t) \Big]
\end{equation}
where the expectation on the right hand side of \eqref{eq:newcost1} is with respect to the probability measure  on the state and action processes induced by the choice of strategies $\VEC \chi^1, \VEC \chi^2$ on the left hand side of \eqref{eq:newcost1}. We refer to the game among the virtual players as game \textbf{G2}.
\begin{Remark} \label{remark:empty}
In case there is no private information, the function $\Gamma^i_t$ from $\mathcal{P}^i_t$ to $\mathcal{U}^i_t$ is interpreted as simply a value in the set $\mathcal{U}^i_t$.
\end{Remark}
\subsection{Equivalence with Game \textbf{G1}}
\begin{theorem}\label{thm:equiv}
Let $(\VEC g^1, \VEC g^2)$ be a Nash equilibrium of game \textbf{G1}. Define $\VEC \chi^i$ for $i=1,2$, $t=1,2,\ldots,T$ as
\begin{equation} \chi^i_t(\VEC c_t) := g^i_t(\cdot, \VEC c_t), \label{eq:g12g2}
\end{equation}
for each possible realization  $\VEC c_t$ of common information at time $t$.
Then $(\VEC \chi^1, \VEC \chi^2)$ is a Nash equilibrium of game \textbf{G2}. Conversely, if $(\VEC \chi^1, \VEC \chi^2)$ is a Nash equilibrium of game \textbf{G2}, then define $\VEC g^i$ for $i=1,2,$ $t=1,2,\ldots,T$ as
\begin{equation} g^i_t(\cdot,\VEC c_t) := \chi^i_t(\VEC c_t), 
\end{equation}
for each possible realization  $\VEC c_t$ of common information at time $t$.
Then $(\VEC g^1,\VEC g^2)$ is a Nash equilibrium of game \textbf{G1}. 
\end{theorem}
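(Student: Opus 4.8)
The plan is to exhibit a cost-preserving bijection between the strategy spaces of \textbf{G1} and \textbf{G2} and then transfer the Nash inequalities across it. For each $i\in\{1,2\}$ and each $t$, a control law $g^i_t\colon \mathcal{P}^i_t\times\mathcal{C}_t\to\mathcal{U}^i_t$ of \textbf{G1} and a decision rule $\chi^i_t\colon\mathcal{C}_t\to(\mathcal{P}^i_t\to\mathcal{U}^i_t)$ of \textbf{G2} are related by the currying operation $\chi^i_t(\VEC c_t)=g^i_t(\cdot,\VEC c_t)$, which is exactly \eqref{eq:g12g2}. Since $\mathcal{C}_t$ and $\mathcal{P}^i_t$ are finite, currying and uncurrying are mutually inverse, so this is a bijection between the two sets of time-$t$ rules for player $i$; taking products over $t$ and over $i=1,2$, it extends to a bijection $\Phi$ between strategy profiles of \textbf{G1} and of \textbf{G2}, acting independently on each $(i,t)$ block, with $\Phi(\VEC g^1,\VEC g^2)=(\VEC\chi^1,\VEC\chi^2)$ as in the statement and $\Phi^{-1}$ the converse map.

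The key lemma I would establish is that $\Phi$ preserves cost: $J^i(\VEC g^1,\VEC g^2)=\mathcal{J}^i\bigl(\Phi(\VEC g^1,\VEC g^2)\bigr)$ for $i=1,2$. To prove it, fix a realization of the primitive random variables and argue by induction on $t$ that the trajectory $(\VEC X_t,\VEC P^1_t,\VEC P^2_t,\VEC C_t,\VEC U^1_t,\VEC U^2_t)$ generated in \textbf{G1} under $(\VEC g^1,\VEC g^2)$ coincides with the one generated in \textbf{G2} under $(\VEC\chi^1,\VEC\chi^2)=\Phi(\VEC g^1,\VEC g^2)$. The state, the observations, and the information variables evolve through the same fixed maps $f_t$, $h^i_t$, $\xi^i_{t+1}$, $\zeta_{t+1}$ in both games (see \eqref{eq:state}, \eqref{eq:observation}, \eqref{eq:commoninfo}, \eqref{eq:privateinfo}), driven by the same primitive variables; the only place the strategies enter is the action $\VEC U^i_t$, which in \textbf{G1} equals $g^i_t(\VEC P^i_t,\VEC C_t)$ and in \textbf{G2} equals $\Gamma^i_t(\VEC P^i_t)=\chi^i_t(\VEC C_t)(\VEC P^i_t)=g^i_t(\VEC P^i_t,\VEC C_t)$ by construction. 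Hence the two trajectories agree path by path, so the joint law of $(\VEC X_{1:T},\VEC U^1_{1:T},\VEC U^2_{1:T})$ is the same in the two games, and since \eqref{eq:costofi} and \eqref{eq:newcost1} are the same functional of that process, the costs coincide. Note this step uses only that $\VEC C_t$ and $\VEC P^i_t$ are deterministic functions of the observables and past actions, not the full strength of Assumptions~\ref{assm:infoevolution}--\ref{assm:separation}.

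Given the cost-preserving bijection, the theorem follows immediately. For the forward direction, let $(\VEC g^1,\VEC g^2)$ be a Nash equilibrium of \textbf{G1} and $(\VEC\chi^1,\VEC\chi^2)=\Phi(\VEC g^1,\VEC g^2)$. For an arbitrary deviation $\tilde{\VEC\chi}^1$ of virtual player $1$, let $\tilde{\VEC g}^1$ be its uncurried preimage; because $\Phi$ acts coordinatewise on the two players, $\Phi(\tilde{\VEC g}^1,\VEC g^2)=(\tilde{\VEC\chi}^1,\VEC\chi^2)$, so cost preservation yields $\mathcal{J}^1(\tilde{\VEC\chi}^1,\VEC\chi^2)=J^1(\tilde{\VEC g}^1,\VEC g^2)\ge J^1(\VEC g^1,\VEC g^2)=\mathcal{J}^1(\VEC\chi^1,\VEC\chi^2)$, and symmetrically for player $2$; hence $(\VEC\chi^1,\VEC\chi^2)$ is a Nash equilibrium of \textbf{G2}. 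The converse direction is the identical argument run through $\Phi^{-1}$. The only step that requires genuine care is the path-by-path induction of the second paragraph: one must check that every variable other than the current actions is produced by a strategy-independent transformation, so that the two systems stay perfectly coupled; but this is a routine unrolling of the model equations, so I do not anticipate a real obstacle.
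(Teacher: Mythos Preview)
Your proposal is correct and follows exactly the same approach as the paper: exhibit the currying bijection between \textbf{G1}- and \textbf{G2}-strategies, observe that it leaves the system trajectory (hence the costs) unchanged, and then transfer the Nash inequalities in both directions. The paper's proof is simply a two-sentence sketch of what you have written out in detail.
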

\begin{proof}
It is clear that using \eqref{eq:g12g2}, any controller strategy profile in game \textbf{G1} can be transformed to a corresponding virtual player strategy profile in game \textbf{G2}  without altering the behavior of the dynamic system and in particular the values of the expected costs. If a virtual player can reduce its costs by unilaterally deviating from $\chi^i$, then such a deviation must also exist for the corresponding controller in \textbf{G1}. Therefore, equilibrium of controllers' strategies implies equilibrium of corresponding virtual players' strategies. The converse can be shown using similar arguments. %Similarly, any virtual player strategy profile in \textbf{G2} can be transformed to a controller strategy profile in \textbf{G1} without changing the values of expected costs. Theorem \ref{thm:equiv} is simply a consequence of this equivalence between the two games.
\end{proof}
\par
The game between the virtual players is a symmetric information game since they both make their decisions based only on the common information $\VEC C_t$. In the next section, we identify a Markov state for this symmetric information game and characterize Markov perfect equilibria for this game.  

\subsection{Markov Perfect Equilibrium of \textbf{G2}}\label{sec:VPMPE}
%Some discussion about information state??
%\par
We  want to establish that the common information based conditional beliefs $\Pi_t$ (defined in \eqref{eq:beliefeq1}) can serve as a Markov state for the game \textbf{G2}. Firstly, note that because of Assumption \ref{assm:separation}, $\Pi_t$ depends only on the common information $\VEC C_t$ and since both the virtual players know the common information, the belief $\Pi_t$ is common knowledge among them. The following lemma shows that $\Pi_t$ evolves as a controlled Markov process.
\begin{lemma}\label{lemma:markovprop}
From the virtual players' perspective, the process $\Pi_t, t=1,2,\ldots,T$ is a controlled Markov process with the virtual players' prescriptions $\gamma^1_t,\gamma^2_t, t=1,2,\ldots,T$ as the controlling actions, that is,
% conditioned on the current state of $\pi_t$ and the current prescriptions $\gamma^1_t,\gamma^2_t$, the next state $\Pi_{t+1}$ is independent of the past beliefs and past prescriptions $\pi_{1:t-1},\gamma^1_{1:t-1},\gamma^2_{1:t-1}$. In particular,
\begin{align}
\mathds{P}(\Pi_{t+1}|\VEC c_t, \pi_{1:t},\gamma^1_{1:t},\gamma^2_{1:t}) = \mathds{P}(\Pi_{t+1}| \pi_{1:t},\gamma^1_{1:t},\gamma^2_{1:t}) =\mathds{P}(\Pi_{t+1}|\pi_{t},\gamma^1_{t},\gamma^2_{t})
\end{align}
\end{lemma}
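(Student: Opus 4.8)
The plan is to establish the two claimed equalities in turn, using Lemma~\ref{lemma:evolution} together with Assumption~\ref{assm:separation} as the main engine, and reducing the remaining work to a statement about how the common information increment $\VEC Z_{t+1}$ is generated. Recall from Lemma~\ref{lemma:evolution} (as sharpened by Assumption~\ref{assm:separation}) that there is a fixed transformation $F_t$, not depending on control strategies, with $\Pi_{t+1} = F_t(\Pi_t, \VEC Z_{t+1})$. Hence, as a first step, I would observe that for any realization $\pi_{t+1}$,
\[
\mathds{P}(\Pi_{t+1} = \pi_{t+1} \mid \VEC c_t, \pi_{1:t}, \gamma^1_{1:t}, \gamma^2_{1:t})
= \sum_{\VEC z_{t+1} : F_t(\pi_t,\VEC z_{t+1}) = \pi_{t+1}} \mathds{P}(\VEC Z_{t+1} = \VEC z_{t+1} \mid \VEC c_t, \pi_{1:t}, \gamma^1_{1:t}, \gamma^2_{1:t}).
\]
So the whole lemma reduces to showing that the conditional law of $\VEC Z_{t+1}$ given the virtual players' information history $(\VEC c_t, \pi_{1:t}, \gamma^1_{1:t}, \gamma^2_{1:t})$ in fact depends only on $(\pi_t, \gamma^1_t, \gamma^2_t)$.

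The second step is to compute that conditional law. By Assumption~\ref{assm:infoevolution}, $\VEC Z_{t+1} = \zeta_{t+1}(\VEC P^1_t, \VEC P^2_t, \VEC U^1_t, \VEC U^2_t, \VEC Y^1_{t+1}, \VEC Y^2_{t+1})$, where $\VEC U^i_t = \gamma^i_t(\VEC P^i_t)$ under the prescriptions, and $\VEC Y^i_{t+1} = h^i_{t+1}(\VEC X_{t+1}, \VEC W^i_{t+1})$ with $\VEC X_{t+1} = f_t(\VEC X_t, \VEC U^1_t, \VEC U^2_t, \VEC W^0_t)$. Thus $\VEC Z_{t+1}$ is a fixed function of $(\VEC X_t, \VEC P^1_t, \VEC P^2_t)$, the prescriptions $(\gamma^1_t, \gamma^2_t)$, and the fresh primitive noises $(\VEC W^0_t, \VEC W^1_{t+1}, \VEC W^2_{t+1})$, which are independent of everything in the past. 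Conditioning on $\VEC c_t$, the conditional distribution of $(\VEC X_t, \VEC P^1_t, \VEC P^2_t)$ is by definition $\Pi_t$; and since the noises $(\VEC W^0_t, \VEC W^1_{t+1}, \VEC W^2_{t+1})$ are independent of $\VEC C_t$, $\VEC X_{1:t}$, and all past private information and actions, one obtains
\[
\mathds{P}(\VEC Z_{t+1} = \VEC z_{t+1} \mid \VEC c_t, \pi_{1:t}, \gamma^1_{1:t}, \gamma^2_{1:t})
= \sum_{\VEC x_t, \VEC p^1_t, \VEC p^2_t} \pi_t(\VEC x_t,\VEC p^1_t,\VEC p^2_t)\,
\tilde{\mathds{P}}\big(\VEC z_{t+1} \mid \VEC x_t, \VEC p^1_t, \VEC p^2_t, \gamma^1_t, \gamma^2_t\big),
\]
where $\tilde{\mathds{P}}$ is the law induced purely by the fixed maps $\zeta_{t+1}, h^i_{t+1}, f_t$ and the known noise statistics. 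The right-hand side depends on the conditioning variables only through $(\pi_t, \gamma^1_t, \gamma^2_t)$, which is exactly what is needed; substituting back into the first display gives both equalities simultaneously, and the resulting kernel is the transformation $\mathds{P}(\Pi_{t+1} \mid \pi_t, \gamma^1_t, \gamma^2_t)$ determined by $F_t$.

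I expect the main obstacle to be the careful bookkeeping of independence in the second step: one must verify that conditioning on $\VEC C_t$ does not introduce spurious correlation between $(\VEC X_t, \VEC P^1_t, \VEC P^2_t)$ and the future noises $(\VEC W^0_t, \VEC W^1_{t+1}, \VEC W^2_{t+1})$, and that the past prescriptions $\gamma^1_{1:t-1}, \gamma^2_{1:t-1}$ and beliefs $\pi_{1:t-1}$ enter only through $\VEC c_t$ (which determines them as measurable functions via the virtual players' strategies and Lemma~\ref{lemma:evolution}). This is where Assumption~\ref{assm:infoevolution}'s structural restriction on $\zeta_{t+1}$ — that $\VEC Z_{t+1}$ is built only from time-$t$ private information, time-$t$ actions, and time-$(t{+}1)$ observations — is essential, since it guarantees no ``old'' state or noise variable outside the support of $\Pi_t$ sneaks into $\VEC Z_{t+1}$. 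A clean way to handle this is to note that much of the needed computation has already been packaged inside the proof of Lemma~\ref{lemma:evolution} in Appendix~\ref{sec:update_func}, so the argument here can largely cite that and then invoke Assumption~\ref{assm:separation} to drop the strategy dependence, leaving only the independence-of-future-noise observation to be stated explicitly.
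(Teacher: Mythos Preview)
Your proposal is correct and follows essentially the same route as the paper's proof: reduce via $\Pi_{t+1}=F_t(\pi_t,\VEC Z_{t+1})$ (Assumption~\ref{assm:separation}) to the conditional law of $\VEC Z_{t+1}$, then expand $\VEC Z_{t+1}$ through $\zeta_{t+1}$, the dynamics, and the observation maps to see that this law is a function of $\pi_t,\gamma^1_t,\gamma^2_t$ alone. The paper carries out the same summation explicitly and disposes of the extra conditioning variables by the observation you also flag, namely that $\pi_{1:t},\gamma^1_{1:t},\gamma^2_{1:t}$ are measurable functions of $\VEC c_t$ so they can be dropped from the conditioning before invoking $\mathds{P}(\VEC x_t,\VEC p^1_t,\VEC p^2_t\mid\VEC c_t)=\pi_t(\VEC x_t,\VEC p^1_t,\VEC p^2_t)$.
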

\begin{proof}
See Appendix \ref{sec:markovprop}.
\end{proof}
 Following the development in \cite{MaskinTirole}, we next show that if one virtual player is using a strategy that is measurable with respect to $\Pi_t$, then the other virtual player can select an optimal response strategy measurable with respect to $\Pi_t$ as well. %The following lemma shows that $\Pi_t$ is indeed an information state of game \textbf{G2}.

\begin{lemma}\label{lemma:infostatelemma}
If virtual player $i$ is using a decision strategy that selects  prescriptions only as a function of the belief $\Pi_t$, that is,
\[ \Gamma^i_t = \psi^i_t(\Pi_t), \]
$t=1,\ldots,T,$ then virtual player $j$ can also choose its prescriptions only as a function of the belief $\Pi_t$ without any loss of performance. 
\end{lemma}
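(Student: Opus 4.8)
The plan is to follow the Markov decision process argument of \cite{MaskinTirole}: fix the strategy of virtual player $i$ and show that the best-response problem faced by virtual player $j$ is a single-agent finite-horizon Markov decision problem whose state is the belief $\Pi_t$, so that an optimal prescription strategy measurable with respect to $\Pi_t$ exists.

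First I would fix $\VEC \chi^i$ with $\Gamma^i_t = \psi^i_t(\Pi_t)$ and look at everything from virtual player $j$'s viewpoint. At time $t$, virtual player $j$ knows $\VEC C_t$ (hence $\Pi_t$, by Assumption \ref{assm:separation}), picks a prescription $\gamma^j_t$, and the action pair actually applied is $(\Gamma^1_t(\VEC P^1_t),\Gamma^2_t(\VEC P^2_t))$. The first ingredient is that the conditional expected one-stage cost is a function of $(\pi_t,\gamma^1_t,\gamma^2_t)$ only:
\[
\mathds{E}\big[c^j(\VEC X_t,\VEC U^1_t,\VEC U^2_t)\,\big|\,\VEC C_t,\gamma^1_t,\gamma^2_t\big]
= \sum_{\VEC x_t,\VEC p^1_t,\VEC p^2_t}\pi_t(\VEC x_t,\VEC p^1_t,\VEC p^2_t)\,c^j\big(\VEC x_t,\gamma^1_t(\VEC p^1_t),\gamma^2_t(\VEC p^2_t)\big)
=: \tilde c^j(\pi_t,\gamma^1_t,\gamma^2_t),
\]
which follows directly from the definition \eqref{eq:beliefeq1} of $\Pi_t$ as the common-information conditional law of $(\VEC X_t,\VEC P^1_t,\VEC P^2_t)$ together with $\VEC U^i_t=\gamma^i_t(\VEC P^i_t)$. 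The second ingredient is Lemma \ref{lemma:markovprop}: $\Pi_t$ evolves as a controlled Markov chain driven by the prescriptions $(\gamma^1_t,\gamma^2_t)$, with a transition kernel that does not depend on the rest of $\VEC C_t$.

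Given these two ingredients, substituting $\gamma^i_t=\psi^i_t(\pi_t)$, virtual player $j$'s cost $\mathcal{J}^j$ becomes the expected cumulative stage cost of a finite-horizon MDP whose state space is the (finite) set of realizable beliefs, whose action space is the (finite) set of prescriptions $\gamma^j_t:\mathcal P^j_t\to\mathcal U^j_t$, whose transition kernel is $\mathds{P}(\Pi_{t+1}\mid\pi_t,\psi^i_t(\pi_t),\gamma^j_t)$, and whose one-stage cost is $\tilde c^j(\pi_t,\psi^i_t(\pi_t),\gamma^j_t)$. I would then run the standard backward induction: set $V^j_{T+1}\equiv 0$ and
\[
V^j_t(\VEC c_t)=\min_{\gamma^j_t}\ \mathds{E}\big[c^j(\VEC X_t,\VEC U^1_t,\VEC U^2_t)+V^j_{t+1}(\VEC C_{t+1})\,\big|\,\VEC C_t=\VEC c_t,\ \Gamma^i=\psi^i,\ \gamma^j_t\big],
\]
and prove by downward induction on $t$ that $V^j_t(\VEC c_t)$ depends on $\VEC c_t$ only through $\pi_t$, i.e.\ $V^j_t(\VEC c_t)=\hat V^j_t(\pi_t)$; the induction step uses precisely the two ingredients above (the stage cost depends on $\VEC c_t$ only through $\pi_t$, and the conditional law of $\Pi_{t+1}$ given $\VEC c_t$ depends only on $\pi_t$ and the prescriptions). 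Hence the minimizing prescription at each $t$ can be chosen as a function of $\pi_t$ alone, yielding $\Gamma^j_t=\psi^j_t(\Pi_t)$; finiteness ensures the minimizers exist, and a routine verification argument shows that this Markov response attains $V^j_1$ and is therefore optimal among \emph{all} strategies of virtual player $j$, including history-dependent and randomized ones.

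The main obstacle is the bookkeeping in the inductive step: one must verify that conditioning on the full common information $\VEC C_t$, rather than merely on $\Pi_t$, genuinely adds nothing — both for the current stage cost and, more delicately, for the law of the entire future belief trajectory under the fixed opponent strategy $\psi^i$. This is where Lemma \ref{lemma:markovprop} does the real work, and it is essential here that the opponent's prescription at every future time is a function of $\Pi$ only, so that the closed-loop belief process remains a controlled Markov chain with $\gamma^j$ as the sole control. Once this is established, the remainder is the standard finite-horizon dynamic-programming argument.
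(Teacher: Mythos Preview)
Your proposal is correct and follows essentially the same approach as the paper: fix virtual player $i$'s Markov strategy, observe that the instantaneous conditional cost depends on $\VEC c_t$ only through $\pi_t$ and the prescriptions, invoke Lemma~\ref{lemma:markovprop} to get that $\Pi_t$ is a controlled Markov chain with $\gamma^j_t$ as the sole remaining control, and conclude that virtual player $j$'s best-response problem is a finite-horizon MDP with state $\Pi_t$. The paper's own proof is terser---it simply asserts the MDP reduction and stops---whereas you spell out the backward induction on $V^j_t$ and the verification that conditioning on all of $\VEC C_t$ adds nothing beyond $\pi_t$; but the underlying argument is identical.
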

\begin{proof}
See Appendix \ref{sec:infostatelemma}
\end{proof}
%\subsubsection{Markov perfect equilibria of Game \textbf{G2}}
Lemmas \ref{lemma:markovprop} and \ref{lemma:infostatelemma} establish $\Pi_t$ as the Markov state for the game \textbf{G2}. We  now define a Markov perfect equilibrium for game \textbf{G2}.

%For any time $t$, and any realization $\VEC c_t$ of the common information at time $t$ we denote by $SG_t(\VEC c_t)$ the sub-game of the game \textbf{G2} starting from time $t$ with $\VEC c_t$ as the realization of common information and with the costs associated with the 
% we define the sub-game $SG_t(\VEC c_t)$ as the game between virtual players from time $t$ to $T$  

\begin{definition}
A strategy profile $(\psi^1,\psi^2)$ is said to be a Markov perfect equilibrium of game \textbf{G2} if (i) at each time $t$, the strategies select prescriptions only as a function of the common information based belief $\Pi_t$ and (ii) the strategies form a Nash equilibrium for every sub-game of \textbf{G2} \cite{Tirole}.
\end{definition}

Given a Markov perfect equilibrium of \textbf{G2}, we can construct a corresponding Nash equilibrium of game \textbf{G1} using Theorem~\ref{thm:equiv}. We refer to the class of Nash equilibria of \textbf{G1} that can be constructed from the Markov perfect equilibria of \textbf{G2} as the \emph{common information based Markov perfect equilibria of} \textbf{G1}. 
\begin{definition}
 A strategy profile $(\VEC g^1, \VEC g^2)$ of the form $\VEC U^i_t = g^i_t(\VEC P^i_t, \Pi_t), i=1,2,$  is called a \emph{common information based Markov perfect   equilibrium} for game \textbf{G1} if the corresponding strategies of game \textbf{G2} defined as
\[  \psi^i_t(\pi_t):= g^i_t(\cdot, \pi_t), \]
form a Markov perfect equilibrium of \textbf{G2}.
\end{definition}
The following theorem provides a necessary and sufficient condition for a strategy profile to be a Markov perfect equilibrium of \textbf{G2}.
\begin{theorem}\label{thm:equlb_condition}
Consider a strategy pair $(\psi^1,\psi^2)$ such that at each time $t$, the strategies select prescriptions based only on the realization  of the common information based belief $\pi_t$, that is,
\[ \gamma^i_t = \psi^i_t(\pi_t), ~~~~~~~~ i=1,2\]
A necessary and sufficient condition for $(\psi^1,\psi^2)$ to be a Markov perfect equilibrium of \textbf{G2} is that they satisfy the following 
conditions:
\begin{enumerate}
\item For each possible realization $\pi$ of $\Pi_T$, define  the value function for virtual player $1$:
\begin{align}
  &V^1_T(\pi) := \min_{\tilde{\gamma}^1} \mathds{E}[c^1(\VEC X_t, {\Gamma}^1_T(\VEC P^1_T), \Gamma^{2}_T(\VEC P^{2}_T))|\Pi_T=\pi,\Gamma^1_T = \tilde{\gamma}^1, \Gamma^2_T = \psi^{2}_T(\pi)] \label{eq:dpeq1} 
\end{align}
 Then, $\psi^1_T(\pi)$ must be a minimizing $\tilde{\gamma}^1$ in the definition of $V^1_T(\pi)$. 
Similarly, define  the value function for virtual player $2$:
\begin{align}
  &V^2_T(\pi) := \min_{\tilde{\gamma}^2} \mathds{E}[c^2(\VEC X_t, {\Gamma}^1_T(\VEC P^1_T), {\Gamma}^{2}_T(\VEC P^{2}_T))|\Pi_T=\pi,\Gamma^1_t = \psi^{1}_T(\pi), \Gamma^2_T = \tilde{\gamma}^2] \label{eq:dpeq2}
\end{align}
 Then, $\psi^2_T(\pi)$ must be a minimizing $\tilde{\gamma}^2$ in the definition of $V^2_T(\pi)$.
~~\medskip \\
\item For $t=T-1,\ldots,1$ and for each possible realization $\pi$ of $\Pi_t$, define recursively the value functions for virtual player $1$:
\begin{align}
  &V^1_t(\pi) := \min_{\tilde{\gamma}^1}\mathds{E}[c^1(\VEC X_t, \Gamma^1_t(\VEC P^1_t), \Gamma^{2}_t(\VEC P^{2}_t)) +  V^1_{t+1}(\Pi_{t+1})|\Pi_t=\pi, \Gamma^1_t = \tilde{\gamma}^1, \Gamma^2_t = \psi^{2}_t(\pi)]\label{eq:dpeq3}
  \end{align}
  where  $\Pi_{t+1} = F_t(\Pi_t,\VEC Z_{t+1})$. %and $\VEC Z_{t+1}$ is the increment in common information generated according to \eqref{eq:commoninfo}, \eqref{eq:observation} and \eqref{eq:state} when control actions $\VEC U^i_t =\tilde{\gamma}^1(\VEC P^1_t)$ and $\VEC U^2_t = \gamma^{2}_t(\VEC P^{2}_t)$ are used.
   Then, $\psi^1_t(\pi)$ must be a minimizing $\tilde{\gamma}^1$ in the definition of $V^1_t(\pi)$. Similarly, define recursively the value functions for virtual player $2$:
\begin{align}
  &V^2_t(\pi) := \min_{\tilde{\gamma}^2}\mathds{E}[c^2(\VEC X_t, \Gamma^1_t(\VEC P^1_t), {\Gamma}^{2}(\VEC P^{2}_t)) +  V^2_{t+1}(\Pi_{t+1})|\Pi_t=\pi,  \Gamma^1_t = \psi^{1}_t(\pi), \Gamma^2_t = \tilde{\gamma}^2]\label{eq:dpeq4}
  \end{align}
  where $\Pi_{t+1} = F_t(\Pi_t,\VEC Z_{t+1})$. Then, $\psi^2_t(\pi)$ must be a minimizing $\tilde{\gamma}^2$ in the definition of $V^2_t(\pi)$.
\end{enumerate}
\end{theorem}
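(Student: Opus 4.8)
The plan is to establish Theorem~\ref{thm:equlb_condition} by exploiting the fact, already secured by Lemmas~\ref{lemma:markovprop} and~\ref{lemma:infostatelemma}, that $\Pi_t$ is a controlled Markov process for game \textbf{G2} whose transition kernel depends on the prescriptions $\gamma^1_t,\gamma^2_t$ only through the current state $\pi_t$, and that best responses to $\Pi_t$-measurable strategies may be taken $\Pi_t$-measurable. Given this, game \textbf{G2} restricted to Markov strategies is, for each fixed opponent strategy, a standard Markov decision problem, so the heart of the argument is to merge two single-agent dynamic programming verifications with the fixed-point (equilibrium) condition and the subgame-perfection requirement.

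First I would argue \emph{sufficiency}: suppose $(\psi^1,\psi^2)$ satisfy the recursions \eqref{eq:dpeq1}--\eqref{eq:dpeq4}. Fix virtual player $2$'s strategy at $\psi^2$. By Lemma~\ref{lemma:markovprop}, virtual player $1$ then faces an MDP with state $\Pi_t$, running cost $\mathds{E}[c^1(\VEC X_t,\Gamma^1_t(\VEC P^1_t),\Gamma^2_t(\VEC P^2_t))\mid \Pi_t=\pi,\Gamma^1_t=\gamma^1,\Gamma^2_t=\psi^2_t(\pi)]$, and transition $\Pi_{t+1}=F_t(\Pi_t,\VEC Z_{t+1})$ whose law is determined by $(\pi_t,\gamma^1_t,\psi^2_t(\pi_t))$. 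A backward induction on $t$ from $T$ to $1$ shows that $V^1_t$ as defined is exactly the optimal cost-to-go in this MDP starting from $\Pi_t=\pi$: the base case is \eqref{eq:dpeq1}, and the inductive step is the standard Bellman verification using the tower property of conditional expectation over $\Pi_{t+1}$, together with Lemma~\ref{lemma:infostatelemma} to restrict attention to $\Pi$-measurable (and hence without loss also to history-dependent) responses. Since $\psi^1_t(\pi)$ attains the minimum in each \eqref{eq:dpeq3}, the strategy $\psi^1$ is optimal against $\psi^2$ in every subgame indexed by $(t,\pi)$; symmetrically $\psi^2$ is optimal against $\psi^1$. Hence $(\psi^1,\psi^2)$ is a Nash equilibrium of every subgame of \textbf{G2}, i.e., a Markov perfect equilibrium. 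For \emph{necessity}, suppose $(\psi^1,\psi^2)$ is a Markov perfect equilibrium. Then in particular, for each $(t,\pi)$, $\psi^1$ is a best response to $\psi^2$ in the subgame started at $\Pi_t=\pi$; the value of that subgame for player $1$ satisfies the Bellman equation of the MDP above, which is precisely \eqref{eq:dpeq3} (resp.\ \eqref{eq:dpeq1} at $t=T$), and optimality forces $\psi^1_t(\pi)$ to be a minimizer; the same holds for player $2$. This yields the displayed conditions.

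A few technical points need care. One is that $\Pi_t$ is a continuous-valued state (a point in a simplex), so ``every subgame'' must be interpreted as subgames reachable after any common-information history $\VEC c_t$; since $\Pi_t$ is a deterministic function of $\VEC C_t$ (Assumption~\ref{assm:separation}), the subgames are naturally indexed by realizations $\pi_t$, and one should note that it suffices to require the equilibrium conditions at those $\pi$ that arise with positive probability, though stating them for all $\pi$ (as the theorem does) is harmless and matches the forward-recursion bookkeeping. A second point is the precise sense in which $V^i_{t+1}(\Pi_{t+1})$ is integrated in \eqref{eq:dpeq3}: one must verify that, conditioned on $\Pi_t=\pi$, $\Gamma^1_t=\tilde\gamma^1$, $\Gamma^2_t=\psi^2_t(\pi)$, the law of $\VEC Z_{t+1}$ (and hence of $\Pi_{t+1}=F_t(\pi,\VEC Z_{t+1})$) is well defined and independent of the rest of the past — this is exactly the content of Lemma~\ref{lemma:markovprop} combined with \eqref{eq:evolution1}.

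The main obstacle I anticipate is the necessity direction, specifically the passage from ``$\psi^1$ is a Nash/best response within Markov strategies in every subgame'' to ``$\psi^1$ is a best response among \emph{all} (history-dependent, possibly non-Markov) strategies in every subgame,'' which is what a genuine Markov perfect equilibrium requires. This is where Lemma~\ref{lemma:infostatelemma} does the real work: it guarantees that against the $\Pi_t$-measurable strategy $\psi^2$, player $1$ loses nothing by restricting to $\Pi_t$-measurable strategies, so the MDP value computed over Markov policies coincides with the value over all policies, and the Bellman characterization is therefore both necessary and sufficient for full optimality. I would make sure to invoke Lemma~\ref{lemma:infostatelemma} at exactly this juncture, and to note that it applies subgame-by-subgame because the MDP structure (Lemma~\ref{lemma:markovprop}) is time-homogeneous in form and the tail of the problem from any $(t,\pi)$ is again an MDP of the same type.
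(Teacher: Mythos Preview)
Your proposal is correct and follows essentially the same approach as the paper: reduce each player's best-response problem against the fixed Markov strategy of the opponent to a Markov decision problem with state $\Pi_t$ (via Lemmas~\ref{lemma:markovprop} and~\ref{lemma:infostatelemma}), and then identify the conditions \eqref{eq:dpeq1}--\eqref{eq:dpeq4} with the dynamic programming optimality conditions for that MDP, applied subgame by subgame. The paper's proof is terser and leaves the necessity direction as ``a similar MDP based argument,'' whereas you spell out more carefully the role of Lemma~\ref{lemma:infostatelemma} in ensuring that the MDP value over Markov policies coincides with the value over all history-dependent policies; this is a welcome elaboration but not a different method.
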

\begin{proof}
See Appendix \ref{sec:equlb_condition}
\end{proof}
%We want to consider a stronger equilibrium concept for the game \textbf{G2} that is similar to Markov perfect equilibrium of symmetric games. In particular, we would like to find decision strategies $\psi^1, \psi^2$ such that:
%\begin{enumerate}
%\item The strategies select prescriptions based on the common belief $\Pi_t$. (Markov property)
%\item At each time $t$ and each possible realization $\pi$ of the belief $\Pi_t$, the strategies form a Nash equilibrium of the sub-game starting from time $t$ with the belief $\pi$.
%\end{enumerate}

%(The above properties are essentially trying to imitate the concept of Markov perfect equilibrium of symmetric information games, that is, the strategies should form a sub-game perfect equilibrium and only use the relevant history. This needs further formalization.)
%\par
Theorem~\ref{thm:equlb_condition} suggest that one could follow a backward inductive procedure to find equilibrium strategies for the virtual players. Before describing this backward procedure in detail, we make a simple but useful observation. In \eqref{eq:dpeq1}-\eqref{eq:dpeq4}, since the $\tilde{\gamma}^i$ enters the expectation only as $\tilde{\gamma}^i(\VEC P^i)$, it suggests that we may be able to carry out the minimization over $\tilde{\gamma}^i$ by separately minimizing over $\tilde{\gamma}^i(\VEC p^i)$ for all possible $\VEC p^i$. This observation leads us to the backward induction procedure described in the next section.
\begin{Remark}
Note that if Assumption~\ref{assm:separation} were not true, then according to Lemma \ref{lemma:evolution}, $\Pi_{t+1} = F_t(\Pi_t,\Gamma^1_t,\Gamma^2_t,\VEC Z_{t+1})$. In this case, the entire prescription $\tilde{\gamma}^i$ will affect the second term in the expectation in \eqref{eq:dpeq3}-\eqref{eq:dpeq4}, and we could not hope to carry out the minimization over $\tilde{\gamma}^i$ by separately minimizing over $\tilde{\gamma}^i(\VEC p^i)$ for all possible $\VEC p^i$.
\end{Remark}
\subsection{Backward Induction Algorithm for Finding Equilibrium}\label{sec:backward_ind}
We can now describe a backward inductive procedure to find a Markov perfect equilibrium of game \textbf{G2} using a sequence of one-stage Bayesian games. We proceed as follows: \\
\underline{\textbf{Algorithm 1:}}
\begin{enumerate}
\item At the terminal time $T$, for each realization $\pi$ of the common information based belief at time $T$, we define a one-stage Bayesian game  $SG_T(\pi)$ where 
\begin{enumerate}
\item The probability distribution on $(\VEC X_T, \VEC P^1_T, \VEC P^2_T)$ is $\pi$.
\item  Agent\footnote{Agent $i$ can be thought to be the same as controller $i$. We use a different name here in order to maintain the distinction between games \textbf{G1} and $SG_T(\pi)$.} $i$  observes $\VEC P^i_T$ and chooses action $\VEC U^i_T$, $i=1,2$. 
\item  Agent $i$'s cost is $c^i(\VEC X_T,\VEC U^1_T,\VEC U^2_T)$, $i=1,2$. 
\end{enumerate}
A Bayesian Nash equilibrium of this game is a pair of strategies $\gamma^i,i=1,2,$ for the agents which map their observation $\VEC P^i_T$ to their  action $\VEC U^i_T$ such that for any realization $\VEC p^i$, $\gamma^i(\VEC p^i)$ is a solution of the minimization problem 
\[ \min_{\VEC u^i} \mathds{E}^{\pi}[c^i(\VEC X_T,\VEC u^i, \gamma^{j}(\VEC P^{j}_T))|\VEC P^i_T = \VEC p^i ], \]
where $j \neq i$ and the superscript $\pi$ denotes that the expectation is with respect to the distribution $\pi$. 
(See \cite{Osborne,Myerson_gametheory} for a definition of Bayesian Nash equilibrium.)
 If a Bayesian Nash equilibrium $\gamma^{1*},\gamma^{2*}$ of $SG_T(\pi)$ exists, denote the corresponding expected equilibrium costs as $V^i_T(\pi), i=1,2$ and define $\psi^i_T(\pi) := \gamma^{i*}$, $i=1,2$.

\item At time $t < T$, for each realization $\pi$ of the common information based belief at time $t$, we define the one-stage Bayesian   game $SG_t(\pi)$ where
\begin{enumerate}
\item The probability distribution on $(\VEC X_t, \VEC P^1_t, \VEC P^2_t)$ is $\pi$.
\item  Agent $i$ observes $\VEC P^i_t$ and chooses action $\VEC U^i_t$, $i=1,2$. 
\item  Agent $i$'s cost is $c^i(\VEC X_t,\VEC U^1_t,\VEC U^2_t) + V^i_{t+1}(F_t(\pi, \VEC Z_{t+1}))$, $i=1,2$. 
\end{enumerate}
Recall that the belief for the next time step is $\Pi_{t+1} = F_t(\pi, \VEC Z_{t+1})$ and $\VEC Z_{t+1}$ is given by \eqref{eq:commoninfo}.
A Bayesian Nash equilibrium of this game is a pair of strategies $\gamma^i,i=1,2,$ for the agents  which map their observation $\VEC P^i_t$ to their  action $\VEC U^i_t$ such that for any realization $\VEC p^i$, $\gamma^i(\VEC p^i)$ is a solution of the minimization problem 
\[ \min_{\VEC u^i} \mathds{E}^{\pi}[c^i(\VEC X_t,\VEC u^i, \gamma^{j}(\VEC P^{j}_t)) +V^i_{t+1}(F_t(\pi, \VEC Z_{t+1})) |\VEC P^i_t = \VEC p^i ], \]
where $j \neq i, i,j=1,2,$ and $\VEC Z_{t+1}$ is the increment in common information generated according to \eqref{eq:commoninfo}, \eqref{eq:observation} and \eqref{eq:state} when control actions $\VEC U^i_t =\VEC u^i$ and $\VEC U^j_t = \gamma^{j}(\VEC P^{j}_t)$ are used. The expectation is with respect to the distribution $\pi$.
If a Bayesian Nash equilibrium $\gamma^{1*},\gamma^{2*}$ of $SG_t(\pi)$ exists, denote the corresponding expected equilibrium costs as $V^i_t(\pi), i=1,2$ and define $\psi^i_t(\pi) := \gamma^{i*}$, $i=1,2$.
%Consider any Bayesian Nash equilibrium $\gamma^{1*},\gamma^{2*}$ of $SG_t(\pi)$ and denote the equilibrium costs as $V^i_t(\pi)$. Define $\psi^i_t(\pi) := \gamma^{i*}$, $i=1,2$. 
\end{enumerate}
\par
 \begin{theorem}\label{thm:backward_ind}
  The strategies $\psi^1,\psi^2$ defined by the backward induction procedure described in Algorithm 1 form a Markov perfect equilibrium of game \textbf{G2}. Consequently, strategies $\VEC g^1, \VEC g^2$ defined as
\[ g^i_t(\cdot, \pi_t) := \psi^i_t(\pi_t), \]
$i=1,2$, $t=1,2,\ldots,T$ form a common information based Markov perfect equilibrium of game \textbf{G1}.
\end{theorem}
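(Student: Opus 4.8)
The plan is to verify that the strategy profile $(\psi^1,\psi^2)$ produced by Algorithm~1 satisfies the necessary and sufficient conditions of Theorem~\ref{thm:equlb_condition}, and then to read off the statement for \textbf{G1} directly from the definition of a common information based Markov perfect equilibrium together with Theorem~\ref{thm:equiv}. The heart of the argument is the decomposition observation made just before Algorithm~1: the minimization over a prescription $\tilde\gamma^i$ appearing in \eqref{eq:dpeq1}--\eqref{eq:dpeq4} splits into independent pointwise minimizations over the action $\tilde\gamma^i(\VEC p^i)$, one for each realization $\VEC p^i$ of the private information, and these pointwise problems are exactly the defining conditions of a Bayesian Nash equilibrium of the one-stage game $SG_t(\pi)$.

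Concretely, I would fix $t$ and a realization $\pi$ of $\Pi_t$, and expand the objective in \eqref{eq:dpeq3} with virtual player $2$ frozen at $\Gamma^2_t=\psi^2_t(\pi)$. Two facts are used. First, by definition of the common information based belief, the conditional law of $(\VEC X_t,\VEC P^1_t,\VEC P^2_t)$ given $\Pi_t=\pi$ is $\pi$ itself, so $\mathds{E}[\cdot\mid \Pi_t=\pi]$ coincides with $\mathds{E}^{\pi}[\cdot]$, which is the expectation used in $SG_t(\pi)$. Second, by Assumption~\ref{assm:separation} the next belief is $\Pi_{t+1}=F_t(\Pi_t,\VEC Z_{t+1})$ (see \eqref{eq:evolution1}), so the continuation term $V^1_{t+1}(F_t(\pi,\VEC Z_{t+1}))$ carries no residual dependence on $\tilde\gamma^1$ beyond the dependence of $\VEC Z_{t+1}$ on $\tilde\gamma^1(\VEC P^1_t)$ through \eqref{eq:commoninfo}, \eqref{eq:observation}, \eqref{eq:state}. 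Conditioning on $\VEC P^1_t=\VEC p^1$ and using that $\tilde\gamma^1$ enters only through $\tilde\gamma^1(\VEC p^1)$, the minimization over the function $\tilde\gamma^1$ reduces, for each $\VEC p^1$, to
\[ \min_{\VEC u^1}\ \mathds{E}^{\pi}\big[\,c^1(\VEC X_t,\VEC u^1,\psi^2_t(\pi)(\VEC P^2_t)) + V^1_{t+1}(F_t(\pi,\VEC Z_{t+1}))\,\big|\,\VEC P^1_t=\VEC p^1\,\big], \]
with $\VEC Z_{t+1}$ generated as in $SG_t(\pi)$; this is precisely the pointwise characterization of the equilibrium strategy $\gamma^{1*}$ of $SG_t(\pi)$, and the symmetric computation handles virtual player $2$ and \eqref{eq:dpeq4}. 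The base case $t=T$ is the same computation with the continuation term absent, matching \eqref{eq:dpeq1}--\eqref{eq:dpeq2}.

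A downward induction on $t$ then assembles the pieces: assuming $V^i_{t+1}$ and $\psi^i_{t+1}$ (as produced by Algorithm~1) coincide with the value functions and minimizing prescriptions of Theorem~\ref{thm:equlb_condition}, the equivalence above shows the Bayesian Nash equilibrium costs of $SG_t(\pi)$ equal $V^i_t(\pi)$ of \eqref{eq:dpeq3}--\eqref{eq:dpeq4} and that $\psi^i_t(\pi)=\gamma^{i*}$ is a minimizer there; Lemma~\ref{lemma:markovprop} is what makes the continuation cost depend on the past only through the current belief, so that the stagewise Bayesian games really do capture the subgame-perfection requirement, and Lemma~\ref{lemma:infostatelemma} justifies restricting attention to $\Pi_t$-measurable strategies. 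Hence $(\psi^1,\psi^2)$ meets the conditions of Theorem~\ref{thm:equlb_condition} and is a Markov perfect equilibrium of \textbf{G2}. Finally, since $\Pi_t$ is a function of $\VEC C_t$, the profile $g^i_t(\cdot,\pi_t):=\psi^i_t(\pi_t)$ is a legitimate strategy of \textbf{G1}; being an MPE it is in particular a Nash equilibrium of \textbf{G2}, so Theorem~\ref{thm:equiv} makes it a Nash equilibrium of \textbf{G1}, and by the definition of common information based Markov perfect equilibrium it is of the claimed form.

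The main obstacle is making the pointwise decomposition of the prescription minimization fully rigorous: one must check that conditioning on $\VEC P^i_t=\VEC p^i$ partitions the expectation so that the terms coupling different values of $\VEC p^i$ vanish, that $V^i_{t+1}(F_t(\pi,\VEC Z_{t+1}))$ genuinely has no dependence on $\tilde\gamma^i$ other than through $\VEC Z_{t+1}$ (this is exactly where \eqref{eq:evolution1}, hence Assumption~\ref{assm:separation}, is indispensable — without it Lemma~\ref{lemma:evolution} forces the whole of $\tilde\gamma^i$ into the continuation and the decomposition fails), and that the existence caveat is respected: Theorem~\ref{thm:equlb_condition} only delivers an MPE \emph{provided} each stage admits an equilibrium, so the induction hypothesis must include that $SG_t(\pi)$ has a Bayesian Nash equilibrium for every reachable $\pi$. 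A secondary, routine point is confirming that the equilibrium condition "Nash on every subgame of \textbf{G2}" is implied by the stagewise construction, which again rests on the Markov property of $\Pi_t$.
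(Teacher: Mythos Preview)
Your proposal is correct and follows essentially the same route as the paper: both arguments verify the conditions of Theorem~\ref{thm:equlb_condition} by identifying the prescription-level minimization in \eqref{eq:dpeq1}--\eqref{eq:dpeq4} with the pointwise Bayesian Nash conditions defining $SG_t(\pi)$. The paper's version is simply the terse ``averaging'' direction of the equivalence you spell out---it takes the pointwise BNE inequality at each $\VEC p^i$ and integrates over $\VEC P^i_t$ under $\pi$ to obtain the prescription inequality---whereas you additionally articulate why the decomposition is legitimate (Assumption~\ref{assm:separation} keeps $\tilde\gamma^i$ out of the belief update except through $\VEC Z_{t+1}$) and flag the existence caveat; these are useful elaborations but not a different method.
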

\begin{proof}
To prove the result, we just need to observe that the strategies defined by the backward induction procedure of Algorithm 1  satisfy the  conditions of Theorem \ref{thm:equlb_condition} and hence form a Markov perfect equilibrium of game \textbf{G2}. See Appendix \ref{sec:backward_indproof} for a more detailed proof.%See Appendix \ref{sec:backward_ind}.
\end{proof}

%%%%%%%%%%%%%%%%%%%%%%\input Finite_Example_new%%%%%%%%%%%%%%%%%%%%%%%%%%%%%%%%%
% \section{Finite Example}
\subsection{An Example Illustrating Algorithm 1}\label{sec:Num_example}
We consider an example of game \textbf{G1} where the (scalar) state $X_t$ and the (scalar) control actions $U^1_t,U^2_t$ take value in the set $\{0,1\}$. The state evolves as a controlled Markov chain depending on the two control actions according to the state transition probabilities:
\begin{align}
\pr{ X_{t+1} = 0\big| X_t = 0, U^1_t =  U^2_t}&=\frac{1}{4},\notag\\
\pr{  X_{t+1} = 0\big| X_t = 1, U^1_t = U^2_t}&=\frac{1}{2},\notag \\
\pr{  X_{t+1} = 0\big| X_t =0, U^1_t\neq  U^2_t}&=\pr{  X_{t+1} = 0\big| X_t =1, U^1_t\neq  U^2_t} =\frac{2}{5}.\label{eq:exdynamics}
\end{align}
The initial state is assumed to be equi-probable, i.e., $\pr{X_1=0}=\pr{X_1=1}=1/2$. The first controller observes the state perfectly, while the second controller observes the state through a binary symmetric channel with probability of error $1/3$. Thus,
\[Y^1_t = X_t, \qquad Y^2_t = \left\{\begin{array}{ll} X_t & \text{ with probability } \frac{2}{3},\\ 1-X_t & \text{ with probability }\frac{1}{3}. \end{array}\right.\]
%The behavior of observation channel of the second controller is considered to be i.i.d. across time. %The information available to the two controllers at time $t$ is $\VEC I^1_t = \{Y^1_{1:t},Y^2_{1:t-1},U^{1:2}_{1:t-1}\}$ and $\VEC I^2_t = \{Y^1_{1:t-1}, Y^2_{1:t},U^{1:2}_{1:t-1}\}$.
The controllers share the observations and actions with a delay of one time step. Thus, the common information and private informations at  time step $t$ are given as
\beqq{\VEC C_t = \{ X_{1:t-1}, Y^2_{1:t-1},U^{1}_{1:t-1},U^2_{1:t-1}\} , \quad \VEC P^1_t = \{X_t\},\quad \VEC P^2_t  = \{Y^2_t\}.}
%and the increment in common information $\VEC Z_{t+1} = \VEC C_{t+1}\setminus \VEC C_{t} = \{Y^{1}_{t},U^{1:2}_t\}$. 
%It is easy to verify that the above example satisfies Assumption~\ref{assm:infoevolution}.
In the equivalent game with virtual players,  the decision of the $i^{th}$ virtual player, $\Gamma^i_t$, is a function that maps $\ALPHABET Y^i_t:=\{0,1\}$ to $\ALPHABET U^i_t:=\{\text{0,1}\}$.

%Also, at any time $t$, $\pi_t$ can be one of only two possible distributions which we denote by $\pi^0_t:=\pr{X_t=0,Y^2_t=y^2|C_t}$ and $\pi^1_t:=\pr{X_t=1,Y^2_t=y^2|C_t}$.
The common information based belief for this case is the belief on $(X_t,Y^2_t)$ given the common information $x_{1:t-1}, y^2_{1:t-1},u^{1}_{1:t-1},u^2_{1:t-1}$, that is,
\begin{align}
&\pi_t(x,y^2) = \pr{X_t=x,Y^2_t=y^2\big|x_{1:t-1}, y^2_{1:t-1},u^{1}_{1:t-1},u^2_{1:t-1}} \notag \\
&= \pr{X_t = x\big| x_{t-1}, u^{1}_{t-1},u^2_{t-1}}\left(\frac{2}{3}\ind{y^2=x}+\frac{1}{3}\ind{y^2\neq x} \right). \label{eq:ex_update}
\end{align}
% Since the state at the time step $t$ is dependent only on $\{X_{t-1},U^1_{t-1},U^2_{t-1}\}$, we can write the distributions as
%\beqq{\pi^i_t = \pr{X_t = i,Y^2_t=y^2\big| C_t}=\pr{X_t = i\big| X_{t-1}, U^{1:2}_{t-1}}\left(\frac{2}{3}\ind{y^2=i}+\frac{1}{3}\ind{y^2\neq i} \right), i=0,1.}
%It should be noted that the belief at every time step depends only on two parameters: the state $X_{t-1}$ and the number $\ind{U^1_{t-1}=U^2_{t-1}}$. Thus, there are three different beliefs corresponding to the pairs $(X_{t-1},\ind{U^1_{t-1}=U^2_{t-1}})=(0,1),(1,1)$ and the same belief for pairs $(0,0)$ and $(1,0)$. By a slight abuse of notation, we represent $\pi^i_t$ as a function of the pair $(X_{t-1},\ind{U^1_{t-1}=U^2_{t-1}})$.
The above equation implies that the distribution $\pi_t $ is completely specified by $x_{t-1}, u^{1}_{t-1},u^2_{t-1}$. That is, 
\begin{equation}
\pi_t= F_{t-1}(x_{t-1},u^1_{t-1},u^2_{t-1}). \label{eq:ex_update2}
\end{equation} 
(Note that  $F_{t-1}$ is a vector-valued function whose components are given by \eqref{eq:ex_update} for all $x,y^2 \in \{0,1\}$.)
% Also note that using Bayes' rule, we can write the conditional probability of $X_t$ given the private information $Y^2_t$ of the second player as
%\beqq{\pr{X_t=i|Y^2_t=y^2,\pi_t\left(x_{t-1},\ind{u^1_{t-1}=u^2_{t-1}}\right)} = \frac{\pi^i_t\left(x_{t-1},\ind{u^1_{t-1}=u^2_{t-1}}\right)}{\sum_{i=0}^1 \pi^i_t\left(x_{t-1},\ind{u^1_{t-1}=u^2_{t-1}}\right)}.}
%The belief evolves as
%\beq{\label{eq:ex_update}\pi_{t+1}(x_t,\ind{u^1_t=u^2_t}) &=& \pr{X_{t+1} = x,Y^2_{t+1}=y^2\big| C_{t+1}}\nonumber\\
%& =& \pr{X_{t+1} = x\big| x_{t}, u^{1:2}_{t}}\left(\frac{2}{3}\ind{y^2=x}+\frac{1}{3}\ind{y^2\neq x} \right)\nonumber\\
%&=& F_t(\{x_t,u^{1:2}_{t}\}).}
%Since the state $x_t$ at the time step $t$ is common information at the time step $t+1$, the evolution of belief does not depend on the belief $\pi_t$ at the time step $t$. 
The cost functions $c^i(x,u^1,u^2)$ for various values of state and actions are described by the following matrices
\beqq{\begin{array}{cc} 
\qquad x_t =0 & \qquad x_t =1\\
\payoff{1,0 & 0,1}{0,1 & 0,0}& \payoff{0,0 & 1,1}{0,1 & 1,0}\\
\end{array}\label{eq:costmatrix}}
where the rows in each matrix correspond to controller 1's actions and the columns correspond to controller 2's actions. The first entry in each element of the cost matrix is controller 1's cost and second entry is controller 2's cost.
%===================================================================================================
%===================================================================================================
%===================================================================================================

\underline{\emph{Applying Algorithm 1:}}\\
We now use Algorithm 1 for a two-stage version of the game described above.
\begin{enumerate}
\item At the terminal time step $T=2$, for a realization $\pi$ of the common information based belief at time $2$, we define a one stage game $SG_2(\pi)$ where
\begin{enumerate}
\item  The probability distribution on $(X_2,Y^2_2)$ is $\pi$.
\item  Agent $1$ observes $X_2$ and selects an action $U^1_2$; Agent 2 observes $Y^2_2$ and selects $U^2_2$. 
\item  Agent $i$'s cost is $c^i(X_2,U^1_2,U^2_2)$, given by the matrices defined above. 
\end{enumerate}
A Bayesian Nash equilibrium of this game is a pair of strategies $\gamma^1,\gamma^2$, such that 
 \begin{itemize}
 \item For $x =0,1$, $\gamma^1(x)$ is a solution of $\min_{u^1} \mathbb{E}^{\pi}[c^1(X_2,u^1,\gamma^2(Y^2_2))|X_2=x]$.
 \item For $y=0,1$, $\gamma^2(y)$ is a solution of $ \min_{u^2} \mathbb{E}^{\pi}[c^2(X_2,\gamma^1(X_2),u^2)|Y^2_2=y]$. 
% \item $\gamma^2(1)$ is a solution of the minimization problem 
% \[ \min_{u^2} \mathbb{E}^{\pi^0}[c^2(X_2,\gamma^1,u^2)|Y^2_2=1]. \]
\end{itemize}

It is easy to verify that \[\gamma^1(x) := 1,\quad \gamma^2(y) := 1\text{ for all }x,y\in \{0,1\}\] is a Bayesian Nash equilibrium of $SG_2(\pi)$. %This is because playing $1$ is weakly dominating strategy for the first player irrespective of the state, for which the best response of the second player is to play 1, again irrespective of the state. 
The expected equilibrium cost for agent $i$ is 
\begin{align}
&V^i_2(\pi) = \mathbb{E}^{\pi}[c^i(X_2,1,1)] = \left\{\begin{array}{ll} \pi(X_2=1) & \text{ for } i=1, \\ 0 & \text{ for }i=2 \end{array}\right. \label{eq:ex_stage2costs}
\end{align}
where $\pi(X_2=1)$ is the probability that $X_2 =1$ under the distribution $\pi$.
%where the expectation is with respect to the distribution $\pi$. Thus,
%\begin{align}
%&V^1_2(\pi) = \mathbb{E}^{\pi}[c^1(X_2,1,1)] = 3/4, \notag \\
%&V^2_2(\pi)= \mathbb{E}^{\pi}[c^2(X_2,1,1)]  = 0.
%\end{align}
From the above Bayesian equilibrium strategies, we define the virtual players's decision rules for time $T=2$ as $\psi^i_2(\pi) = \gamma^i$, $i=1,2$.
\item At time $t=1$, since there is no common information, the common information based belief $\pi_1$ is simply the prior belief on $(X_1,Y^2_1)$. Since the initial state is equally likely to be $0$ or $1$, 
\[\pi_1(x,y^2) = \frac{1}{2}\left(\frac{2}{3}\ind{y^2=x}+\frac{1}{3}\ind{y^2\neq x} \right)\]
We define the one-stage Bayesian game $SG_1(\pi_1)$ where 
\begin{enumerate}
\item  The probability distribution on $(X_1,Y^2_1)$ is $\pi_1$.
\item  Agent $1$ observes $X_1$ and selects an action $U^1_1$; Agent 2  observes $Y^2_1$ and selects $U^2_1$. 
\item  Agent $i$'s cost is given by $c^i(X_1, U^1_1, U^2_1) + V^i_{2}(F_1(X_1,U^1_1,U^2_1))$, where $F_1$, defined by \eqref{eq:ex_update2} and \eqref{eq:ex_update}, gives the common information belief at time $2$ as a function of $X_1,U^1_1,U^2_2$, and $V^i_2$, defined in \eqref{eq:ex_stage2costs}, gives the expected equilibrium cost for time $2$  as a function of the common information belief at time $2$. 
\par
 For example, if $U^1_1 \neq U^2_1$, then \eqref{eq:ex_update}, \eqref{eq:ex_update2} and \eqref{eq:ex_stage2costs} imply $V^1_{2}(F_1(X_1,U^1_1,U^2_1)) = 3/5$. Similarly, if $U^1_1 = U^2_1$, then \eqref{eq:ex_update}, \eqref{eq:ex_update2} and \eqref{eq:ex_stage2costs} imply $V^1_{2}(F_1(0,U^1_1,U^2_1)) = 3/4$ and $V^1_{2}(F_1(1,U^1_1,U^2_1)) = 1/2$. Also, \eqref{eq:ex_stage2costs}  implies that $V^2_2$ is identically $0$.
\end{enumerate}
A Bayesian Nash equilibrium of this game is a pair of strategies $\delta^1,\delta^2$ such that 
\begin{itemize}
\item For $x=0,1$, $\delta^1(x)$ is a solution of \[\min_{u^1} \mathbb{E}^{\pi_1}[c^1(X_1,u^1,\delta^2(Y^2_1)) + V^1_2(F_1(X_1,u^1,\delta^2(Y^2_1)))|X_1=x].\]
% where $F_1(\{X_1,U^1_1,U^2_1\})$ is $\pi$ with probability $1/4$ and $\pi^1$ with probability $3/4$ when $u^1 = \delta^2(x^2_1)$ and $F_1(F_1(\{X_1,U^1_1,U^2_1\}))$ is $\pi^0$ with probability $3/4$ and $\pi^1$ with probability $1/4$ when $u^1 \neq \delta^2(x^2_1)$.
\item For $y=0,1$, $\delta^2(y)$ is a solution of 
\[ \min_{u^2} \mathbb{E}^{\pi_1}[c^2(X_1,\delta^1(X_1),u^2) +V^2_2(F_1(X_1,\delta^1(X_1),u^2))|Y^2_1=y]. \]
% where $F_1(F_1(\{X_1,U^1_1,U^2_1\}))$ is $\pi^0$ with probability $1/4$ and $\pi^1$ with probability $3/4$ when $\delta^1 = u^2$ and $F_1(F_1(\{X_1,U^1_1,U^2_1\}))$ is $\pi^0$ with probability $3/4$ and $\pi^1$ with probability $1/4$ when $\delta^1 \neq u^2$.
% \item $\delta^2(1)$ is a solution of the minimization problem 
% \[ \min_{u^2} \mathbb{E}^{\pi^0}[c^2(X_2,\delta^1,u^2) +V^2_2(F_1(F_1(\{X_1,U^1_1,U^2_1\})))|Y^2_2=1], \]
% where $F_1(F_1(\{X_1,U^1_1,U^2_1\}))$ is $\pi^0$ with probability $1/4$ and $\pi^1$ with probability $3/4$ when $\delta^1 = u^2$ and $F_1(F_1(\{X_1,U^1_1,U^2_1\}))$ is $\pi^0$ with probability $3/4$ and $\pi^1$ with probability $1/4$ when $\delta^1 \neq u^2$.
\end{itemize}
%Note that under the prior belief $\pi$ on $X_1,Y^2_1$, given $Y^2_1 =y$ the posteriors of the second agent is $\pr{X_1=0|Y^2_1=0} = 2/3$ and $\pr{X_1=0|Y^2_1=1} = 1/3$.
 It is easy to verify that \[\delta^1(x) = 1-x, \quad \delta^2(y) = 1-y\] is a Bayesian Nash equilibrium of $SG_1(\pi)$. The expected equilibrium costs are 
 \[ V^i_1(\pi_1) = \mathds{E}[c^i(X_1,{\delta}^1(X_1),{\delta}^2(Y^2_1))], \]
 which gives
$V^1_1(\pi_1) = 47/60$ and $V^2_1(\pi_1) =1/3$. 
From the above Bayesian equilibrium strategies, we define the virtual players's decision rules for time $t=1$ as $\psi^i_1(\pi_1) = \delta^i$, $i=1,2$.
 
Since we now know the equilibrium decision rules $\psi^i_t$, $i=1,2,t=1,2$ for the virtual players, we can construct the corresponding control laws for the controllers using Theorem~\ref{thm:backward_ind}. Thus, a common information based Markov perfect equilibrium for the game in this example is given by the strategies:
\beqq{g^1_1(x_1,\pi_1) = \left\{\begin{array}{ll}\text{1} & \text{if } x_1 = 0,\\\text{0} & \text{if } x_1 = 1.  \end{array}\right. \qquad g^2_1(y^2_1,\pi_1) = \left\{\begin{array}{ll}\text{1} & \text{if } y^2_1 = 0,\\\text{0} & \text{if } y^2_1 = 1.  \end{array}\right.}
and
\beqq{g^1_2(x_2,\pi_2) = 1 \qquad g^2_2(y^2_2,\pi_2) = 1. } 
% \beqq{g^i_2(y^i_2,\pi) =  \left\{\begin{array}{ll}\text{0} & \text{if } \pi = \pi^0,\\ 1 - x^2_1 & \text{if } \pi= \pi^1.  \end{array}\right. \qquad g^2_2(x^2_2,\pi) =  \left\{\begin{array}{ll}\text{0} & \text{if } \pi = \pi^0,\\ 1 - x^2_2 & \text{if } \pi= \pi^1.  \end{array}\right.}

\end{enumerate}

\section{Behavioral Strategies and Existence of Equilibrium}\label{sec:behave}
The results of Theorems \ref{thm:equlb_condition} and \ref{thm:backward_ind} provide sufficient conditions for a pair of strategies to be an equilibrium of game \textbf{G2}. Neither of these results addresses the question of existence of equilibrium. In particular, the result of Theorem~\ref{thm:backward_ind} states that the (pure strategy) Bayesian Nash equilibria of the one-stage Bayesian games $SG_t(\pi), t=T,\ldots,1$, may be used to find a Markov perfect equilibrium of game \textbf{G2} and hence a common information based Markov perfect equilibrium of \textbf{G1}. However, the games $SG_t(\pi)$ may not have any (pure strategy) Bayesian Nash equilibrium. 

As is common in finite games, we need to allow for behavioral strategies in order to ensure the existence of equilibria. Toward that end, we now reconsider the model of game \textbf{G1}. At each time $t$, each controller  is now allowed to select a probability distribution $\VEC D^i_t$ over the (finite) set of actions $\mathcal{U}^i_t, i=1,2$ according to a control law of the form:
\begin{equation}
\VEC D^i_t =  g^i_t(\VEC P^i_t, \VEC C_t).
\end{equation} 
The rest of the model is the same as in Section~\ref{sec:model}.  We denote the set of probability distributions over $\mathcal{U}^i_t$ by $\Delta(\mathcal{U}^i_t)$.  

Following exactly the same arguments as in Section~\ref{sec:virtual}, we can define an equivalent game  where virtual players select prescriptions that are functions from the set of private information $\mathcal{P}^i_t$ to the set $\Delta(\mathcal{U}^i_t)$ and establish the result of Theorem \ref{thm:equiv} for this case. A sufficient condition for Markov perfect equilibrium of this game is given by  Theorem \ref{thm:equlb_condition} where $\gamma^i$ are now interpreted as mappings from $\mathcal{P}^i_t$ to $\Delta(\mathcal{U}^i_t)$ (instead of mappings from $\mathcal{P}^i_t$ to $\mathcal{U}^i_t$). Given a Markov perfect equilibrium $(\psi^1,\psi^2)$ of the virtual players' game, the equivalent  strategies $g^i_t(\cdot, \pi):=\psi^i_t(\pi)$ form a common information based Markov perfect equilibrium of game \textbf{G1} in behavioral strategies.

Further, we can follow a backward induction procedure identical to the one used in section~\ref{sec:backward_ind} (Algorithm 1), but now consider mixed strategy Bayesian Nash equilibria of the one-stage Bayesian games $SG_t(\pi)$ constructed there. We proceed as follows:\\
\underline{\textbf{Algorithm 2:}}
\begin{enumerate}
\item At the terminal time $T$, for each realization $\pi$ of the common information based belief at time $T$, consider the  one-stage Bayesian game  $SG_T(\pi)$ defined in Algorithm 1. A mixed strategy $\gamma^i$ for the game $SG_T(\pi)$ is a mapping form $\mathcal{P}^i_T$ to $\Delta(\mathcal{U}^i_T)$. A mixed strategy Bayesian Nash equilibrium of this game is a pair of strategies $\gamma^1,\gamma^2$  such that for any realization $\VEC p^i$, $\gamma^i(\VEC p^i)$  assigns zero probability to any action that is not a solution of the minimization problem 
\[ \min_{\VEC u^i}  \mathds{E}[c^i(\VEC X_t,\VEC u^i, \VEC U^{j}_t) |\VEC P^i_T = \VEC p^i ], \]
where $\VEC U^j_t$ is distributed according to $\gamma^{j}(\VEC P^{j}_t)$. Since $SG_t(\pi)$ is a finite Bayesian game, a mixed strategy equilibrium is guaranteed to exist \cite{Myerson_gametheory}. For any mixed strategy Bayesian Nash equilibrium $\gamma^{1*},\gamma^{2*}$ of $SG_T(\pi)$, denote the expected equilibrium costs as $V^i_T(\pi)$ and define $\psi^i_t(\pi) := \gamma^{i*}$, $i=1,2$. 

\item At time $t < T$, for each realization $\pi$ of the common information based belief at time $t$, consider the one-stage Bayesian  game $SG_t(\pi)$ defined in Algorithm 1.  A mixed strategy Bayesian Nash equilibrium of this game is a pair of strategies $\gamma^1,\gamma^2$  such that for any realization $\VEC p^i$, $\gamma^i(\VEC p^i)$  assigns zero probability to any action that is not a solution of the minimization problem 
\[ \min_{\VEC u^i}  \mathds{E}[c^i(\VEC X_t,\VEC u^i, \VEC U^{j}_t)) +V^i_{t+1}(F_t(\pi, \VEC Z_{t+1})) |\VEC P^i_t = \VEC p^i ], \]
where $\VEC U^j_t$ is distributed according to $\gamma^{j}(\VEC P^{j}_t)$ and $\VEC Z_{t+1}$ is the increment in common information generated according to \eqref{eq:commoninfo}, \eqref{eq:observation} and \eqref{eq:state} when control actions $\VEC U^i_t =\VEC u^i$ and $\VEC U^j_t$ distributed according to $\gamma^{j}(\VEC P^{j}_t)$ are used. Since $SG_t(\pi)$ is a finite Bayesian game, a mixed strategy equilibrium is guaranteed to exist \cite{Myerson_gametheory}.
For any mixed strategy Bayesian Nash equilibrium $\gamma^{1*},\gamma^{2*}$ of $SG_t(\pi)$, denote the expected equilibrium costs as $V^i_t(\pi)$ and define $\psi^i_t(\pi) := \gamma^{i*}$, $i=1,2$.

%Consider any Bayesian Nash equilibrium $\gamma^{1*},\gamma^{2*}$ of $SG_t(\pi)$ and denote the equilibrium costs as $V^i_t(\pi)$. Define $\psi^i_t(\pi) := \gamma^{i*}$, $i=1,2$. 
\end{enumerate}

  We can now state the following theorem.
\begin{theorem}
%\begin{enumerate}
 For the finite game \textbf{G1}, a common information based Markov  perfect equilibrium in behavioral strategies always exists.  
 Further, this equilibrium can be found by first constructing  strategies $\psi^1,\psi^2$ according to the  backward inductive procedure of Algorithm 2  and then defining behavioral strategies $\VEC g^1, \VEC g^2$ in \textbf{G1} as
\[ g^i_t(\cdot, \pi_t) := \psi^i_t(\pi_t), \]
$i=1,2$, $t=1,2,\ldots,T$. 
%\end{enumerate}
\end{theorem}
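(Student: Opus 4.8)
The plan is to reduce the statement to the already-established machinery for game \textbf{G2} (the virtual-player game) together with the classical existence theorem for mixed-strategy Bayesian Nash equilibria of finite one-stage Bayesian games. First I would note that, in the behavioral-strategy setting, the equivalence of Theorem~\ref{thm:equiv} carries over verbatim: replacing each controller $i$ by virtual player $i$ whose prescription $\Gamma^i_t$ is now a map from $\mathcal{P}^i_t$ to $\Delta(\mathcal{U}^i_t)$ does not change the induced distribution on state and action processes, so a Nash equilibrium of the behavioral-strategy game \textbf{G2} corresponds to a common information based Markov perfect equilibrium of \textbf{G1} via $g^i_t(\cdot,\pi_t):=\psi^i_t(\pi_t)$. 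Likewise, Lemmas~\ref{lemma:evolution}, \ref{lemma:markovprop}, \ref{lemma:infostatelemma} and Theorem~\ref{thm:equlb_condition} hold with $\gamma^i$ reinterpreted as $\mathcal{P}^i_t\to\Delta(\mathcal{U}^i_t)$ maps, since the proofs only use the structural properties of the belief update $F_t$ from Assumption~\ref{assm:separation}, not the finiteness of the prescription space.

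The core of the argument is then a backward induction on $t=T,T-1,\dots,1$ establishing that at each stage and each realization $\pi$ of $\Pi_t$, the one-stage Bayesian game $SG_t(\pi)$ (with the continuation value $V^i_{t+1}(F_t(\pi,\VEC Z_{t+1}))$ folded into agent $i$'s cost) possesses a mixed-strategy Bayesian Nash equilibrium. For the base case $t=T$ this is immediate: $SG_T(\pi)$ is a finite Bayesian game (finitely many types $\VEC p^i\in\mathcal{P}^i_T$, finitely many actions $\mathcal{U}^i_T$), so a mixed-strategy equilibrium exists by the standard finite-game existence result cited as \cite{Myerson_gametheory}. For the inductive step, one assumes $V^i_{t+1}$ is well-defined (which it is, once an equilibrium of each $SG_{t+1}(\pi')$ has been selected), observes that $SG_t(\pi)$ is again a finite Bayesian game whose payoffs are the original stage costs plus the finitely-many-valued continuation terms $V^i_{t+1}(F_t(\pi,\VEC Z_{t+1}))$, and invokes the same existence theorem to obtain $\gamma^{1*},\gamma^{2*}$; these define $\psi^i_t(\pi):=\gamma^{i*}$ and the equilibrium costs $V^i_t(\pi)$. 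Running this recursion all the way down to $t=1$ produces the strategy profile $(\psi^1,\psi^2)$ of Algorithm~2.

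It remains to check that this profile is genuinely a Markov perfect equilibrium of \textbf{G2}, and here I would simply appeal to Theorem~\ref{thm:equlb_condition} in its behavioral-strategy form: by construction, for every $t$ and every $\pi$, $\psi^i_t(\pi)$ assigns positive probability only to prescriptions $\tilde\gamma^i$ that are, type by type, best responses against $\psi^j_t(\pi)$ in the expected cost $\mathds{E}^\pi[c^i(\VEC X_t,\tilde\gamma^i(\VEC P^i_t),\Gamma^j_t(\VEC P^j_t))+V^i_{t+1}(\Pi_{t+1})\mid\VEC P^i_t]$, which is exactly the minimization defining $V^i_t(\pi)$ in \eqref{eq:dpeq3}--\eqref{eq:dpeq4} (the observation that the minimization over $\tilde\gamma^i$ decomposes over the realizations $\VEC p^i$, noted just before Algorithm~1, is what makes the per-type Bayesian best-response condition coincide with the prescription-level optimality condition). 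Hence the conditions of Theorem~\ref{thm:equlb_condition} are satisfied and $(\psi^1,\psi^2)$ is a Markov perfect equilibrium of \textbf{G2}; Theorem~\ref{thm:equiv} then transfers it to a common information based Markov perfect equilibrium in behavioral strategies for \textbf{G1}, proving both existence and the stated construction.

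I expect the only genuinely delicate point to be the bookkeeping in the inductive step: one must make a measurable (here, just arbitrary, since everything is finite) selection of one equilibrium of $SG_{t+1}(\pi')$ for each of the finitely many $\pi'$ in order for $V^i_{t+1}$ to be a well-defined function before $SG_t(\pi)$ can even be written down, and one must verify that the continuation cost $V^i_{t+1}(F_t(\pi,\VEC Z_{t+1}))$ depends on the agents' stage actions only through the distribution of $\VEC Z_{t+1}$ they induce (via \eqref{eq:commoninfo}, \eqref{eq:observation}, \eqref{eq:state}), so that $SG_t(\pi)$ is a bona fide finite Bayesian game to which the existence theorem applies. Both of these are routine given Assumption~\ref{assm:separation} and the finiteness hypotheses, but they are the steps where care is needed; the rest is a direct transcription of the pure-strategy arguments of Sections~\ref{sec:virtual}.
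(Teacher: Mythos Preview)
Your proposal is correct and follows essentially the same approach as the paper: the paper does not give a separate proof of this theorem but rather embeds the argument in the text of Section~\ref{sec:behave}, noting that the equivalence of Theorem~\ref{thm:equiv} and the characterization of Theorem~\ref{thm:equlb_condition} carry over to the behavioral setting, and that each one-stage game $SG_t(\pi)$ is a finite Bayesian game so a mixed-strategy equilibrium exists by \cite{Myerson_gametheory}. Your write-up is in fact more explicit than the paper's on the bookkeeping points (well-definedness of $V^i_{t+1}$ via a selection, and verifying that $SG_t(\pi)$ is a genuine finite Bayesian game), but these are elaborations of the same argument rather than a different route.
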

%\begin{theorem}
% The strategies $\psi^1,\psi^2$ defined by the above backward inductive procedure form a Markov perfect equilibrium of game \textbf{G2^b}. Consequently, behavioral strategies $\VEC g^1, \VEC g^2$ defined as
%\[ g^i_t(\cdot, \pi_t) := \psi^i_t(\pi_t), \]
%$i=1,2$, $t=1,2,\ldots,T$ form a common information based Markov perfect equilibrium of game \textbf{G1}. Moreover, For the finite game \textbf{G1}, a common information based Markov  perfect equilibrium in behavioral strategies always exists. Further, such an equilibrium
%\end{theorem}

%Recall that due to Lemma~\ref{lemma:evolution} (and in absence of Assumption~\ref{assm:separation}), the belief $\pi_t$ evolves as 
%\[ \pi_{t+1} = F_t(\pi_t, \gamma^1_t, \gamma^2_t, \VEC z_{t+1}). \]
%Thus, in order for a virtual player to be able to evaluate the current realization of $\pi_t$, it must know the prescriptions used by the other virtual player. However, the past prescriptions are not observed by virtual players since the only data they have available is $\VEC c_t$. Thus, a virtual player cannot evaluate the belief $\pi_t$ without knowing how the other player selects its prescriptions. 
%\input infinite_horizon
%\input lqg_games_v2
%\input lqg_onestep
%%%%%%%%%%%%%%%%%%%%%%%%%%%%%%%%%%%%%%%%%%%%%%%%%%%\input section6new_3%%%%%%%%%%%%%%%%%%%%%%%%%%%%%%%
\section{Discussion}\label{sec:comments}
\subsection{Importance of Assumption \ref{assm:separation}}
The most restrictive assumption in our analysis of game \textbf{G1} is  Assumption \ref{assm:separation} which states that the common information based belief is independent of control strategies. It is instructive to consider why our analysis does not work in the absence of this assumption. Let us  consider the model of Section~\ref{sec:model} with Assumption~\ref{assm:infoevolution} as before but without Assumption~\ref{assm:separation}. Lemma~\ref{lemma:evolution}, which follows from Assumption~\ref{assm:infoevolution}, is still true. For this version of game \textbf{G1} without Assumption~\ref{assm:separation}, we can construct an equivalent game with virtual players similar to game \textbf{G2}. Further, it is easy to show that Theorem~\ref{thm:equiv} which relates equilibria of \textbf{G2} to those of \textbf{G1} is still true. 

The key result for our analysis of game \textbf{G2} in section \ref{sec:virtual} was  Lemma~\ref{lemma:infostatelemma} which allowed us to use $\pi_t$ as a Markov state  and to define and characterize Markov perfect equilibria for the game \textbf{G2}.  Lemma~\ref{lemma:infostatelemma} essentially states that the set of Markov decision strategy pairs (that is, strategies that select prescriptions as a function of $\pi_t$) is closed with respect to the best response mapping. In other words, if we start with any pair of Markov strategies $(\psi^1,\psi^2)$ for the virtual players and define $\chi^i$ to be the best response of virtual player $i$ to $\psi^j$, then, for at least one choice of best response strategies, the pair $(\chi^1,\chi^2)$ belongs to the set of Markov strategy pairs. This is true not just for strategies $(\psi^1,\psi^2)$ that form an equilibrium but for any choice of Markov strategies. We will now argue that this is not necessarily true without Assumption~\ref{assm:separation}.

Recall that due to Lemma~\ref{lemma:evolution}, the belief $\pi_t$ evolves as 
\[ \pi_{t} = F_{t-1}(\pi_{t-1}, \gamma^1_{t-1}, \gamma^2_{t-1}, \VEC z_{t}). \]
Thus, in order  to evaluate the current realization of $\pi_t$, a virtual player must know the prescriptions used by both virtual players. However, the virtual players do not observe each other's past prescriptions since the only data they have available is $\VEC c_t$. Thus, a virtual player cannot evaluate the belief $\pi_t$ without knowing (or assuming) how the other player selects its prescriptions.  

Consider now  decision strategies $(\psi^1,\psi^2)$ for the two virtual players which operate as follows: At each time $t$, the prescriptions chosen by virtual players are 
\begin{equation} \label{eq:sec4.1}\gamma^i_t = \psi^i_t(\pi_t) \end{equation}
and the belief at the next time $t+1$ is 
\begin{equation}\label{eq:sec4.2}
\pi_{t+1} = F_t(\pi_t, \psi^1_t(\pi_t),\psi^2_t(\pi_t), \VEC z_{t+1}). \end{equation}
Assume that the above strategies are not a Nash equilibrium for the virtual players' game. Therefore, one virtual player, say virtual player 2, can benefit by deviating from its strategy. 
Given that virtual player $1$ continues to operate according to \eqref{eq:sec4.1} and \eqref{eq:sec4.2}, is it possible for virtual player $2$ to reduce its cost by using a non-Markov strategy, that is, a strategy that selects prescriptions based on more data than just $\pi_t$? Consider any time $t$, if virtual player $2$ has deviated to some other choice of Markov decision rules $\psi^{2*}_{1:t-1}$ in the past, then the \emph{true belief on state and private information given the common information}, 
\[\pi^{*}_t = \mathds{P}^{\psi^1_{1:t-1},\psi^{2*}_{1:t-1}}(\VEC x_t, \VEC p^1_t, \VEC p^2_t |\VEC c_t), \] is different from the belief $\pi_t$ evaluated by the first player according to \eqref{eq:sec4.2}. (Note that since past prescriptions are not observed and virtual player 1's operation is fixed by \eqref{eq:sec4.1} and \eqref{eq:sec4.2}, virtual player $1$ continues to use $\pi_t$ evolving according to \eqref{eq:sec4.2} as its belief.) Even though $\pi_t$ is no longer the true belief, virtual player 2 can still track its evolution using \eqref{eq:sec4.2}. Using arguments similar to those in the proofs of Lemmas~\ref{lemma:markovprop} and \ref{lemma:infostatelemma}, it can be established that an optimal strategy for virtual player $2$, given that virtual player $1$ operates according to  \eqref{eq:sec4.1} and \eqref{eq:sec4.2}, is of the form $\gamma^2_t = \psi^{2*}_t (\pi^{*}_t,\pi_t)$, where $\pi^*_t$ is the true conditional belief on state and private information given the common information whereas $\pi_t$ is given by \eqref{eq:sec4.2}. Thus, the best response of player $2$ may not necessarily be a Markov strategy and hence Lemma~\ref{lemma:infostatelemma} may no longer hold. Without Lemma~\ref{lemma:infostatelemma}, we cannot define  Markov perfect equilibrium of game \textbf{G2} using $\pi_t$ as the state.

\subsection{The Case of Team Problems}
The game \textbf{G1} is referred to as a \emph{team problem} if the two controllers have the same cost functions, that is, $c^1(\cdot)=c^2(\cdot)=c^{team}(\cdot)$. Nash equilibrium strategies can then be interpreted as person-by-person optimal strategies \cite{Ho:1980}. Clearly, the results of sections~\ref{sec:virtual} and \ref{sec:behave} apply to person-by-person optimal strategies for team problems as well. 
\par
For team problems, our results can be strengthened in two ways. Firstly, we can find \emph{globally optimal} strategies for the controllers in the team using the virtual player approach and secondly, we no longer need to make Assumption~\ref{assm:separation}.  Let us retrace our steps in section~\ref{sec:virtual} for the team problem without Assumption~\ref{assm:separation}:
\begin{enumerate}
\item We can once again introduce virtual players that observe the common information and select prescriptions for the controllers. The two virtual players have the same cost function. So game \textbf{G2} is now a team problem and we will refer to it as \textbf{T2} . It is straightforward to establish that globally optimal strategies for virtual player can be translated to globally optimal strategies for the controllers in the team in a manner identical to Theorem~\ref{thm:equiv}.
\item Since we are no longer making Assumption~\ref{assm:separation}, the common information belief evolves according to 
\begin{equation} \pi_{t} = F_{t-1}(\pi_{t-1}, \gamma^1_{t-1}, \gamma^2_{t-1}, \VEC z_{t}). \label{eq:sec6eq2}
\end{equation}
Virtual player $1$ does not observe $\gamma^2_{t-1}$, so it cannot carry out the update described in \eqref{eq:sec6eq2}. \emph{However, we will now increase the information available to virtual players and assume that each virtual player can indeed observe all past prescriptions $\gamma^1_{1:t-1},\gamma^2_{1:t-1}$}. We refer to this team with expanded information for the virtual players as $\textbf{T2'}$. 

It should be noted that the globally optimal expected cost for \textbf{T2'} can be no larger than the globally optimal cost of \textbf{T2} since we have only added information in going from \textbf{T2} to \textbf{T2'}. We will later show that the globally optimal strategies we find for \textbf{T2'}  can be translated to equivalent strategies for \textbf{T2} with the same expected cost. %Thus, finding globally optimal strategies in \textbf{T2'} will lead us to find globally optimal strategies for \textbf{T2}.
\item For \textbf{T2'}, since all past prescriptions are observed, both virtual players can evaluate $\pi_t$ using \eqref{eq:sec6eq2} \emph{without knowing the past decision rules $\psi^1_{1:t-1},\psi^2_{1:t-1}$}. We can now repeat the arguments in the proof of  Lemma~\ref{lemma:markovprop} to show that an analogous result is true for team \textbf{T2'} as well. The team problem for the virtual players is now a Markov decision problem with $\pi_t$ evolving according to \eqref{eq:sec6eq2} as the Markov state and the prescription pair $(\gamma^1_t,\gamma^2_t)$ as the decision.  We can then write a dynamic program for this Markov decision problem. 
\begin{theorem}\label{thm:teamDP}
For the team problem \textbf{T2'} with virtual players, for each realization of $\pi_t$, the optimal prescriptions are the minimizers in the following dynamic program: 
\begin{align}
  &V^{team}_T(\pi) := \min_{\tilde{\gamma}^1,\tilde{\gamma}^2} \mathds{E}[c^{team}(\VEC X_t, {\Gamma}^1_T(\VEC P^1_T), \Gamma^{2}_T(\VEC P^{2}_T))|\Pi_T=\pi,\Gamma^1_T = \tilde{\gamma}^1, \Gamma^2_T = \tilde{\gamma}^2]
\end{align}
 
\begin{align}
  &V^{team}_t(\pi) := \min_{\tilde{\gamma}^1, \tilde{\gamma}^2}\mathds{E}[c^{team}(\VEC X_t, \Gamma^1_t(\VEC P^1_t), \Gamma^{2}_t(\VEC P^{2}_t)) +  V^{team}_{t+1}(\Pi_{t+1})|\Pi_t=\pi, \Gamma^1_t = \tilde{\gamma}^1, \Gamma^2_t = \tilde{\gamma}^2]
  \end{align}
  where  $\Pi_{t+1} = F_t(\Pi_t,\Gamma^1_t,\Gamma^2_t,\VEC Z_{t+1})$. %and $\VEC Z_{t+1}$ is the increment in common information generated according to \eqref{eq:commoninfo}, \eqref{eq:observation} and \eqref{eq:state} when control actions $\VEC U^i_t =\tilde{\gamma}^1(\VEC P^1_t)$ and $\VEC U^2_t = \gamma^{2}_t(\VEC P^{2}_t)$ are used.
  \end{theorem}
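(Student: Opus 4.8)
The plan is to recognize \textbf{T2'} as a centralized (single--controller) sequential decision problem and then invoke standard finite--horizon dynamic programming. First I would observe that in \textbf{T2'} both virtual players share exactly the same information at each time $t$ --- namely the common information $\VEC C_t$ together with all past prescriptions $\gamma^1_{1:t-1},\gamma^2_{1:t-1}$ --- and that they share the objective $c^{team}$. Hence the pair of virtual players can be regarded as a single decision maker who, at time $t$, observes this common history and selects the prescription pair $(\Gamma^1_t,\Gamma^2_t)$; since the two players have identical information, any jointly--measurable choice of the pair is realizable, so this reduction does not change the set of achievable expected costs (the argument is identical in spirit to the one in Theorem~\ref{thm:equiv}).

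Next I would identify $\Pi_t$ as the Markov state for this centralized problem, with the prescription pair $(\gamma^1_t,\gamma^2_t)$ as the control. By Lemma~\ref{lemma:evolution}, $\Pi_{t+1}=F_t(\Pi_t,\Gamma^1_t,\Gamma^2_t,\VEC Z_{t+1})$, so it suffices to show that the conditional distribution of the increment $\VEC Z_{t+1}$ given the centralized history $(\VEC c_t,\pi_{1:t},\gamma^1_{1:t},\gamma^2_{1:t})$ depends on that history only through $(\pi_t,\gamma^1_t,\gamma^2_t)$. This is the analogue, in the absence of Assumption~\ref{assm:separation}, of Lemma~\ref{lemma:markovprop}, and it follows by repeating that lemma's proof: by definition of $\Pi_t$ the conditional law of $(\VEC X_t,\VEC P^1_t,\VEC P^2_t)$ given the history is $\pi_t$; the actions $\VEC U^i_t=\gamma^i_t(\VEC P^i_t)$ are then determined; the primitive variables $\VEC W^0_t,\VEC W^1_{t+1},\VEC W^2_{t+1}$ are independent of the history; and $\VEC Z_{t+1}$ is a fixed function of these quantities via \eqref{eq:commoninfo}, \eqref{eq:observation}, \eqref{eq:state}. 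The same computation shows that $\mathds{E}[c^{team}(\VEC X_t,\Gamma^1_t(\VEC P^1_t),\Gamma^2_t(\VEC P^2_t))\mid \VEC c_t,\pi_{1:t},\gamma^1_{1:t},\gamma^2_{1:t}]$ is a function of the history only through $(\pi_t,\gamma^1_t,\gamma^2_t)$, so the per--stage cost is a legitimate MDP cost function.

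Having established that $\{\Pi_t\}$ is a controlled Markov chain with controls $(\gamma^1_t,\gamma^2_t)$ and that the per--stage cost depends on the history only through $(\Pi_t,\gamma^1_t,\gamma^2_t)$, the problem of minimizing $\mathds{E}\big[\sum_{t=1}^T c^{team}(\VEC X_t,\VEC U^1_t,\VEC U^2_t)\big]$ over the virtual players' strategies in \textbf{T2'} is a standard finite--horizon Markov decision problem. I would then invoke the classical result that (i) there is no loss of optimality in restricting attention to policies that select the control at time $t$ as a function of the current state $\Pi_t$ alone, and (ii) such an optimal policy is obtained by the backward recursion in which $V^{team}_t(\pi)$ is the optimal cost--to--go from state $\pi$ at time $t$; this is precisely the dynamic program displayed in the statement, with the minimization over prescription pairs performed pointwise in $\pi$.

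The main obstacle I anticipate is the second step: verifying rigorously that $\Pi_t$ together with the prescriptions forms a controlled Markov chain, i.e.\ that conditioning on the full centralized history collapses to conditioning on $(\pi_t,\gamma^1_t,\gamma^2_t)$. The subtlety is that $\pi_t$ is itself a function of the realized common information (and, in \textbf{T2'}, of past prescriptions), so one must carefully separate the ``old'' randomness already summarized by $\pi_t$ from the ``new'' primitive randomness entering between $t$ and $t+1$. This is exactly the bookkeeping carried out in the proof of Lemma~\ref{lemma:markovprop} in Appendix~\ref{sec:markovprop}, so the present proof should largely reduce to citing and lightly adapting that argument (retaining, rather than dropping via Assumption~\ref{assm:separation}, the dependence on $\gamma^1_t,\gamma^2_t$). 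Everything after that is routine finite--horizon MDP theory.
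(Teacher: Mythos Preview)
Your proposal is correct and follows essentially the same approach as the paper: the paper argues that in \textbf{T2'} both virtual players can compute $\Pi_t$ (since past prescriptions are observed), repeats the argument of Lemma~\ref{lemma:markovprop} to establish that $\Pi_t$ is a controlled Markov process with the prescription pair $(\gamma^1_t,\gamma^2_t)$ as the control, and then concludes that the team problem is a Markov decision problem whose dynamic program is the one displayed. Your write-up is more explicit about the centralized reduction and the per-stage cost verification, but the structure and the key lemma invoked are identical.
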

\item Let $\psi^{1*}_t(\pi)$ be the minimizer in the right hand side of  the definition of $V^{team}_t(\pi)$ in the above dynamic program. The globally optimal virtual players' operation can be described as: At each $t$, evaluate 
\begin{equation}
\pi_t = F_{t-1}(\pi_{t-1}, \gamma^1_{t-1}, \gamma^2_{t-1}, \VEC z_{t})\label{eq:team1}
\end{equation}
 and then select the prescriptions 
 \begin{equation}
 \gamma^i_t = \psi^{*i}_t(\pi_t) \quad i=1,2. \label{eq:team2}
 \end{equation}
 Now, instead of operating according to \eqref{eq:team1} and \eqref{eq:team2}, assume that virtual players operate as follows:
  At each $t$, evaluate 
\begin{equation}
\pi_t = F_{t-1}(\pi_{t-1}, \psi^{*1}_{t-1}(\pi_{t-1}),\psi^{*2}_{t-1}(\pi_{t-1}), \VEC z_{t})\label{eq:team3}
\end{equation}
 and then select the prescriptions 
 \begin{equation}
 \gamma^i_t = \psi^{*i}_t(\pi_t) ~~~~~ i=1,2. \label{eq:team4}
 \end{equation}
 It should be clear that virtual players operating according to \eqref{eq:team3} and \eqref{eq:team4} will achieve the same globally optimal performance as the virtual players operating according to \eqref{eq:team1} and \eqref{eq:team2}. Furthermore, the virtual players in \textbf{T2} can follow \eqref{eq:team3} and \eqref{eq:team4} and thus achieve the same globally optimal performance as in \textbf{T2'}.
 \end{enumerate}
 Thus, to find globally optimal strategies for the team of virtual players in absence of Assumption~\ref{assm:separation}, we first increased their information to include past prescriptions and then mapped the globally optimal strategies with increased information to equivalent strategies with original information. 
 \par
 For the \emph{game} \textbf{G2} in absence of assumption \ref{assm:separation}, we cannot follow the above approach of first increasing virtual players' information to include past prescriptions, finding equilibrium with added information and then mapping the equilibrium strategies to equivalent strategies with original information. 
 %Why could we not do a similar expansion of virtual players' information to find equilibrium for the game \textbf{G2} in absence of assumption \ref{assm:separation} by first finding equilibrium with expanded information?
  To see the reason, let us denote the virtual player operation given by \eqref{eq:team1} and \eqref{eq:team2} by the strategy $\sigma^i,i=1,2$ and the  virtual player operation given by \eqref{eq:team3} and \eqref{eq:team4} by the strategy $\hat{\sigma}^i,i=1,2$. Then, while it is true that
 $ \mathcal{J}^i(\sigma^1,\sigma^2) = \mathcal{J}^i(\hat{\sigma}^1,\hat{\sigma}^2), i=1,2,$
 but for some other strategies $\rho^1,\rho^2$, it is not necessarily true that 
 $ \mathcal{J}^i(\sigma^i,\rho^j) = \mathcal{J}^i(\hat{\sigma}^i,\rho^j), i,j=1,2,i\neq j$.
 Therefore, the equilibrium conditions for $\sigma^1,\sigma^2$:
 \begin{equation}
 \mathcal{J}^1(\sigma^1,\sigma^2) \leq   \mathcal{J}^1(\rho^1,\sigma^2), \quad \text{and}\quad \mathcal{J}^2(\sigma^1,\sigma^2) \leq  \mathcal{J}^2(\sigma^1,\rho^2),
\end{equation} 
do not necessarily imply the equilibrium conditions for $\hat{\sigma}^1,\hat{\sigma}^2$:
\begin{equation}
 \mathcal{J}^1(\hat{\sigma}^1,\hat{\sigma}^2) \leq   \mathcal{J}^1(\rho^1,\hat{\sigma}^2), \quad \text{and}\quad \mathcal{J}^2(\hat{\sigma}^1,\hat{\sigma}^2) \leq  \mathcal{J}^2(\hat{\sigma}^1,\rho^2).
 \end{equation}
 %where $\rho^1,\rho^2$ are any possible strategies for virtual players 1 and 2 respectively.
 \begin{Remark}
 Our dynamic program for the team problem is similar to the dynamic program for teams obtained in \cite{NMT2012} using a slightly different but conceptually similar approach.
 \end{Remark}

%%%%%%%%%%%%%%%%%%%%%%%%%%%%%%%%5%%%%%%%%%%%%%%%%%%%%%%%%%%%%%%%%%%%%%%%%%%%%%%%%%%%%%%%%%%%%%%%%%%%%%%%%%%%%%%%%
%%%%%%%%%%%%%%%%%%%%%%%%%%%%%%%%%%%%%%%%%\input conclusion%%%%%%%%%%%%%%%%%%%%%%%%%%%%%%%%%%%%%%%%%%%%%%%
\section{Concluding Remarks}\label{sec:conclusion}
We considered the problem of finding Nash equilibria of a general model of stochastic games with asymmetric information. Our analysis relied on the nature of common and private information among the controllers. Crucially, we assumed that the common information among controllers is increasing with time and that a common information based belief on the system state and private information is independent of control strategies. Under these assumptions, the game with asymmetric information is shown to be equivalent to another game with symmetric information for which we obtained a characterization of Markov perfect equilibria. This characterization allowed us to provide a  backward induction algorithm to find Nash equilibria of the original game. Each step of this algorithm involves finding Bayesian Nash equilibria of a one-stage Bayesian game. The class of Nash equilibria of the original game that can be characterized in this backward manner are named \emph{common information based Markov perfect equilibria}.
\par
The class of common information based Markov perfect equilibria for asymmetric information games bears conceptual similarities with Markov perfect equilibria of symmetric information games with perfect state observation. In symmetric information games with perfect state observation, a controller may be using past state information only because the other controller is using that information. Therefore, if one controller restricts to Markov strategies, the other controller can do the same.  This observation provides the justification for focusing only on Markov perfect equilibria for such games. Our results show that a similar observation can be made in our model of games with asymmetric information. A controller may be using the entire common information only because other controller is using that information. If one controller chooses to only use the common information based belief on the state and private information, the other controller can do the same. Thus, it is reasonable to focus on the class of common information based Markov perfect equilibria for our model of games with asymmetric information.
\par
Further, for zero-sum games,  the uniqueness of the value of the game implies that the equilibrium cost of a common information based Markov perfect equilibrium is the same as the equilibrium cost of any other Nash equilibrium \cite{Osborne}.
\par
For finite games, it is always possible to find pure strategy Nash equilibria (if they exist) by a brute force search of the set of possible strategy profiles. The number of strategy choices for controller $i$ are $|\mathcal{U}^i_1|^{|\mathcal{P}^i_1 \times \mathcal{C}_1|} \times \ldots \times |\mathcal{U}^i_T|^{|\mathcal{P}^i_T \times \mathcal{C}_T|} $. For simplicity, assume that the set  of possible realizations of  private information $\mathcal{P}^i_t$ does not change with time. 
However, because the common information is required to be increasing with time (see Assumption \ref{assm:infoevolution}), the  cardinality of the set possible realization of common information $\mathcal{C}_t$ is exponentially increasing with time. Thus, the number of possible control strategies exhibits a double exponential growth with time. 
\par
Algorithm 1 provides an alternative way for finding an equilibrium by solving a succession of one stage Bayesian games. But how many such games need to solved? At each time $t$, we need to solve a Bayesian game for each possible realization of the belief $\pi_t$. Let $\mathcal{R}_t$ denote the set of possible realizations of the belief $\pi_t$. Since the belief is simply a function of the common information, we must have that $|\mathcal{R}_t| \leq |\mathcal{C}_t|$. Thus, the total number of one stage games that need to solved is no larger that $\sum_{t=1}^T |\mathcal{C}_t|$. Recalling the exponential growth of $|\mathcal{C}_t|$, the number of one-stage games to solve shows an exponential growth with time. This is clearly better than the double exponential growth for the brute force search.
\par
Two possible reasons may further reduce the complexity of Algorithm 1. Firstly, the set $|\mathcal{R}_t|$ may not be growing exponentially with time (as in the case of the information structure in Section~\ref{sec:Num_example}, where $|\mathcal{R}_t|= 3$, for all $t>1$). Secondly, the one-stage games at time $t$, $SG_t(\pi)$ may possess enough structure that it is possible to find an equilibrium for a generic $\pi$ that can be used to construct equilibrium for all choices of $\pi$. For finite games, it is not clear what additional features need to be present in game \textbf{G1} such that the resulting one-stage games $SG_t(\pi)$ can be solved for a generic $\pi$. In the sequel to this paper we will extend the approach used here to linear quadratic Gaussian games  and show that in these games it is possible to solve the one-stage games for a generic belief $\pi$.
\par
Conceptually, the approach adopted in this paper can be extended to infinite time horizon games with discounted costs under suitable stationarity conditions. However, in infinite horizon games, the number of possible realizations of the common information based belief would, in general, be infinite. Establishing the existence of common information based Markov perfect equilibria for infinite horizon games would be an interesting direction for future work in this area.
%%%%%%%%%%%%%%%%%%%%%%%%%%%%%%%%%%%%%%%%%%%%%%%%%%%%%%%%%%%%%%%%%%%%%%%%%%%%%%%%%%%%%%%%%%%%%%%%%%%%%%%%%5

\section{Acknowledgments}
This work was supported in part by the AFOSR MURI Grant FA9550-10-1-0573. The second author thanks Bharti Center for Telecommunications, IIT Bombay for infrastructural support and Research Internship in Science and Engineering program of Indo-US Science and Technology Forum for supporting the visit to Indian Institute of Technology Bombay.

\appendices
\section{Proof of Lemma \ref{lemma:evolution}}\label{sec:update_func}

 Consider a realization
$\VEC c_{t}$ of the common information $\VEC C_{t}$ at time $t$. Let $\gamma^1_t, \gamma^2_t$ be the corresponding
realization of the partial functions of the control laws at time $t$, that is, $\gamma^i_t = g^i_t(\cdot,\VEC c_t)$.
Given the realization of the common information based belief $\pi_t$ and the partial functions $\gamma^1_t, \gamma^2_t$, we can find the joint conditional distribution on $(\VEC X_t, \VEC P^1_t, \VEC P^2_t, \VEC X_{t+1}, \VEC P^1_{t+1}, \VEC P^2_{t+1}, \VEC Z_{t+1})$ conditioned on the common information at time $t$ as follows:
\begin{align}
&\mathds{P}^{g^1_{1:t},g^2_{1:t}}(\VEC x_t, \VEC p^1_t, \VEC p^2_t, \VEC x_{t+1}, \VEC p^1_{t+1}, \VEC p^2_{t+1}, \VEC z_{t+1}|\VEC c_t) \notag\\
&= \sum_{\VEC y^1_{t+1},\VEC y^2_{t+1}, \VEC u^1_t, \VEC u^2_t}\mathds{P}^{g^1_{1:t},g^2_{1:t}}(\VEC x_t, \VEC p^1_t, \VEC p^2_t, \VEC x_{t+1}, \VEC p^1_{t+1}, \VEC p^2_{t+1}, \VEC z_{t+1}, \VEC y^1_{t+1},\VEC y^2_{t+1}, \VEC u^1_t, \VEC u^2_t|\VEC c_t) \notag 
\end{align}
\begin{align}
&=\sum_{\VEC y^1_{t+1},\VEC y^2_{t+1}, \VEC u^1_t, \VEC u^2_t}\mathds{1}_{\{\zeta_{t+1}(\VEC p^1_t, \VEC p^2_t, \VEC u^1_t, \VEC u^2_t, \VEC y^1_{t+1}, \VEC y^2_{t+1})=\VEC z_{t+1}\}} \mathds{1}_{\{\xi^1_{t+1}(\VEC p^1_t, \VEC u^1_t,  \VEC y^1_{t+1}) = \VEC p^1_{t+1}\}}\mathds{1}_{\{\xi^2_{t+1}(\VEC p^2_t, \VEC u^2_t,  \VEC y^2_{t+1}) = \VEC p^2_{t+1}\}} \notag \\
&\mathds{P}(\VEC y^1_{t+1},\VEC y^2_{t+1}|\VEC x_{t+1})\mathds{P}(\VEC x_{t+1}|\VEC x_t, \VEC u^1_t, \VEC u^2_t)\mathds{1}_{\{\gamma^1_t(\VEC p^1_t)=\VEC u^1_t\}}\mathds{1}_{\{\gamma^2_t(\VEC p^2_t)=\VEC u^2_t\}}\pi_t(\VEC x_t, \VEC p^1_t, \VEC p^2_t)\label{eq:appendixeq1}
\end{align}
Note that in addition to the arguments on the left side of conditioning in \eqref{eq:appendixeq1}, we only need $\pi_t$ and $\gamma^1_t,\gamma^2_t$ to evaluate the right hand side of \eqref{eq:appendixeq1}. That is,  the joint conditional distribution on $(\VEC X_t, \VEC P^1_t, \VEC P^2_t, \VEC X_{t+1}, \VEC P^1_{t+1}, \VEC P^2_{t+1}, \VEC Z_{t+1})$ depends only on $\pi_t$, $\gamma^1_t$ and $\gamma^2_t$  with no dependence on control strategies.
\par
We can now consider the common information based belief at time $t+1$,
\begin{align}
&\pi_{t+1}(\VEC x_{t+1}, \VEC p^1_{t+1}, \VEC p^2_{t+1}) = \mathds{P}(\VEC x_{t+1}, \VEC p^1_{t+1}, \VEC p^2_{t+1}|\VEC c_{t+1}) \notag \\
&= \mathds{P}(\VEC x_{t+1}, \VEC p^1_{t+1}, \VEC p^2_{t+1}|\VEC c_{t},\VEC z_{t+1}) \notag \\
&= \frac{\mathds{P}(\VEC x_{t+1}, \VEC p^1_{t+1}, \VEC p^2_{t+1},\VEC z_{t+1}|\VEC c_{t})}{\mathds{P}(\VEC z_{t+1}|\VEC c_{t})}\label{eq:appendixeq2}
\end{align}
 The numerator and denominator of \eqref{eq:appendixeq2} are both marginals of the probability in \eqref{eq:appendixeq1}. Using \eqref{eq:appendixeq1} in \eqref{eq:appendixeq2}, gives $\pi_{t+1}$ as a function of $\pi_t,\gamma^1_t,\gamma^2_t,\VEC z_{t+1}$.

%%%%%%%%%%%%%%%%%%%%%%%%%%%%%%%%%%%%%%%%%%%%%%%%%%%%%%%%%%%%%%%%%%%%%%%%%%%%%%%%%%5
\section{Proof of Lemma \ref{lemma:markovprop}}\label{sec:markovprop}
Consider a realization $\VEC c_t$ of common information at time $t$ and realizations $\pi_{1:t},\gamma^1_{1:t},\gamma^2_{1:t}$ of beliefs and prescriptions till time $t$. 
Because of \eqref{eq:evolution1} in Assumption \ref{assm:separation}, we have
\[ \Pi_{t+1} = F_t(\pi_t,\VEC Z_{t+1})\]
Hence, in order to establish the lemma, it suffices to show that 
\begin{align}
\mathds{P}(\VEC Z_{t+1}|\VEC c_t, \pi_{1:t},\gamma^1_{1:t},\gamma^2_{1:t}) = \mathds{P}(\VEC Z_{t+1}|\pi_{t},\gamma^1_{t},\gamma^2_{t})\label{eq:appeq2.0}
\end{align}
Recall that 
\begin{align}
&\VEC Z_{t+1} = \zeta_{t+1}(\VEC P^1_t, \VEC P^2_t, \VEC U^1_t, \VEC U^2_t, \VEC Y^1_{t+1}, \VEC Y^2_{t+1})\notag \\
&=\zeta_{t+1}(\VEC P^1_t, \VEC P^2_t, \gamma^1_t(\VEC P^1_t), \gamma^2_t(\VEC P^2_t), \VEC Y^1_{t+1}, \VEC Y^2_{t+1}) \label{eq:appeq2.1}
\end{align}
where we used the fact that the control actions are  simply the prescriptions evaluated at the private information.
Therefore,
\begin{align}
&\mathds{P}(\VEC Z_{t+1} =\VEC z|\VEC c_t, \pi_{1:t},\gamma^1_{1:t},\gamma^2_{1:t}) \notag \\
&= \sum_{\VEC x_t,\VEC x_{t+1},\VEC y^1_{t+1},\VEC y^2_{t+1}, \VEC p^1_t, \VEC p^2_t}\mathds{P}(\VEC Z_{t+1} =\VEC z,\VEC x_t,\VEC x_{t+1},\VEC y^1_{t+1},\VEC y^2_{t+1}, \VEC p^1_t, \VEC p^2_t|\VEC c_t, \pi_{1:t},\gamma^1_{1:t},\gamma^2_{1:t})\notag 
\end{align}
\begin{align}
&= \sum_{\VEC x_t,\VEC y^1_{t+1},\VEC y^2_{t+1}, \VEC p^1_t, \VEC p^2_t} \mathds{1}_{\{\zeta_{t+1}(\VEC p^1_t, \VEC p^2_t, \gamma^1_t(\VEC p^1_t), \gamma^2_t(\VEC p^2_t), \VEC y^1_{t+1}, \VEC y^2_{t+1})=z\}} \mathds{P}(\VEC y^1_{t+1},\VEC y^2_{t+1}|\VEC x_{t+1})\notag \\
&\times\mathds{P}(\VEC x_{t+1}|\VEC x_t, \gamma^1_t(\VEC p^1_t), \gamma^2_t(\VEC p^2_t)) \mathds{P}(\VEC x_t, \VEC p^1_t, \VEC p^2_t|\VEC c_t, \pi_{1:t},\gamma^1_{1:t},\gamma^2_{1:t}) \notag 
\end{align}
\begin{align}
&= \sum_{\VEC x_t,\VEC y^1_{t+1},\VEC y^2_{t+1}, \VEC p^1_t, \VEC p^2_t} \mathds{1}_{\{\zeta_{t+1}(\VEC p^1_t, \VEC p^2_t, \gamma^1_t(\VEC p^1_t), \gamma^2_t(\VEC p^2_t), \VEC y^1_{t+1}, \VEC y^2_{t+1})=z\}} \mathds{P}(\VEC y^1_{t+1},\VEC y^2_{t+1}|\VEC x_{t+1})\notag \\
&\times\mathds{P}(\VEC x_{t+1}|\VEC x_t, \gamma^1_t(\VEC p^1_t), \gamma^2_t(\VEC p^2_t)) \pi_t(\VEC x_t, \VEC p^1_t, \VEC p^2_t), \label{eq:appeq2.2}
\end{align}
where we used the fact that 
$\mathds{P}(\VEC x_t, \VEC p^1_t, \VEC p^2_t|\VEC c_t, \pi_{1:t},\gamma^1_{1:t},\gamma^2_{1:t}) = \mathds{P}(\VEC x_t, \VEC p^1_t, \VEC p^2_t|\VEC c_t)$, since $\pi_{1:t},\gamma^1_{1:t},\gamma^2_{1:t}$ are all functions of $\VEC c_t$, and the fact that $\mathds{P}(\VEC x_t, \VEC p^1_t, \VEC p^2_t|\VEC c_t)=: \pi_t(\VEC x_t, \VEC p^1_t, \VEC p^2_t)$.
The right hand side in \eqref{eq:appeq2.2} depends only on $\pi_t$ and $\gamma^1_t,\gamma^2_t$. Thus, the conditional probability of $\VEC Z_{t+1}=\VEC z$ conditioned on $\VEC c_t,\pi_{1:t},\gamma^1_{1:t},\gamma^2_{1:t}$ depends only on $\pi_t$ and $\gamma^1_t,\gamma^2_t$. This establishes \eqref{eq:appeq2.0} and hence the lemma.
%%%%%%%%%%%%%%%%%%%%%%%%%%%%%%%%%%%%%%%%%%%%%%%%%%%%%%%%%%%%%%%%%%%%%%%%%%%%%%%%%%%%%%%%%%%%%%%%%%
\section{Proof of Lemma \ref{lemma:infostatelemma}}\label{sec:infostatelemma}
Assume that virtual player $1$ is using a fixed strategy of the form $ \Gamma^1_t = \psi^1_t(\Pi_t)$, $t=1,2,\ldots,T$. We now want to find a strategy of virtual player 2 that is a best response to the given strategy of virtual player 1. Lemma \ref{lemma:markovprop} established that $\Pi_t$ is a controlled Markov process with the prescriptions $\Gamma^1_t, \Gamma^2_t$ as the controlling actions. Since $\Gamma^1_t$ has been fixed to $\psi^1_t(\Pi_t)$, it follows that, under the fixed strategy of virtual player 1, $\Pi_t$ can be viewed as a controlled Markov process with the decisions of virtual player 2, $\Gamma^2_t$ as the controlling action. 

At time $t$, if $\VEC c_t$ is the realization of common information, $\pi_t$ is the corresponding realization of the common information belief, then $\gamma^1_t = \psi^1_t(\pi_t)$ is prescription selected by virtual player 1. If virtual player 2 selects $\gamma^2_t$, the expected instantaneous cost for the  virtual player 2 is
\begin{align}
&\mathds{E}[c^2(\VEC X_t, \VEC U^1_t, \VEC U^2_t)|\VEC c_t] = \mathds{E}[c^2(\VEC X_t, \gamma^1_t(\VEC P^1_t), \gamma^2_t(\VEC P^2_t))|\VEC c_t]\notag \\
&= \sum_{\VEC x_t, \VEC p^1_t, \VEC p^2_t}c^2(\VEC x_t, \gamma^1_t(\VEC p^1_t), \gamma^2_t(\VEC p^2_t))\mathds{P}(\VEC x_t, \VEC p^1_t, \VEC p^2_t|\VEC c_t)\notag\\
&= \sum_{\VEC x_t, \VEC p^1_t, \VEC p^2_t}c^2(\VEC x_t, \gamma^1_t(\VEC p^1_t), \gamma^2_t(\VEC p^2_t))\pi_t(\VEC x_t, \VEC p^1_t, \VEC p^2_t) =: \tilde{c}^2(\pi_t,\gamma^2_t)
\end{align}
Thus, given the fixed strategy of virtual player 1, the instantaneous expected cost for virtual player 2 depends only on the belief $\pi_t$ and the prescription selected by virtual player 2. Given the controlled Markov nature of $\pi_t$, it follows that virtual player 2's optimization problem is a Markov decision problem with $\Pi_t$ as the state and hence virtual player 2 can optimal select its prescription as a function of $\Pi_t$. This completes the proof of the lemma.

%%%%%%%%%%%%%%%%%%%%%%%%%%%%%%%%%%%%%%%%%%%%%%%%%%%%%%%%%%%%%%%%%%%%%%%%%%%%%%%%%%%%%%%%%%%%%%%%%%%%
\section{Proof of Theorem \ref{thm:equlb_condition}} \label{sec:equlb_condition}
Consider a strategy pair $(\psi^1,\psi^2)$ that satisfies the conditions of the theorem. For any $1 \leq k \leq T$ and any realization $\VEC c_k$ of the common information at time $k$, we want to show that the strategies form a Nash equilibrium of the sub-game starting from time $k$ with the costs given as
\begin{align}\label{eq:appD.1}
 &\mathds{E}\Big[ \sum_{t=k}^T c^i(\VEC X_t, \VEC U^1_t, \VEC U^2_t) | \VEC c_k \Big],
 \end{align}
$i=1,2$. If the strategy of player $j$ is fixed to $\psi^j_t, t=k,k+1,\ldots,T$, then by arguments similar to those in the proof of Lemma \ref{lemma:infostatelemma}, the optimization problem for player $i$ starting from time $k$ onwards with the objective given by \eqref{eq:appD.1} is a Markov decision problem which we denote by $MDP^i_k$. Since $\psi^i_t, t=k,k+1,\ldots,T,$ satisfy the conditions of Theorem \ref{thm:equlb_condition} for player $i$, they satisfy the dynamic programming conditions of $MDP^i_k$. Thus, $\psi^i_t,t=k,k+1,\ldots,T,$ is the best response to $\psi^j,t=k,k+1,\ldots,T,$ in the sub-game starting from time $k$. Interchanging the roles of $i$ and $j$ implies that the strategies $\psi^1_t,\psi^2_t,t=k,k+1,\ldots,T,$ form an equilibrium of the sub-game  starting from time $k$. Since $k$ was arbitrary, this completes the proof of sufficiency part of the theorem. The converse follows a similar MDP based argument.  

%%%%%%%%%%%%%%%%%%%%%%%%%%%%%%%%%%%%%%%%%%%%%%%%%%%%%%%%%%%%%%%%%%%%%%%%%%%%%%%%%%%%%%%%%%%%%%%%
\section{Proof of Theorem \ref{thm:backward_ind}} \label{sec:backward_indproof}

Consider any realization $\pi$ of the common information based belief and consider a Bayesian Nash equilibrium $\gamma^{1*},\gamma^{2*}$  of the game $SG_T(\pi)$. We will show that $\gamma^{1*},\gamma^{2*}$ satisfy the value function conditions for time $T$ in Theorem \ref{thm:equlb_condition}. By definition of Bayesian Nash equilibrium, for every realization $\VEC p^1$ of $\VEC P^1_T$,
\begin{align}
& \mathds{E}^{\pi}[c^1(\VEC X_T,\gamma^{1*}(\VEC P^1_T), \gamma^{2*}(\VEC P^{2}_T))|\VEC P^1_T = \VEC p^1] \leq \mathds{E}^{\pi}[c^1(\VEC X_T,\tilde{\gamma}^{1}(\VEC P^1_T), \gamma^{2*}(\VEC P^{2}_T))|\VEC P^1_T = \VEC p^1],
\end{align}
%\begin{align}
%&\mathds{E}[c^1(\VEC X_T,\gamma^{1*}(\VEC p^1), \gamma^{2*}(\VEC P^{2}_T))|\VEC P^1_T = \VEC p^1] \leq \mathds{E}[c^1(\VEC X_T,\tilde{\gamma}^{1}(\VEC p^1), \gamma^{2*}(\VEC P^{2}_T))|\VEC P^1_T = \VEC p^1], 
%\end{align}
for any choice of $\tilde{\gamma}^1$. Averaging over $\VEC p^1$, we get
\begin{align}
&\mathds{E}^{\pi}\Big[\mathds{E}[c^1(\VEC X_T,\gamma^{1*}(\VEC P^1_T), \gamma^{2*}(\VEC P^{2}_T))|\VEC P^1_T]\Big] \leq \mathds{E}^{\pi}\Big[\mathds{E}[c^1(\VEC X_T,\tilde{\gamma}^{1}(\VEC P^{1}_T), \gamma^{2*}(\VEC P^{2}_T))|\VEC P^1_T]\Big] \notag \\
& \implies \mathds{E}^{\pi}[c^1(\VEC X_T,\gamma^{1*}(\VEC P^{1}_T), \gamma^{2*}(\VEC P^{2}_T))] \leq \mathds{E}^{\pi}[c^1(\VEC X_T,\tilde{\gamma}^{1}(\VEC P^{1}_T), \gamma^{2*}(\VEC P^{2}_T))], \label{eq:appE.1}
\end{align}
where all the expectations are with respect to the belief $\pi$ on $(\VEC X_T,\VEC P^1_T, \VEC P^2_T)$.
Similarly, 
\begin{align}
\mathds{E}^{\pi}[c^2(\VEC X_T,\gamma^{1*}(\VEC P^{1}_T), \gamma^{2*}(\VEC P^{2}_T))] \leq \mathds{E}^{\pi}[c^2(\VEC X_T,{\gamma}^{1*}(\VEC P^{1}_T), \tilde{\gamma}^{2}(\VEC P^{2}_T))],
\end{align}
for any choice of $\tilde{\gamma}^2$. Thus, $\psi^i_T(\pi) := \gamma^{i*}$, $i=1,2$ satisfy the conditions in \eqref{eq:dpeq1} and \eqref{eq:dpeq2} when $\Pi_T=\pi$.

 Similarly, for any time $t<T$, consider  any realization $\pi$ of the common information based belief at $t$ and consider a Bayesian Nash equilibrium $\gamma^{1*},\gamma^{2*}$  of the game $SG_t(\pi)$. Then, by definition of Bayesian Nash equilibrium, for every realization $\VEC p^1$ and any choice of $\tilde{\gamma}^1$, we have that the expression 
\[\mathds{E}^{\pi}[c^1(\VEC X_t,\gamma^{1*}(\VEC P^i_t), \gamma^{2*}(\VEC P^{2}_t)) +V^1_{t+1}(F_t(\pi, \VEC Z_{t+1})) |\VEC P^1_t = \VEC p^1],\]
(where $\VEC Z_{t+1}$ is the increment in common information generated according to \eqref{eq:commoninfo}, \eqref{eq:observation} and \eqref{eq:state} when control actions $\VEC U^1_t =\gamma^{1*}(\VEC p^i)$ and $\VEC U^2_t = \gamma^{2*}(\VEC P^{2*}_t)$ are used) can be no larger than
\[\mathds{E}^{\pi}[c^1(\VEC X_t,\tilde{\gamma}^{1}(\VEC P^i_t), \gamma^{2*}(\VEC P^{2}_t)) +V^1_{t+1}(F_t(\pi, \VEC Z_{t+1})) |\VEC P^1_t = \VEC p^1],\]
(where $\VEC Z_{t+1}$ is the increment in common information generated according to \eqref{eq:commoninfo}, \eqref{eq:observation} and \eqref{eq:state} when control actions $\VEC U^1_t =\tilde{\gamma}^{1}(\VEC p^i)$ and $\VEC U^2_t = \gamma^{2*}(\VEC P^{2*}_t)$ are used. Similar conditions hold for player 2.  Averaging over $\VEC p^1,\VEC p^2,$ establishes that $\psi^i_t(\pi) := \gamma^{i*}$, $i=1,2$ satisfy the conditions in \eqref{eq:dpeq3} and \eqref{eq:dpeq4} when $\Pi_t=\pi$. 
\par
Thus,  the strategies $\psi^i,i=1,2$ defined by the backward induction procedure of Algorithm 1  satisfy the  conditions of Theorem \ref{thm:equlb_condition} and hence form a Markov perfect equilibrium for game \textbf{G2}.% Therefore, $g^i_t(\cdot, \pi_t) := \psi^i_t(\pi_t)$ form a common information based Markov perfect equilibrium of \textbf{G1}
%Since \eqref{eq:appE.1} is true for any $\tilde{\gamma}^1$, it  implies that $\gamma^{1*}$ is a minimizer of the right hand side of \eqref{eq:dpeq1}. Similar arguments hold for $\gamma^{2*}$.

%%%%%%%%%%%%%%%%%%%%%%%%%%%%%%%%%%%%%%%%%%%%%%%%%%%%%%%%%%%%%%%%%%%%%%%%%%%%%%%%%%%%%%%%%%%
%\input appendix_part2
%\input lqg_appendix

\bibliographystyle{IEEEtran}
\bibliography{IEEEabrv,myref,collection}
\newpage
\vspace{20pt}
\begin{center}\textbf{\large{Supplementary Material}}\end{center}
%%%%%%%%%%%%%%%%%%%%%%%%%%%%%%%%%%%%%%%%%%%%%%%%%%%%%%%%%%%%%%%%%%%%%%%%%55\input supp_material
%%%%%%%%%%%%%%%%%%%%%%%%%%%%%%%%%%%%%%%%%%%%%%%%%%%%%%%%%%%%%%%%%%%%%%%%%%%%%%%%%%%%%%%%%%%%%%%%%%%%%%%%%
\section{Proof of Lemma \ref{lemma:onestep}}\label{sec:onestep}
\begin{proof}
 It is straightforward to verify that the structure of common and private information satisfies Assumption~\ref{assm:infoevolution}. We focus on the proof for Assumption~\ref{assm:separation}. For a realization $\VEC y^1_{1:t},\VEC y^2_{1:t}$,$\VEC u^1_{1:t},\VEC u^2_{1:t}$ of the common information at time $t+1$, the common information based belief  can be written as
\begin{align}
&\pi_{t+1}(\VEC x_{t+1}, \VEC y^1_{t+1}, \VEC y^2_{t+1}) = \mathds{P}^{g^1_{1:t},g^2_{1:t}}(\VEC X_{t+1} =\VEC x_{t+1}, \VEC Y^1_{t+1} = \VEC y^1_{t+1}, \VEC Y^2_{t+1} = \VEC y^2_{t+1}|\VEC y^1_{1:t},\VEC y^2_{1:t},\VEC u^1_{1:t},\VEC u^2_{1:t}) \notag \\
&= \mathds{P}(\VEC Y^1_{t+1}=\VEC y^1_{t+1}|\VEC X_{t+1} =\VEC x_{t+1})\mathds{P}(\VEC Y^2_{t+1}=\VEC y^2_{t+1}|\VEC X_{t+1} =\VEC x_{t+1}) \notag \\
&\times \mathds{P}^{g^1_{1:t},g^2_{1:t}}(\VEC X_{t+1} =\VEC x_{t+1}|\VEC y^1_{1:t},\VEC y^2_{1:t},\VEC u^1_{1:t},\VEC u^2_{1:t}) \notag \\
&=  \mathds{P}(\VEC Y^1_{t+1}=\VEC y^1_{t+1}|\VEC X_{t+1} =\VEC x_{t+1})\mathds{P}(\VEC Y^2_{t+1}=\VEC y^2_{t+1}|\VEC X_{t+1} =\VEC x_{t+1})  \notag \\
&\times \sum_{\VEC x_{t}}\Big[\mathds{P}(\VEC X_{t+1} =\VEC x_{t+1}|\VEC X_{t}=\VEC x_{t}, \VEC u^1_{t}, \VEC u^2_{t})\mathds{P}^{g^1_{1:t},g^2_{1:t}}(\VEC X_{t} =\VEC x_{t}|\VEC y^1_{1:t},\VEC y^2_{1:t},\VEC u^1_{1:t},\VEC u^2_{1:t})\Big], \label{eq:ex1.1}
\end{align}
where we used the dynamics and observation model to get the expression in \eqref{eq:ex1.1}. It can now be argued that in the last term in \eqref{eq:ex1.1}, we can remove the terms $\VEC u^1_t,\VEC u^2_t$ in the conditioning since they are functions of the rest of terms $\VEC y^1_{1:t},\VEC y^2_{1:t},\VEC u^1_{1:t-1},\VEC u^2_{1:t-1}$ in the conditioning. The last term in \eqref{eq:ex1.1} would then be 
\[  \mathds{P}^{g^1_{1:t},g^2_{1:t}}(\VEC X_{t} =\VEC x_{t}|\VEC y^1_{1:t},\VEC y^2_{1:t},\VEC u^1_{1:t-1},\VEC u^2_{1:t-1}),\] 
which is known to be independent of choice of control laws $g^1_{1:t},g^2_{1:t}$ \cite{KumarVaraiya}. Thus, $\pi_{t+1}$ is independent of the choice of control laws. For the sake of completeness, we provide a more detailed argument below.

The last term in \eqref{eq:ex1.1} can be written as
\begin{align}
&\mathds{P}^{g^1_{1:t},g^2_{1:t}}(\VEC X_{t} =\VEC x_{t}|\VEC y^1_{1:t},\VEC y^2_{1:t},\VEC u^1_{1:t},\VEC u^2_{1:t}) \notag \\
&= \mathds{P}^{g^1_{1:t},g^2_{1:t}}(\VEC X_{t} =\VEC x_{t}|\VEC y^1_{1:t},\VEC y^2_{1:t},\VEC u^1_{1:t-1},\VEC u^2_{1:t-1}) \notag \\ 
&= \frac{\mathds{P}^{g^1_{1:t},g^2_{1:t}}(\VEC X_{t} =\VEC x_{t}, \VEC Y^1_t=\VEC y^1_t, \VEC Y^2_t=\VEC y^2_t|\VEC y^1_{1:t-1},\VEC y^2_{1:t-1},\VEC u^1_{1:t-1},\VEC u^2_{1:t-1})}{\mathds{P}^{g^1_{1:t},g^2_{1:t}}(\VEC Y^1_t=\VEC y^1_t, \VEC Y^2_t=\VEC y^2_t|\VEC y^1_{1:t-1},\VEC y^2_{1:t-1},\VEC u^1_{1:t-1},\VEC u^2_{1:t-1})} \notag \\
&= \frac{\pi_t(\VEC x_t, \VEC y^1_t, \VEC y^2_t)}{\sum_{\VEC x'_t}\pi_t(\VEC x'_t, \VEC y^1_t, \VEC y^2_t)} \label{eq:ex1.1.2}
\end{align}

Combining \eqref{eq:ex1.1} and \eqref{eq:ex1.1.2} establishes that $\pi_{t+1}$ is a function only of $\pi_t$ and $\VEC z_{t+1} = (\VEC y^1_t, \VEC y^2_t, \VEC u^1_t, \VEC u^2_t)$. Further, the transformation form $(\pi_t, \VEC z_{t+1})$ to $\pi_{t+1}$ does not depend on the choice of control strategies.
\end{proof}
%%%%%%%%%%%%%%%%%%%%%%%%%%%%%%%%%%%%%%%%%%%%%%%%%%%%%%%%%%%%%%%%%%%%%%%%%%%%%%%%%%%%%%%%%%%%%%%%%%%%%%%%%%%%%%%%%
\section{Proof of Lemma \ref{lemma:asymmetric_delay}}\label{sec:asymmetric_delay}
\begin{proof}
 It is straightforward to verify that the structure of common and private information satisfies Assumption~\ref{assm:infoevolution}. We focus on the proof for Assumption~\ref{assm:separation}. For a realization $\VEC y^1_{1:t+1},\VEC y^2_{1:t}$, $\VEC u^1_{1:t},\VEC u^2_{1:t}$ of the common information at time $t+1$, the common information based belief  can be written as
\begin{align}
&\pi_{t+1}(\VEC x_{t+1},  \VEC y^2_{t+1}) = \mathds{P}^{g^1_{1:t},g^2_{1:t}}(\VEC X_{t+1} =\VEC x_{t+1},\VEC Y^2_{t+1} = \VEC y^2_{t+1}|\VEC y^1_{1:t+1},\VEC y^2_{1:t},\VEC u^1_{1:t},\VEC u^2_{1:t}) \notag \\
&= \mathds{P}(\VEC Y^2_{t+1} = \VEC y^2_{t+1}|\VEC X_{t+1} = \VEC x_{t+1})\mathds{P}^{g^1_{1:t},g^2_{1:t}}(\VEC X_{t+1} =\VEC x_{t+1} |\VEC y^1_{1:t+1},\VEC y^2_{1:t},\VEC u^1_{1:t},\VEC u^2_{1:t}) \notag \\
&= \mathds{P}(\VEC Y^2_{t+1} = \VEC y^2_{t+1}|\VEC X_{t+1} = \VEC x_{t+1})\frac{\mathds{P}^{g^1_{1:t},g^2_{1:t}}(\VEC X_{t+1} =\VEC x_{t+1}, \VEC Y^1_{t+1}=\VEC y^1_{t+1}|\VEC y^1_{1:t},\VEC y^2_{1:t},\VEC u^1_{1:t},\VEC u^2_{1:t})}{\sum_{\VEC x}\mathds{P}^{g^1_{1:t},g^2_{1:t}}(\VEC X_{t+1} =\VEC x, \VEC Y^1_{t+1}=\VEC y^1_{t+1}|\VEC y^1_{1:t},\VEC y^2_{1:t},\VEC u^1_{1:t},\VEC u^2_{1:t})} \label{eq:bayes1}
\end{align}
The numerator in the second term in \eqref{eq:bayes1} can be written as
\begin{align}
&\mathds{P}(\VEC Y^1_{t+1} = \VEC y^1_{t+1}|\VEC X_{t+1}=\VEC x_{t+1})\mathds{P}^{g^1_{1:t},g^2_{1:t}}(\VEC X_{t+1} =\VEC x_{t+1} |\VEC y^1_{1:t},\VEC y^2_{1:t},\VEC u^1_{1:t},\VEC u^2_{1:t}) \notag 
\end{align}
\begin{align}
&=\mathds{P}(\VEC Y^1_{t+1} = \VEC y^1_{t+1}|\VEC X_{t+1}=\VEC x_{t+1})\times \notag \\
&\sum_{\VEC x_t}\Big[\mathds{P}(\VEC X_{t+1}=x_{t+1}|\VEC X_t=\VEC x_t,\VEC u^1_t,\VEC u^2_t)\mathds{P}^{g^1_{1:t},g^2_{1:t}}(\VEC X_{t} =\VEC x_{t} |\VEC y^1_{1:t},\VEC y^2_{1:t},\VEC u^1_{1:t-1},\VEC u^2_{1:t-1})\Big] \notag
\end{align}
\begin{align}
&=\mathds{P}(\VEC Y^1_{t+1} = \VEC y^1_{t+1}|\VEC X_{t+1}=\VEC x_{t+1})\times \notag \\
&\sum_{\VEC x_t}\Big[\mathds{P}(\VEC X_{t+1}=\VEC x_{t+1}|\VEC X_t=\VEC x_t,\VEC u^1_t,\VEC u^2_t)\frac{\mathds{P}^{g^1_{1:t},g^2_{1:t}}(\VEC X_{t} =\VEC x_{t}, \VEC y^2_t |\VEC y^1_{1:t},\VEC y^2_{1:t-1},\VEC u^1_{1:t-1},\VEC u^2_{1:t-1})}{\mathds{P}^{g^1_{1:t},g^2_{1:t}}(\VEC y^2_t |\VEC y^1_{1:t},\VEC y^2_{1:t-1},\VEC u^1_{1:t-1},\VEC u^2_{1:t-1})}\Big] \notag \\
&=\mathds{P}(\VEC Y^1_{t+1} = \VEC y^1_{t+1}|\VEC X_{t+1}=\VEC x_{t+1})\sum_{\VEC x_t}\Big[\mathds{P}(\VEC X_{t+1}=\VEC x_{t+1}|\VEC X_t=\VEC x_t,\VEC u^1_t,\VEC u^2_t)\frac{\pi_t(\VEC x_t, \VEC y^2_t)}{\pi_t(\VEC y^2_t)}\Big] 
\end{align}
%&=\mathds{P}(\VEC Y^2_{t+1} = \VEC y^2_{t+1}|\VEC X_{t+1} = \VEC x_{t+1})\frac{\mathds{P}(\VEC Y^1_{t+1} = \VEC y^1_{t+1}|\VEC X_{t+1}=\VEC x_{t+1})\mathds{P}^{g^1_{1:t},g^2_{1:t}}(\VEC X_{t+1} =\VEC x_{t+1} |\VEC y^2_{1:t},\VEC u^1_{1:t},\VEC u^2_{1:t})}{\sum_{\VEC x}\mathds{P}(\VEC Y^1_{t+1} = \VEC y^1_{t+1}|\VEC X_{t+1}=\VEC x)\mathds{P}^{g^1_{1:t},g^2_{1:t}}(\VEC X_{t+1} =\VEC x |\VEC y^2_{1:t},\VEC u^1_{1:t},\VEC u^2_{1:t})} \notag 
%\end{align}
Similar expressions can be obtained for the denominator of the second term in \eqref{eq:bayes1} to get
\begin{align}
&\pi_{t+1}(\VEC x_{t+1},  \VEC y^2_{t+1}) = \mathds{P}(\VEC Y^2_{t+1} = \VEC y^2_{t+1}|\VEC X_{t+1} = \VEC x_{t+1})\times\notag\\
&\frac{\mathds{P}(\VEC Y^1_{t+1} = \VEC y^1_{t+1}|\VEC X_{t+1}=\VEC x_{t+1})\sum_{\VEC x_t}\Big[\mathds{P}(\VEC X_{t+1}=\VEC x_{t+1}|\VEC X_t=\VEC x_t,\VEC u^1_t,\VEC u^2_t)\pi_t(\VEC x_t, \VEC y^2_t)\Big]}{\mathds{P}(\VEC Y^1_{t+1} = \VEC y^1_{t+1}|\VEC X_{t+1}=\VEC x)\sum_{\VEC x'_t}\Big[\mathds{P}(\VEC X_{t+1}=\VEC x|\VEC X_t=\VEC x'_t,\VEC u^1_t,\VEC u^2_t)\pi_t(\VEC x'_t, \VEC y^2_t)\Big]}\notag \\
&=: F_t(\pi_t,\VEC y^1_{t+1},\VEC y^2_t, \VEC u^1_t,\VEC u^2_t) = F_t(\pi_t,\VEC z_{t+1})
\end{align}
%\begin{align}
%&=  \mathds{P}(\VEC Y^1_{t+1}=\VEC y^1_{t+1}|\VEC X_{t+1} =\VEC x_{t+1})\mathds{P}(\VEC Y^2_{t+1}=\VEC y^2_{t+1}|\VEC X_{t+1} =\VEC x_{t+1}) \times \notag \\
%&\sum_{\VEC x_{t}}\Big[\mathds{P}(\VEC X_{t+1} =\VEC x_{t+1}|\VEC X_{t}=\VEC x_{t}, \VEC u^1_{t}, \VEC u^2_{t})\mathds{P}^{g^1_{1:t},g^2_{1:t}}(\VEC X_{t} =\VEC x_{t}|\VEC y^1_{1:t},\VEC y^2_{1:t},\VEC u^1_{1:t},\VEC u^2_{1:t})\Big], \label{eq:ex1.1}
%\end{align}
\end{proof}
%%%%%%%%%%%%%%%%%%%%%%%%%%%%%%%%%%%%%%%%%%%%%%%%%%%%%%%%%%%%%%%%%%%%%%%%%%%%%55
\section{Proof of Lemma \ref{lemma:new_example}} \label{sec:new_example}
Assumption~\ref{assm:infoevolution} is clearly satisfied. We focus on Assumption~\ref{assm:separation}.
\emph{Case A:} For a realization $\VEC y^1_{1:t},\VEC y^2_{1:t-d}, \VEC u^1_{1:t-1}$ of the common information, the common information based belief in this case can be written as:
\begin{align}
&\pi_t(\VEC x_t, \VEC y^2_{t-d+1:t})= \mathds{P}^{g^1_{1:t-1}}(\VEC X_t =\VEC x_t, \VEC Y^2_{t-d+1:t}= \VEC y^2_{t-d+1:t}|\VEC y^1_{1:t},\VEC y^2_{1:t-d}, \VEC u^1_{1:t-1}) \notag \\
&= \sum_{\VEC x'_{t-d:t-1}}\Big[\mathds{P}(\VEC Y^2_{t-d+1:t}= \VEC y^2_{t-d+1:t}|\VEC X_{t-d+1:t-1}= \VEC x'_{t-d+1:t-1}, \VEC X_t = \VEC x_t)\notag \\
&\cdot \mathds{P}^{g^1_{1:t-1}}(\VEC X_t =\VEC x_t, \VEC X_{t-d:t-1}= \VEC x'_{t-d:t-1}|\VEC y^1_{1:t},\VEC y^2_{1:t-d}, \VEC u^1_{1:t-1})\Big] \label{eq:new_example1}
\end{align}
The first term in \eqref{eq:new_example1} depends only on the noise statistics. To see how  the second term in \eqref{eq:new_example1} is strategy independent, consider a centralized stochastic control problem with controller $1$ as the only controller where the state process is $\tilde{\VEC X}_t := (\VEC X_{t-d:t})$, the observation process is $\tilde{\VEC Y}_t := (\VEC Y^1_{t},\VEC Y^2_{t-d})$. The second term in \eqref{eq:new_example1} is simply the information state $\mathds{P}(\tilde{\VEC X}_t| \tilde{\VEC y}_{1:t}, \VEC u^1_{1:t-1})$ of this centralized stochastic control problem which is known to be strategy independent and satisfies an update equation of the form required by Lemma~\ref{lemma:new_example} \cite{KumarVaraiya}.

\emph{Case B:} Using arguments similar to those in Case A, the common information based belief $\pi_t$ for a realization $\VEC y^1_{1:t-1},\VEC y^2_{1:t-d}, \VEC u^1_{1:t-1}$ of the common information can be written as:
\begin{align}
&\pi_t(\VEC x_t, \VEC y^1_t, \VEC y^2_{t-d+1:t}) = \sum_{\VEC x'_{t-d:t-1}}\Big[\mathds{P}(\VEC Y^1_t= \VEC y^1_t,\VEC Y^2_{t-d+1:t}= \VEC y^2_{t-d+1:t}|\VEC X_{t-d+1:t-1}= \VEC x'_{t-d+1:t-1}, \VEC X_t = \VEC x_t)\notag \\
&\cdot \mathds{P}^{g^1_{1:t-1}}(\VEC X_t =\VEC x_t, \VEC X_{t-d:t-1}= \VEC x'_{t-d:t-1}|\VEC y^1_{1:t-1},\VEC y^2_{1:t-d}, \VEC u^1_{1:t-1})\Big] \label{eq:new_example2}
\end{align}
The second term in \eqref{eq:new_example2} is
\begin{align}
\frac{\mathds{P}(\VEC y^2_{t-d}|\VEC x_{t-d})\mathds{P}^{g^1_{1:t-1}}(\VEC X_t =\VEC x_t, \VEC X_{t-d:t-1}= \VEC x'_{t-d:t-1}|\VEC y^1_{1:t-1},\VEC y^2_{1:t-d-1}, \VEC u^1_{1:t-1})}{\mathds{P}^{g^1_{1:t-1}}(\VEC Y^2_{t-d} =\VEC y^2_{t-d}|\VEC y^1_{1:t-1},\VEC y^2_{1:t-d-1}, \VEC u^1_{1:t-1})}
\end{align}
Both the numerator and the denominator can be shown to be strategy independent using the transformation to centralized stochastic control problem described in case A.

%%%%%%%%%%%%%%%%%%%%%%%%%%%%%%%%%%%%%%%%%%%%%%%%%%%%%%%%%%%%%%%%%%%%%%%%%%%%%
\section{Proof of Lemma \ref{lemma:noisy_global}}\label{sec:noisy_global}

For a realization $y^0_{1:t+1},\VEC u^1_{1:t}, \VEC u^2_{1:t}$ of the common information at time $t+1$, the belief $\pi_{t+1}$ is given as
\begin{align}
&\pi_{t+1}(x^0,x^1,x^2) = \mathds{P}^{g^1_{1:t-1},g^2_{1:t-1}}(X^0_{t+1}=x^0,X^1_{t+1}=x^1,X^2_{t+1}=x^2|y^0_{1:t+1},\VEC u^1_{1:t}, \VEC u^2_{1:t})
\end{align}
\begin{align}
&=\frac{\mathds{P}^{g^1_{1:t-1},g^2_{1:t-1}}(X^0_{t+1}=x^0,X^1_{t+1}=x^1,X^2_{t+1}=x^2, Y^0_{t+1}= y^0_{t+1}|y^0_{1:t},\VEC u^1_{1:t}, \VEC u^2_{1:t})}{\mathds{P}^{g^1_{1:t-1},g^2_{1:t-1}}(Y^0_{t+1}= y^0_{t+1}|y^0_{1:t},\VEC u^1_{1:t}, \VEC u^2_{1:t})}\notag\\
&=\frac{\mathds{P}(Y^0_{t+1}= y^0_{t+1}|X^0_{t+1}= x^0_{t+1})\mathds{P}^{g^1_{1:t-1},g^2_{1:t-1}}(X^0_{t+1}=x^0,X^1_{t+1}=x^1,X^2_{t+1}=x^2|y^0_{1:t},\VEC u^1_{1:t}, \VEC u^2_{1:t})}{ \sum_{x}\mathds{P}(Y^0_{t+1}= y^0_{t+1}|X^0_{t+1}=x)\mathds{P}^{g^1_{1:t-1},g^2_{1:t-1}}(X^0_{t+1}=x|y^0_{1:t},\VEC u^1_{1:t}, \VEC u^2_{1:t})}\label{eq:noisyglobal1}
\end{align}
The control strategy dependent term in the numerator in \eqref{eq:noisyglobal1} can be written as
\begin{align}
&\mathds{P}^{g^1_{1:t-1},g^2_{1:t-1}}(X^0_{t+1}=x^0,X^1_{t+1}=x^1,X^2_{t+1}=x^2|y^0_{1:t},\VEC u^1_{1:t}, \VEC u^2_{1:t}) \notag \\
&= \sum_{x'}\Big[\mathds{P}(X^0_{t+1}=x^0,X^1_{t+1}=x^1,X^2_{t+1}=x^2|X^0_t=x', \VEC u^1_t,\VEC u^2_t)\notag \\&\cdot\mathds{P}^{g^1_{1:t-1},g^2_{1:t-1}}(X^0_{t}=x'|y^0_{1:t},\VEC u^1_{1:t-1}, \VEC u^2_{1:t-1})\Big]\notag \\
&= \sum_{x'}\mathds{P}(X^0_{t+1}=x^0,X^1_{t+1}=x^1,X^2_{t+1}=x^2|X^0_t=x', \VEC u^1_t,\VEC u^2_t)\pi_t(x')\label{eq:noisyglobal2}
\end{align}
Similarly, the control strategy dependent term in the denominator in \eqref{eq:noisyglobal1} can be written as
\begin{align}
\mathds{P}^{g^1_{1:t-1},g^2_{1:t-1}}(X^0_{t+1}=x|y^0_{1:t},\VEC u^1_{1:t}, \VEC u^2_{1:t}) =\sum_{x''}\mathds{P}(X^0_{t+1}=x|X^0_t=x'', \VEC u^1_t,\VEC u^2_t)\pi_t(x'')\label{eq:noisyglobal3}
\end{align}
Substituting \eqref{eq:noisyglobal2} and \eqref{eq:noisyglobal3} in \eqref{eq:noisyglobal1} establishes the lemma.
%%%%%%%%%%%%%%%%%%%%%%%%%%%%%%%%%%%%%%%%%%%%%%%%%%%%%%%%%%%%%%%%%%%%%%%%%%%%%%%%%%%%%%%%%%%%%%
\section{Proof of Lemma \ref{lemma:uncontrolled}}\label{sec:uncontrolled_app}

 Consider a realization
$\VEC c_{t}$ of the common information $\VEC C_{t}$ at time $t$. 
Given the realization of the common information based belief $\pi_t$, we can find the joint conditional distribution on $(\VEC X_t, \VEC P^1_t, \VEC P^2_t, \VEC X_{t+1}, \VEC P^1_{t+1}, \VEC P^2_{t+1}, \VEC Z_{t+1})$ conditioned on the common information at time $t$ as follows:
\begin{align}
&\mathds{P}(\VEC x_t, \VEC p^1_t, \VEC p^2_t, \VEC x_{t+1}, \VEC p^1_{t+1}, \VEC p^2_{t+1}, \VEC z_{t+1}|\VEC c_t) \notag\\
&= \sum_{\VEC y^1_{t+1},\VEC y^2_{t+1}}\mathds{P}(\VEC x_t, \VEC p^1_t, \VEC p^2_t, \VEC x_{t+1}, \VEC p^1_{t+1}, \VEC p^2_{t+1}, \VEC z_{t+1}, \VEC y^1_{t+1},\VEC y^2_{t+1}|\VEC c_t) \notag \\
&=\sum_{\VEC y^1_{t+1},\VEC y^2_{t+1}}\Big[\mathds{1}_{\{\zeta_{t+1}(\VEC p^1_t, \VEC p^2_t, \VEC y^1_{t+1}, \VEC y^2_{t+1})=\VEC z_{t+1}\}} \mathds{1}_{\{\xi^1_{t+1}(\VEC p^1_t,  \VEC y^1_{t+1}) = \VEC p^1_{t+1}\}}\mathds{1}_{\{\xi^2_{t+1}(\VEC p^2_t,   \VEC y^2_{t+1}) = \VEC p^2_{t+1}\}} \notag \\
&\times\mathds{P}(\VEC y^1_{t+1},\VEC y^2_{t+1}|\VEC x_{t+1})\mathds{P}(\VEC x_{t+1}|\VEC x_t)\pi_t(\VEC x_t, \VEC p^1_t, \VEC p^2_t)\Big]\label{eq:unc_appendixeq1}
\end{align}
Note that in addition to the arguments on the left side of conditioning in \eqref{eq:unc_appendixeq1}, we only need $\pi_t$  to evaluate the right hand side of \eqref{eq:unc_appendixeq1}. 
\par
We can now consider the common information based belief at time $t+1$,
\begin{align}
&\pi_{t+1}(\VEC x_{t+1}, \VEC p^1_{t+1}, \VEC p^2_{t+1}) = \mathds{P}(\VEC x_{t+1}, \VEC p^1_{t+1}, \VEC p^2_{t+1}|\VEC c_{t+1}) \notag \\
&= \mathds{P}(\VEC x_{t+1}, \VEC p^1_{t+1}, \VEC p^2_{t+1}|\VEC c_{t},\VEC z_{t+1}) \notag \\
&= \frac{\mathds{P}(\VEC x_{t+1}, \VEC p^1_{t+1}, \VEC p^2_{t+1},\VEC z_{t+1}|\VEC c_{t})}{\mathds{P}(\VEC z_{t+1}|\VEC c_{t})}\label{eq:unc_appendixeq2}
\end{align}
 The numerator and denominator of \eqref{eq:unc_appendixeq2} are both marginals of the probability in \eqref{eq:unc_appendixeq1}. Using \eqref{eq:unc_appendixeq1} in \eqref{eq:unc_appendixeq2}, gives $\pi_{t+1}$ as a function of $\pi_t,\VEC z_{t+1}$.

%%%%%%%%%%%%%%%%%%%%%%%%%%%%%%%%%%%%%%%%%%%%%%%%%%%%%%%%%%%%%%%%%%%%%%%%%%%%%%%%%%%%%%%55%%%%%
\end{document}